\def\url@leostyle{%
	\@ifundefined{selectfont}{\def\UrlFont{\sf}}{\def\UrlFont{\scriptsize\ttfamily}}} \makeatother\urlstyle{leo}
\newtheorem{theorem}{Theorem}
\newtheorem{proposition}[theorem]{Proposition}
\newtheorem{lemma}[theorem]{Lemma}
\theoremstyle{definition}
\newtheorem{definition}[theorem]{Definition}
\newtheorem{corollary}[theorem]{Corollary}
\newtheorem{example}[theorem]{Example}
\theoremstyle{remark}
\newtheorem{remark}[theorem]{Remark}
\numberwithin{equation}{section}
\numberwithin{theorem}{section}
\definecolor{Red}{rgb}{0.9,0,0.0}
\definecolor{Blue}{rgb}{0,0.0,1.0}
\def\cB{\mathcal{B}}
\def\cG{\mathcal{G}}
\def\cP{\mathcal{P}}
\def\cQ{\mathcal{Q}}
\def\cT{\mathcal{T}}
\def\bE{\mathbb{E}}
\def\bN{\mathbb{N}}
\def\bP{\mathbb{P}}
\def\bQ{\mathbb{Q}}
\def\bR{\mathbb{R}}
\def\sF{\mathscr{F}}
\def\sG{\mathscr{G}}
\newcommand{\1}{\mathbbm{1}}            
\newcommand{\set}[1]{\{#1\}}            
\newcommand{\Set}[1]{\left\{#1\right\}} 
\renewcommand{\mid}{\;|\;}              
\newcommand{\Mid}{\;\Big | \;}          
\newcommand{\norm}[1]{ \| #1 \| }       
\DeclareMathOperator{\dif}{d \!}        
\DeclareMathOperator*{\esssup}{ess\,sup} 
\DeclareMathOperator*{\essinf}{ess\,inf} 
\DeclareMathOperator{\var}{\mathrm{V}@\mathrm{R}}           
\DeclareMathOperator{\wvar}{\mathrm{WV}@\mathrm{R}} 
\DeclareMathOperator{\dwvar}{\mathrm{dWV}@\mathrm{R}} 
\DeclareMathOperator{\avar}{\mathrm{AV}@\mathrm{R}}         
\title{Time consistency of dynamic risk measures and dynamic performance measures generated by distortion functions}
\author{
	Tomasz R. Bielecki\,\thanks{The authors acknowledge support from the National Science Foundation grant DMS-1907568.
	}\\[-0.3ex]
	\url{tbielecki@iit.edu}  \\[-0.9ex]
	\url{http://math.iit.edu/\~bielecki}
	\and
	Igor Cialenco\,\footnotemark[1] \\[-0.3ex]
	\url{cialenco@iit.edu}  \\[-0.9ex]
	\url{http://cialenco.com}
	\and
	Hao Liu\,\footnotemark[1] \\[-0.3ex]
	\url{hliu95@hawk.iit.edu}  \\[-0.9ex]	
	\and \\[-0.5em]
	{\footnotesize Department of Applied Mathematics, Illinois Institute of Technology} \\
	{\footnotesize 10 W 32nd Str, John T. Rettaliata Engineering Center, Room 220, Chicago, IL 60616, USA}\\
}
\date{ {\small  
First Circulated:  September 1, 2023  \\
This Version: September 8, 2023 }}
\begin{document}

	\maketitle
	%
	%
	
	
	{\footnotesize
		\begin{tabular}{l@{} p{350pt}}
			\hline \\[-.2em]
			\textsc{Abstract}: \ & The aim of this work is to study risk measures generated by distortion functions in a dynamic discrete time setup, and to investigate the corresponding dynamic coherent acceptability indices (DCAIs) generated by families of such risk measures. First we show that conditional version of Choquet integrals indeed are dynamic coherent risk measures (DCRMs), and also introduce the class of  dynamic  weighted value at risk measures. We prove that these two classes of risk measures  coincides. In the spirit of robust representations theorem for DCAIs, we establish some relevant properties of families of DCRMs generated by distortion functions, and then define and study the corresponding DCAIs. Second, we study the time consistency of DCRMs and  DCAIs generated by distortion functions. 	In particular, we prove that such DCRMs 	are  sub-martingale time consistent, but they are not super-martingale time consistent. We also show that DCRMs generated by distortion functions are not weakly acceptance time consistent. We also present several widely used classes of distortion functions and derive some new representations of these distortions.
			
			This manuscript is also complimented with a technical appendix, where we collect a series of results and proofs. In particular, we prove that the dynamic risk measures generated by regular distortion functions are indeed dynamic coherent risk measures. Similarly, we provide the detailed proof that the acceptability indices generated by distortion functions are dynamic coherent  acceptability indices. We also present some properties of conditional quantiles, conditional $\var$ and conditional $\avar$.

			\\[0.5em]
			\textsc{Keywords:} \ &  dynamic risk measures, dynamic acceptability indices, time consistency, distortion functions, weighted value at risk, average value at risk, low invariant risk measures, MINVAR, MAXVAR, MINMAXVAR, MAXMINVAR, Choquet integral  \\
			\textsc{MSC2010:} \ &  Primary 91B06; Secondary 91B30, 91B08    \\[1em]
			\hline
		\end{tabular}
	}

\section{Introduction}
The aim of this paper is to study various types of time consistency of a class of Dynamic Coherent Risk Measures (DCRMs) that are generated by distortion functions, as well as time consistency of the corresponding Dynamic Coherent Acceptability Indices (DCAIs).

Risk and performance measures have been staying at the core of finance and insurance industries, successfully used as tools for computing the regulatory capital requirement, in wealth management, as well as in pricing and hedging complex derivatives through a nonlinear setup. In the seminal paper by Artzner et al \cite{ArtznerDelbaenEberHeath1999}, the authors proposed a systematic approach in studying coherent risk measures (CRM), in a static (one period time) framework, as real valued functions acting on random P\&Ls, that satisfies a set of desired axioms. Namely, with $L^\infty$ denoting the space of (essentially) bounded random variables on some probability space $(\Omega, \sF, \mathbb{P})$,
a static coherent risk measure is a function $\rho: L^\infty \rightarrow [-\infty, \infty]$ that is monotone decreasing, cash-additive, sub-additive and positive homogeneous (see Section~\ref{sec:DCRM} for details). It was shown that such functions admit different representations, usually derived in the context of convex analysis and duality theory. There exists a vast literature dedicated to extensions of coherent risk measures theory to various degrees.

In the static setup, the natural pathway was to impose a  different, usually smaller, set of axioms or to considering a larger spaces on which these functions are defined. We refer the reader to \cite{DrapeauKupper2010} and references therein for a detailed overview of general theory of risk measures (convex, monetary, quasi-convex, etc).   One important class of coherent risk measures consists of the coherent risk measures generated by distortion functions, introduced in \cite{FollmerSchiedBook2004}, that take the form
\begin{equation}\label{eq:wvar-static}
	\rho^\psi(X) = - \int_{\mathbb{R}} y \dif \psi(\mathbb{P}(X \leq y)),
\end{equation}
where the distortion function $\psi:[0,1]\rightarrow [0,1]$ is non-decreasing and $\psi(0)= 0$,  $\psi(1)= 1$.  The name distortion is apparent - the function $\psi$ distorts the tails  of the distribution of the random variable $X$, which in the context of risk management corresponds to distorting  of the distribution of the large or extreme losses and gains. Clearly, if $\psi(x)=x$, then $\rho^\psi$ is the negative of the  (linear) expectation, hence no distortions. Respectively, $\rho^\psi_t$ with any other distortion function will actually distort  the distribution of the  P\&L tails. We also note that from mathematical point of view $\rho^\psi$ is closely related to Choquet  integrals. The general structure of this class of risk measures is well studied in the static case, originated with the eminent Kusoka result on law-invariant convex risk measures \cite{Kusuoka2001}; see also \cite{Shapiro2013} and \cite[Chapter~4]{FollmerSchiedBook2004}. In particular, it was proved that the class of (static) law-invariant, comonotone coherent risk measures  coincides with the class of coherent risk measures generated by distortion functions. Moreover, this class of risk measures corresponds to the so-called weighted value of risk ($\wvar$) measures; see, for instance, \cite{Cherny2006} for a comprehensive study of these measures in static case and their applications to risk management. We also mention \cite{BernardEtAl2022} for a discussion of distortion risk measures in the context of model uncertainty. Using a similar axiomatic approach, in \cite{ChernyMadan2009} the authors introduce a class of performance measures, called Coherent Acceptability Indices (CAI), as functions  $\alpha:L^\infty\to [0,+\infty]$, that are monotone increasing, scale invariant and quasi-concave. Under some technical continuity assumptions, it can be shown that any CAI $\alpha$ can be characterized by a family of CRMs $\set{\bar\rho^x, \ x\geq 0}$. In particular,  each $\bar\rho^x$ can be generated by a distortion function. Applications of these types of performance measures go beyond the risk management, and were successfully applied, for example,  in portfolio management and pricing of derivatives (cf. \cite{MadanSchoutens2016} and references therein).

Another significant avenue of research is focused  on extending the risk and performance measures to a dynamic setup,  either in  discrete or continuous time framework. Naturally, a dynamic risk or performance measure should take into account the flow of information, which can be achieved by considering the conditional (in probability sense) versions of the corresponding properties such as monotonicity, positive homogeneity, cash-additivity, scale invariance, quasi-concavity, etc. Alternatively, one can `condition' at each time $t$ a given representation of a CRM or CAI, and then study its properties. However, we emphasize  that  a dynamic risk measure  is not just a sequence of conditional static risk measures adapted to the underlying filtration.  Additionally, one must address the issue of measuring riskiness and/or holding preferences consistently over time. This is typically achieved by imposing an additional axiom called time consistency. There is an extensive literature on DCRMs (see survey \cite{AcciaioPenner2010}), and more recently on DCAIs, as well as time consistency in decision making in general. With very few exceptions, for dynamic risk measures, the strong form of time consistency was invoked: if $\rho_{t+1}(X)=\rho_{t+1}(Y)$, then $\rho_t(X)=\rho_t(Y)$.
In addition to admitting a  reasonable economic interpretation, using properties of DCRM, this form of time consistency can be written as a recursive relationship $\rho_t(X) = \rho_t(-\rho_{t+1}(X))$, which can be conveniently used in stochastic control problems with DCRM criteria to derive the Bellman equations, or to link dynamic risk measures to BSDEs or BS$\Delta$Es. On the other hand, there are many other forms  of time consistency, some suitable for some classes of measures or applications, while others in principle can not be satisfied by some risk or performance measures. For example, as shown in \cite{BCZ2010,BCDK2013}, a DCAI can never be strong time consistent. From decision making point of view, time consistency is at the heart of the inter-temporal problems.  We refer the readers to the survey \cite{BCP2017}, where the authors give a comprehensive study of various types of time consistency for both dynamic risk and performance measures, in a discrete time setup, based on the unified approach to time-consistency introduced in \cite{BCP2017a}.

With this at hand, one may ask: what is the dynamic counterpart of $\rho^\psi$, and of DCAI generated by distortion functions? What forms of time consistency do these measures satisfy? This paper is the first attempt to investigate these questions. In Section~\ref{sec:prelim} we start introducing the notations, and give some preliminary results about distortion functions. In Section~\ref{sec:DCRM}, we define and study DCRMs and DCAIs generated by distortion functions. It is tempting, to use representation \eqref{eq:wvar-static}, and define the dynamic $\rho^\psi$ as follows. Let $(\Omega, \sF, \{\sF_t\}_{t \in \mathcal{T}}, \mathbb{P})$ be a filtered probability space, where the increasing collection of $\sigma$-algebras $\sF_t$, $t \in \mathcal{T}$, models the flow of information accumulated through time, and take $X\in L^\infty(\sF_T)$, that models the terminal P\&L of a given investment. Denote by $\bar L^0(\sF_t)$ the set of all $\sF_t$-measurable random variables taking values in $[-\infty,+\infty]$. Define the risk measure $\rho_t^\psi:L^\infty(\sF_T)\to \bar L^0(\sF_t)$, as
 	\begin{equation} \label{eq:ChoquetDCRM-Intro}
	\rho_t^\psi(X)= - \int_{\mathbb{R}} y \dif \psi(\bP(X \leq y\mid\sF_t)).
\end{equation}
We prove that $\rho^\psi_t$ indeed satisfies the axioms of DCRM. Also here we extend the notion of Weighted Value at Risk ($\wvar$) to the dynamic setup, and establish the correspondence between DCRM generated by distortions and the dynamic $\wvar$. Also here, we present the definition and properties of DCAIs generated by families of distortion functions. Section~\ref{sec:families} is dedicated to four important examples of distortion functions widely used in risk management, MINVAR, MAXVAR, MINMAXVAR and MAXMINVAR  \cite{ChernyMadan2009}. For distortion functions corresponding to MINVAR and MAXVAR we derive some new representations. Time consistency - the central topic of this paper - is investigated in Section~\ref{sec:timeconsis}. We first note that thanks to \cite{KupperSchachermayer2009}, for any $\psi$ different from identity, $\rho^\psi$ is not strongly time consistent, and hence majority of existing theoretical results on DCRM do not apply. We prove that $\rho^\psi$ is sub-martingale time consistent, but it is not super-martingale time consistent (see Section~\ref{sec:supermartingale}). Moreover, we show that DCRMs generated by distortion functions are not even weakly acceptance time consistent; Section~\ref{sec:weakacceptance}. In Section~\ref{sec:timeConsistencyDCAI}  we study the time consistency of DCAIs. We collect in the Appendix~\ref{append:defs} some auxiliary definitions and results.

Finally we note that for the sake of brevity, the proofs of some results are deferred  to  the Supplement~\ref{sec:supp}. Although these results are new and important, in the authors opinions their proofs are standard, albeit lengthy and sometimes technical, and hence presented as a technical online supplement.

\section{Preliminaries}\label{sec:prelim}
Let $T$ be a fixed and finite time horizon, and let $\mathcal{T}:= \{0, 1, ..., T \}$. We consider a filtered probability space $(\Omega, \sF, \{\sF_t\}_{t \in \mathcal{T}} , \mathbb{P})$, with $\sF_0 = \{ \emptyset, \Omega \}$ and $\sF = \sF_T$. Throughout, we will use the notations $L^\infty:= L^\infty(\Omega, \sF, \mathbb{P})$, $L^\infty_t:= L^\infty(\Omega, \sF_t, \mathbb{P})$, $t\in\cT$, and  $L^\infty_{t, +}$ the set of all non-negative random variables in $L^\infty_t$, for $t\in\cT$. As usual, all equalities and inequalities will be understood in $\mathbb{P}$-almost surely sense unless otherwise stated.    The set of all probability measures on $[0,1]$ is denoted by $\cP[0,1]$.

We recall that for any real-valued random variable $X : (\Omega, \sF) \rightarrow (\mathbb{R}, \mathcal{B}(\mathbb{R}))$ there exists a \textit{regular conditional distribution} of $X$ given the $\sigma$-algebra $\sF_t$. That is,  there exits a null set $N^X_t \in \sF_t$, such that for any $ \omega \in \Omega \backslash N^X_t$ and any $B\in \mathcal{B}(\mathbb{R})$, we have that $\mathbb{P}(X \in B | \sF_t)(\omega)$ is a distribution function (cf. \cite[Theorem 8.29]{Klenke2013}). In what follows, conditional probabilities will be understood in this sense, and since the set $N^X:= {\cup}_{t \in \mathcal{T}} N^X_t$ is also a null set, and we will use it conveniently instead of $N_t$, for all $t\in\cT$.

\begin{definition}
	A non-decreasing mapping $\psi:[0,1]\rightarrow [0,1]$ is called a \textit{distortion function} if   $\psi(0)= 0$,  $\psi(1)= 1$.
	A distortion function is \textit{regular} if it is concave and continuous. The set of regular distortion functions will be denoted by $\Upsilon$.
\end{definition}

Next result gives a connection between concave and continuous distortion functions, further clarifying the notion of regular distortion function.

\begin{lemma} \label{le:disprop1}
	For any concave distortion function $\psi$ except the identity function, we have that $\psi(x)> x$ for any $x \in (0,1)$. Moreover, 	any concave distortion $\psi$ is continuous on $(0, 1]$.
\end{lemma}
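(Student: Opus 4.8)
The plan is to split the statement into its two assertions and prove both from the single geometric fact that a concave function lies above each of its chords. I would first record the baseline inequality $\psi(x)\ge x$ on all of $[0,1]$: since $\psi(0)=0$ and $\psi(1)=1$, the chord of $\psi$ joining $(0,0)$ to $(1,1)$ is exactly the diagonal, so writing $x=(1-x)\cdot 0+x\cdot 1$ and invoking concavity gives $\psi(x)\ge(1-x)\psi(0)+x\psi(1)=x$.

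For the strict inequality I would argue by the rigidity of concave functions: if $\psi(x_0)=x_0$ for some interior $x_0\in(0,1)$, then $\psi$ must coincide with the diagonal on all of $(0,1)$, forcing $\psi$ to be the identity. Concretely, for $x\in(0,x_0)$ I would write $x_0$ as a convex combination of $x$ and $1$, and for $x\in(x_0,1)$ write $x_0$ as a convex combination of $0$ and $x$; in either case concavity together with the value $\psi(x_0)=x_0$ and the boundary values of $\psi$ yields $\psi(x)\le x$, which combined with the baseline inequality gives $\psi(x)=x$. Taking the contrapositive is precisely the claim: if $\psi$ is not the identity then $\psi(x)>x$ for every $x\in(0,1)$. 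I expect this step to be the main (though still elementary) obstacle, the only real care being the bookkeeping of the convex-combination coefficients and verifying that the two cases cover $(0,1)\setminus\{x_0\}$, where a sign slip is easiest.

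For the continuity claim I would use the standard fact that a finite concave function is continuous on the interior of its domain, which settles continuity on $(0,1)$ at once (a self-contained argument via local boundedness and the resulting local Lipschitz estimate on compact subintervals is also available). It then remains only to treat the right endpoint $x=1$, which I would do by sandwiching the left limit $\psi(1^-):=\lim_{x\to1^-}\psi(x)$. Monotonicity of $\psi$ gives $\psi(x)\le\psi(1)=1$ and hence $\psi(1^-)\le1$; conversely, for fixed $x_1<1$ concavity places the point $(x_2,\psi(x_2))$ above the chord from $(x_1,\psi(x_1))$ to $(1,1)$ whenever $x_1<x_2<1$, and letting $x_2\to1^-$ forces $\psi(1^-)\ge\psi(1)=1$. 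The two bounds give $\psi(1^-)=\psi(1)$, i.e.\ continuity at $1$. Continuity at $0$ is deliberately not asserted, since a concave distortion may jump up at the left endpoint.
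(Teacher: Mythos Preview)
Your proof is correct and follows essentially the same architecture as the paper's: establish the baseline $\psi(x)\ge x$ from concavity and the boundary values, then show that any interior touching point $\psi(x_0)=x_0$ forces $\psi$ to be the identity (the paper phrases this as a contradiction, you as a contrapositive, but the convex-combination bookkeeping is identical), and finally treat continuity on $(0,1)$ via the standard interior-continuity of concave functions.

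The one place you diverge is continuity at the right endpoint. The paper's argument for $\psi(1^-)=1$ invokes the first part of the lemma: if $\psi(1^-)=a<1$, then for $y\in[a,1)$ monotonicity gives $\psi(y)\le a\le y$, contradicting the strict inequality $\psi(y)>y$ just established (the identity case being handled separately). Your chord-sandwich argument is self-contained---it uses only concavity and monotonicity and does not depend on the strict-inequality part, so it treats the identity and non-identity cases uniformly. Both are valid; yours decouples the two halves of the lemma, while the paper's shows how the first half can be leveraged to shorten the second.
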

The proof of Lemma~\ref{le:disprop1} is deferred to the Supplement~\ref{sec:supp}. 

\medskip
In view of this result, a concave distortion function which is also continuous at $x=0$ is regular. Unless otherwise stated,
in this work we consider only regular distortion functions.

We will be also using a particular type of distortion functions given by
\begin{equation}\label{eq:regdistortion}
	\psi_{\mu}(y) := \int_{[0,y]} \int_{(z,1]} \frac{1}{s} \mu(\dif s) \dif z,  \quad y \in [0, 1],
\end{equation}
for some $\mu\in\cP[0,1]$. We refer the reader to \cite{Cherny2006,FollmerSchiedBook2004} for properties of this class of distortions.

\begin{remark}\label{rem:psi2mu}
It can be shown (cf. \cite[Lemma 4.63]{FollmerSchiedBook2004}) that \eqref{eq:regdistortion} establishes a one-to-one correspondence between probability measures $\mu\in\cP[0,1]$ and concave distortion functions $\psi$. Moreover,  $\psi_{\mu}$ is continuous on $[0, 1]$ if and only if  $\mu(\{0 \})= 0$. Given a concave distortion function $\psi$, the probability measure $\mu$ such that $\psi_{\mu}= \psi$ has the induced distribution function $F$ on $\left([0, 1], \mathcal{B}([0, 1])\right)$ given by
\begin{align}\label{eq:distPsi}
	\begin{split}
		F(y) =
		\begin{cases}
			0, & y=0\\
			\psi(y)- y \left(\psi_{+}\right)^{\prime}(y), & 0<y<1\\
			1, & y=1,
		\end{cases}
	\end{split}
\end{align}
where $ \psi_{+}^{\prime}$ is the right derivative. In particular, if $\left(\psi_{+}\right)^{\prime}(1-)> 0$, $F$  is discontinuous at point $1$ with jump size $\left(\psi_{+}\right)^{\prime}(1-)$.
\end{remark}

\section{DCRM and DCAI generated by distortion functions}\label{sec:DCRM}
In this section we will introduce classes of dynamic risk measures and dynamic performance measures generated by distortion functions, and investigate various convenient representations of such measures.

In the context of risk and performance measures, the elements $X\in L^\infty$ should be viewed as terminal discounted Profit and Losses (P\&L) of a financial position or portfolio.
A risk measure is meant to determine the riskiness of such positions, measured in the same currency units as $X$.

\begin{definition}\label{def:DCRM}
	A \textit{Dynamic Coherent Risk Measure} (DCRM) is a function $\rho: \mathcal{T} \times L^\infty \times \Omega \rightarrow \mathbb{R}$ that is adapted, normalized, local, cash-additive, monotone decreasing, sub-additive and positive homogeneous (see Appendix~\ref{append:defs} for details).
\end{definition}

Usually, an additional property called time consistency is imposed on DCRMs. The importance of time consistency property is imperative - it relates the agent's preferences through time in a consistent way. One of the main goals of this paper is to investigate this property for a large class of DCRMs and DCAIs; see  Section~\ref{sec:timeconsis}. We refer the reader to \cite{AcciaioPenner2010} for a servey of time consistency of DCRMs, and more recently to  the survey \cite{BCP2017}, where the authors give a comprehensive study of various types of time consistency for both dynamic risk measures and performance measures, in a discrete time setup.

Since the pioneering works \cite{ArtznerDelbaenEberHeath1997,ArtznerDelbaenEberHeath1999}, considerable body of literature was devoted to describe DCRMs and other similar class of functions satisfying certain set of desired axioms under names of convex risk measures, monetary risk measures, acceptability indices, as well as their dynamic counterparts; cf. \cite{DrapeauKupper2010,BCDK2013} for some general results an literature review. Standard results in this field would be dual representations sometimes also called robust representations. For example, it can be proved that in a static setup, $\cT=\set{0,1}$, a coherent risk measure admits the following representation
\begin{align} \label{eq:defstaticRM}
	\rho(X)=-\inf_{\mathbb{Q} \in \mathcal{Q}} \mathbb{E}_\mathbb{Q}(X),
\end{align}
where $\mathcal{Q}$ is a nonempty set of probability measures, absolutely continuous with respect to $\mathbb{P}$. Different sets $\cQ$ would generated different risk measures. The robust representation \eqref{eq:defstaticRM}, in addition  to  being useful for theoretical developments, also  gives alternative interpretations of risk measures. For example, the set of probabilities $\cQ$ can be viewed as plausible scenarios, in which case, the value of $\rho(X)$ is equal to the worst expected P\&L under possible scenarios. On the other hand, using \eqref{eq:defstaticRM} for efficient computations could be challenging, unless the extreme measure $\bQ$ is known. In many applications, one would usually use risk measures given by explicit analytical formulas. One large and important class of such risk measures are DCRMs generated by distortion functions, given by
\begin{equation}\label{eq:wvar-static2}
	\rho^\psi(X) = - \int_{\mathbb{R}} y \dif \psi(\mathbb{P}(X \leq y)).
\end{equation}
It is clear that $\rho^\psi_t$ with $\psi(x)=x$ becomes the negative of (linear) expectation, and for any other $\psi\in\Upsilon$ the  distribution of the tails of the random variable $X$ will be distorted, which in the context of risk management corresponds to distorting  distribution of  the large or extreme losses and gains. The general structure of this class of risk measures \textit{in the static setup} is well studied, originated with the eminent Kusoka result on law-invariant convex risk measures \cite{Kusuoka2001}; see also \cite{Shapiro2013} and \cite[Chapter~4]{FollmerSchiedBook2004}. In particular, it was proved that the class of (static) law-invariant, comonotone coherent risk measures  coincides with the class of coherent risk measures generated by distortion functions. Moreover, it is known that this class of risk measures corresponds to the so-called weighted value of risk ($\wvar$) measures; see, for instance, \cite{Cherny2006} for a comprehensive study of these measures in static case and their applications to risk management. However, to the best of our knowledge, the dynamic counterpart of \eqref{eq:wvar-static2} is not formally studied, and first goal of this paper is to fill this gap.

Next, we introduce the key object in this paper, that, as seen below, corresponds to the dynamic version of \eqref{eq:ChoquetDCRM-Intro}. For any $\psi\in\Upsilon,  X \in L^\infty$  and $t \in \mathcal{T}$, we define
\begin{align} \label{eq:ChoquetDCRM}
  \rho_t^\psi(X):= \int_{[0,\infty)}   \psi \left(\bP(-X > y \mid \sF_t)\right)   \dif y+ \int_{(-\infty,0)}  \left[ \psi(\bP(-X > y \mid \sF_t)) -1 \right]  \dif y,
\end{align}
which can be viewed as the conditional Choquet integral \cite[Section~4.6]{FollmerSchiedBook2004}.

\begin{proposition} \label{prop:DCRM}
For any $\psi\in\Upsilon$, the mapping $\rho^\psi$ is a law-invariant DCRM.
\end{proposition}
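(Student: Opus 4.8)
The plan is to verify, one by one, the defining axioms of a DCRM from Definition~\ref{def:DCRM}, working directly from the Choquet-integral representation \eqref{eq:ChoquetDCRM}. First I would establish the \emph{well-definedness and adaptedness} of $\rho_t^\psi(X)$: since $X\in L^\infty$, there is a constant $M$ with $-M\le X\le M$ a.s., so $\bP(-X>y\mid\sF_t)=1$ for $y<-M$ and $=0$ for $y\ge M$; this makes both integrands vanish outside the compact interval $[-M,M]$, so the integral is finite and takes values in $\bR$. Because $\omega\mapsto\bP(-X>y\mid\sF_t)(\omega)$ is $\sF_t$-measurable for each $y$ and $\psi$ is continuous, Fubini/Tonelli applied to the $\sF_t$-measurable integrands yields that $\rho_t^\psi(X)$ is $\sF_t$-measurable, giving adaptedness. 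The fact that the right-hand side of \eqref{eq:ChoquetDCRM} equals the distorted-quantile form \eqref{eq:wvar-static2} (with $\bP$ replaced by the conditional law) is a routine integration-by-parts / layer-cake computation that I would record, since it links $\rho_t^\psi$ to the intended object \eqref{eq:ChoquetDCRM-Intro}.

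Next I would dispatch the algebraic axioms, all of which follow from pointwise properties of the conditional survival function $G_t^X(y):=\bP(-X>y\mid\sF_t)$ together with monotonicity of $\psi$. For \emph{normalization}, $X=0$ gives $G_t^0(y)=\1_{\{y<0\}}$, and substituting into \eqref{eq:ChoquetDCRM} the first integral vanishes while the second contributes $\int_{(-\infty,0)}(\psi(1)-1)\,\dif y=0$, so $\rho_t^\psi(0)=0$. \emph{Monotonicity} and \emph{positive homogeneity} reduce to the corresponding scaling/ordering of $G_t^X$: if $X\le Y$ then $G_t^X(y)\ge G_t^Y(y)$, and since $\psi$ is non-decreasing the integrands move the right way to give $\rho_t^\psi(X)\ge\rho_t^\psi(Y)$; for $\lambda>0$ a change of variables $y\mapsto\lambda y$ in both integrals produces $\rho_t^\psi(\lambda X)=\lambda\rho_t^\psi(X)$. \emph{Cash-additivity}: for $c\in L^\infty_t$ one has $G_t^{X+c}(y)=G_t^X(y+c)$ by $\sF_t$-measurability of $c$, and the shift $y\mapsto y-c$ in \eqref{eq:ChoquetDCRM} yields $\rho_t^\psi(X+c)=\rho_t^\psi(X)-c$. \emph{Locality} follows because conditional probabilities restrict to $\sF_t$-sets: for $A\in\sF_t$, $\1_A G_t^{\1_A X}(y)=\1_A G_t^X(y)$, so $\1_A\rho_t^\psi(\1_A X)=\1_A\rho_t^\psi(X)$.

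The \emph{law-invariance} claim is immediate from the representation, since $\rho_t^\psi(X)$ depends on $X$ only through its regular conditional law $\bP(X\in\cdot\mid\sF_t)$; two positions with the same conditional distribution produce identical integrands, hence identical values. The one genuinely substantive axiom is \textbf{sub-additivity}, and I expect it to be the main obstacle. The issue is that $\rho_t^\psi$ is \emph{not} in general comonotone-additive-plus-linear in a way that makes $\rho_t^\psi(X+Y)\le\rho_t^\psi(X)+\rho_t^\psi(Y)$ transparent from \eqref{eq:ChoquetDCRM}; the subadditivity of a Choquet integral is equivalent to submodularity (2-alternating-ness) of the underlying set function $c(A):=\psi(\bP(A\mid\sF_t))$, and this is precisely where the \emph{concavity} of $\psi$ (i.e.\ the hypothesis $\psi\in\Upsilon$) enters. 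My plan is therefore to prove that the conditional capacity $A\mapsto\psi(\bP(A\mid\sF_t))$ is conditionally submodular whenever $\psi$ is concave --- arguing pathwise using the regular conditional distribution so that the standard static submodularity computation applies $\omega$-by-$\omega$ --- and then invoke the classical equivalence between submodularity of a capacity and subadditivity of its Choquet integral (cf.\ \cite[Section~4.6]{FollmerSchiedBook2004}) in the conditional setting. The concavity inequality $\psi(a+c)-\psi(a)\ge\psi(b+c)-\psi(b)$ for $a\le b$ is the concrete estimate driving the submodularity bound, and transferring it from the static statement to the conditional one through the regular conditional distribution is the technical heart of the argument.
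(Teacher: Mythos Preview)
Your proposal is essentially correct and tracks the paper's own supplement proof closely for well-definedness, adaptedness, normalization, monotonicity, locality, and law-invariance. Two points of comparison are worth flagging.

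\medskip
\noindent\textbf{Cash-additivity and positive homogeneity.} You write the argument for a scalar shift/scale, but in Definition~\ref{def:DCRM} these are required for $\sF_t$-measurable random $m\in L^\infty_t$ and $\gamma\in L^\infty_{t,+}$. Your pathwise intuition ($c(\omega)$ is a constant once $\omega$ is fixed) is right, but making the identity $\bP(-X-c>y\mid\sF_t)(\omega)=\bP(-X>y+c(\omega)\mid\sF_t)(\omega)$ rigorous requires exactly the kind of ``take out what is known'' lemma that the paper isolates. The paper avoids this by first proving both properties for $\sF_t$-measurable step functions (where the substitution is elementary on each atom) and then passing to the limit via monotone/dominated convergence and continuity of $\psi$. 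Either route works; just be explicit that you are handling random, not merely constant, shifts and scales.

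\medskip
\noindent\textbf{Sub-additivity.} Here you take a genuinely different classical route. The paper argues \`a la Schmeidler/Denneberg: it constructs, for each $\omega$ and each ordered partition, a probability measure $Q^\omega$ such that the Choquet integral equals $\sum(-x_i)Q^\omega[A_i]$ for sorted step functions and dominates it otherwise (this is where concavity of $\psi$ enters, through an exchange inequality on adjacent indices). Sub-additivity for step functions then falls out of linearity of $\bE^{Q^\omega}$, and Lipschitz continuity extends it to $L^\infty$. You instead propose to show that the conditional capacity $A\mapsto\psi(\bP(A\mid\sF_t)(\omega))$ is submodular (immediate from the concavity inequality you state) and invoke the equivalence between submodularity and sub-additivity of the Choquet integral pathwise. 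Both are standard and both ultimately rely on the same concavity estimate; your approach is shorter if one is willing to cite \cite[Theorem~4.88]{FollmerSchiedBook2004}, while the paper's approach is self-contained and yields the useful by-product that $\rho_t^\psi$ is a supremum of conditional linear expectations over an explicit family.
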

\begin{proof} 	
Law-invariance of $\rho_t^\psi$ is clear.
Verification of properties of DCRM follow by somewhat standard but lengthy arguments. The main technical difficulties are related to the measurability issues and giving sense to conditional quantities through regular conditional distributions.  The detailed proof is deferred to the Supplement~\ref{sec:supp}
\end{proof}

We call $\rho^\psi$ the \textit{DCRM generated by the distortion function} $\psi$. Using integration by parts,  DCRMs generated by regular distortion functions admit the following alternative representation
 	\begin{equation} \label{eq:ChoquetDCRM-2}
 		 	\rho_t^\psi(X)= - \int_{\mathbb{R}} y \dif \psi(\bP(X \leq y\mid\sF_t)).
 	\end{equation}

Here we briefly extend some of the relevant results to the dynamic setup.

\begin{definition} \label{def:2-21}
   Given a probability measure $\mu\in\cP[0,1]$, the \textit{dynamic weighted value at risk} (dWV@R) is defined as
 \begin{align}\label{eq:dwvar-def}
   	\dwvar^\mu_t(X):= \int_{(0,1]}  \avar_{\alpha}(X\mid\sF_t) \mu(\dif \alpha),   \qquad X \in L^\infty, \ t \in \mathcal{T},   	
\end{align}
where $\avar_{\alpha}$ is the conditional average value at risk.
\end{definition}
The conditional average value at risk is defined by analogy to the regular average value at risk through the conditional quantiles or value at risk.
The delicate part is to give proper probabilistic meaning to these quantities, which we briefly present in Appendix~\ref{append:defs}.  Detailed proofs of some relevant properties are deferred to the Supplement~\ref{sec:supp}. 

Next result gives alternative representation of $\dwvar^\mu(X)$ by means of conditional quantile function. Please refer to Definition A.1 in the Appendix 1 for the definition of $ q^{+}_{z}(X \mid \sF_t)$.

\begin{lemma} \label{le:altdwar}
Let  $\mu\in\cP([0,1])$ and denote by $ \psi_{\mu, +}^{'}$ the right hand derivative of $\psi_{\mu}$. Then,
\begin{align} \label{eq:2-54}
	   	\dwvar^\mu_t(X)(\omega) &= -\int_{(0,1)}   q^{+}_{z}(X \mid \sF_t)(\omega) \psi_{\mu, +}^{'}(z)  \dif z,
\end{align}
for any $X \in L^\infty$, $t \in \mathcal{T}$ and $ \omega \in \Omega \backslash N^X $.
\end{lemma}
\begin{proof} In view of \eqref{eq:avar} we have
\begin{align*}
	\dwvar^\mu_t(X)(\omega) 
  &=  -\int_{(0,1]} \frac{1}{\alpha}  \int_{(0,\alpha)}  q^{+}_{z}(X \mid \sF_t)(\omega)  \dif z \mu(\dif \alpha)\\
    &=  -\int_{(0,1)}  q^{+}_{z}(X \mid \sF_t)(\omega) \int_{(0,1]}  \frac{1}{\alpha}  \1_{z< \alpha}  \mu(\dif \alpha) \dif z\\
    &=  -\int_{(0,1)}  q^{+}_{z}(X \mid \sF_t)(\omega)   \psi_{\mu, +}^{'}(z)  \dif z.
\end{align*}
\end{proof}

Next, we prove that  $\dwvar^\mu$ is a DCRM generated by a distortion function.
\begin{theorem}	 \label{thm:dwar}
Let  $\mu$ be a given probability measure on $[0, 1]$. Then, for any $X \in L^\infty$, $t \in \mathcal{T}$,  and $ \omega \in \Omega \backslash N^X $,
   \begin{equation} \label{eq:disdwar}
   	\dwvar^\mu_t(X)(\omega)= \rho^{\psi_{\mu}}_t(X)(\omega).
   \end{equation}
Hence,   $\dwvar^\mu(X)$ is a law-invariant DCRM.
\end{theorem}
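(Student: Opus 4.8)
The plan is to reduce the asserted equality to the alternative representation of $\dwvar^\mu$ already established in Lemma~\ref{le:altdwar}, and then to bridge that quantile representation to $\rho^{\psi_\mu}$ by a change of variables from the value axis to the quantile axis. By Lemma~\ref{le:altdwar}, for $\omega\in\Omega\setminus N^X$ we have $\dwvar^\mu_t(X)(\omega)=-\int_{(0,1)}q^{+}_{z}(X\mid\sF_t)(\omega)\,\psi_{\mu,+}^{'}(z)\,\dif z$, so it suffices to show that $\rho^{\psi_\mu}_t(X)(\omega)$ equals the same integral. Since for a fixed $\omega\notin N^X$ the map $F(\cdot):=\bP(X\le\cdot\mid\sF_t)(\omega)$ is a genuine distribution function and $q^{+}_{\bullet}(X\mid\sF_t)(\omega)$ is its upper quantile, the statement becomes a \emph{deterministic} identity between a distortion (Choquet) integral and a quantile integral, for which I would start from representation \eqref{eq:ChoquetDCRM-2}, i.e. $\rho^{\psi_\mu}_t(X)(\omega)=-\int_{\bR}y\,\dif\psi_\mu(F(y))$.

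First I would record the relevant facts about $\psi_\mu$. As we work under the standing assumption of regular distortions (equivalently $\mu(\{0\})=0$ by Remark~\ref{rem:psi2mu}, so that $\psi_\mu\in\Upsilon$), $\psi_\mu$ is concave and continuous, hence absolutely continuous on $(0,1]$ with $\dif\psi_\mu(z)=\psi_{\mu,+}^{'}(z)\,\dif z$ and $\psi_{\mu,+}^{'}(z)=\int_{(z,1]}\frac{1}{s}\,\mu(\dif s)$ coming directly from \eqref{eq:regdistortion}; this is exactly the weight appearing in Lemma~\ref{le:altdwar}. The core step is then the identity
\begin{equation*}
\int_{\bR}y\,\dif\psi_\mu(F(y))=\int_{(0,1)}q^{+}_{z}(X\mid\sF_t)(\omega)\,\psi_{\mu,+}^{'}(z)\,\dif z .
\end{equation*}
I would prove this via the layer-cake (tail) representation of the mean of the distorted law $\psi_\mu\circ F$ together with Fubini, using the elementary quantile duality $F(y)\le z\iff y<q^{+}_{z}$ (valid for Lebesgue-a.e. $y$), which converts the $\dif y$-integral of the tails of $\psi_\mu\circ F$ into the $\dif z$-integral of $q^{+}_{z}$ weighted by $\psi_{\mu,+}^{'}$. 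Boundedness of $X$ guarantees that all integrals are finite, so that the interchange of integration order is legitimate; equivalently one may invoke the static version of this identity from \cite[Chapter~4]{FollmerSchiedBook2004} applied to the conditional law $F$.

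Combining the two displays yields $\rho^{\psi_\mu}_t(X)(\omega)=\dwvar^\mu_t(X)(\omega)$ for every $\omega\in\Omega\setminus N^X$, and since $N^X$ is a null set this establishes \eqref{eq:disdwar} $\bP$-a.s. The final assertion then follows at once: by Proposition~\ref{prop:DCRM} the mapping $\rho^{\psi_\mu}$ is a law-invariant DCRM (as $\psi_\mu\in\Upsilon$), and \eqref{eq:disdwar} transfers all these properties verbatim to $\dwvar^\mu$.

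I expect the main obstacle to be the rigorous execution of the change of variables in the presence of atoms and flat pieces of $F$. The quantile duality $F(y)\le z\iff y<q^{+}_z$ holds only up to a set of $y$ of Lebesgue measure zero (at jump levels of $F$), and one must check this exceptional set does not affect the integral; the boundary behaviour as $z\to1^{-}$ also has to be controlled when $(\psi_{\mu,+})^{'}(1-)>0$, which by \eqref{eq:distPsi} corresponds to an atom at $1$ of the distribution associated with $\mu$. Keeping careful track of the left/right continuity of $F$ and of $\psi_\mu$, together with the convention for $q^{+}_{z}$ fixed in the Appendix, is where the genuine care is required; the remaining manipulations are routine once boundedness of $X$ secures absolute convergence.
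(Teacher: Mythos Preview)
Your proposal is correct and follows essentially the same route as the paper: both start from Lemma~\ref{le:altdwar}, fix $\omega\notin N^X$ so that everything becomes a deterministic identity for the conditional distribution $F$, and then convert between the Choquet form and the quantile integral by writing the quantile as a layer-cake integral of the indicator $\1_{\{F(y)\le z\}}$ and applying Fubini. The only procedural difference is that the paper first restricts to $X\ge 0$ (so that only one of the two tail integrals in \eqref{eq:ChoquetDCRM} survives, and the quantile is literally $\int_{[0,\infty)}\1_{\{F(y-)\le z\}}\,\dif y$), carries out the Fubini step there, and then recovers the general case by cash-additivity of both $\rho^{\psi_\mu}_t$ and $\avar_\alpha(\cdot\mid\sF_t)$; this sidesteps exactly the boundary and sign issues you flagged at the end of your proposal.
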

\begin{proof}
 First we consider $X \geq 0$. By Lemma~\ref{le:altdwar} and the  definition of conditional upper quantile, we have
\begin{align*}
	\dwvar^\mu_t(X)(\omega)  
    &= -\int_{(0,1)} \sup \{y \in \mathbb{R} \mid \bP(X < y\mid\sF_t)(\omega) \leq z \} \psi_{\mu, +}^{'}(z)   \dif z\\
    &= -\int_{(0,1)}   \int_{[0,\infty)}  \1_{\bP(X < y\mid\sF_t)(\omega) \leq z } dy   \psi_{\mu, +}^{'}(z)   \dif z \\
    &= -\int_{[0,\infty)}  \int_{(0,1)} \1_{\bP(X < y\mid\sF_t)(\omega) \leq z}  \psi_{\mu, +}^{'}(z)   \dif z \dif y  \\
    &= -\int_{[0,\infty)} \left[ 1- \psi \left( \bP(X < y\mid\sF_t)(\omega) \right) \right] \dif y  \\
    &= \int_{(-\infty, 0)} \left[ \psi \left( \bP(-X > y\mid\sF_t)(\omega) \right)- 1 \right] \dif y  \\
    &= \rho_t^{\psi_{\mu}}(X)(\omega).
\end{align*}
If $X \in L^\infty$, we consider $X^{\prime}=X+C \geq 0$, where $C= \esssup X $. Hence, from the above, combined with cash-additivity of $\rho_t^{\psi_{\mu}}$, yield
$$
\dwvar^\mu_t(X') = \rho_t^{\psi_{\mu}}(X) - C.
$$
On the other hand, by \eqref{eq:dwvar-def} and cash-additivity of $\avar_{\alpha}$, we get
$$
\dwvar^\mu_t(X') =  \dwvar^\mu_t(X)  - C.
$$
This concludes the proof.
\end{proof}

\begin{corollary}
Theorem~\ref{thm:dwar}  implies  that any DCRM generated by a distortion function is also a $\dwvar^\mu$ corresponding to a probability measure $\mu$. Hence, all financial interpretations of  $\dwvar^\mu$ apply to $\rho^{\psi_\mu}$, and additionally, \eqref{eq:dwvar-def} and \eqref{eq:2-54} provide alternative methods to compute $\rho^\psi$.
\end{corollary}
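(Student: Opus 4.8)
The plan is to read this Corollary as the converse of Theorem~\ref{thm:dwar}. That theorem runs in one direction: it shows that every $\dwvar^\mu$ coincides with $\rho^{\psi_\mu}$. Here I must instead start from an arbitrary DCRM generated by a distortion function, that is, from $\rho^\psi$ with $\psi\in\Upsilon$, and exhibit a probability measure $\mu\in\cP[0,1]$ for which $\rho^\psi=\dwvar^\mu$. The engine of the argument is the one-to-one correspondence $\mu\leftrightarrow\psi_\mu$ recorded in Remark~\ref{rem:psi2mu}.

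Concretely, I would fix $\psi\in\Upsilon$. Since $\psi$ is concave and continuous---in particular continuous at the origin---Remark~\ref{rem:psi2mu} (i.e.\ \cite[Lemma~4.63]{FollmerSchiedBook2004}) supplies a unique $\mu\in\cP[0,1]$ with $\mu(\{0\})=0$ such that $\psi_\mu=\psi$; the vanishing of the atom at $0$ is precisely the condition that matches continuity of $\psi$ at $0$ and certifies that $\mu$ is a genuine probability measure realizing $\psi$. Applying Theorem~\ref{thm:dwar} to this $\mu$ then yields, for every $X\in L^\infty$, $t\in\cT$ and $\omega\in\Omega\backslash N^X$,
\[
\dwvar^\mu_t(X)(\omega)=\rho^{\psi_\mu}_t(X)(\omega)=\rho^\psi_t(X)(\omega),
\]
which is the desired identification $\rho^\psi=\dwvar^\mu$. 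Law-invariance of $\rho^\psi$, already established in Proposition~\ref{prop:DCRM}, transfers to $\dwvar^\mu$ through this equality. The two remaining assertions of the statement are then immediate: once $\rho^\psi=\dwvar^\mu$, every financial interpretation of $\dwvar^\mu$ applies verbatim to $\rho^{\psi_\mu}$, and the representations \eqref{eq:dwvar-def} and \eqref{eq:2-54} become alternative computational formulas for $\rho^\psi$.

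I do not expect a serious obstacle here, since the only nontrivial input is the well-posedness and surjectivity of the map $\mu\mapsto\psi_\mu$ onto the regular distortions, which is exactly what Remark~\ref{rem:psi2mu} provides. The single point to treat with care is the translation of regularity of $\psi$ (continuity at $0$) into the absence of an atom of $\mu$ at $0$, guaranteeing that the measure produced by the correspondence is the probability measure we want; everything else is a direct appeal to Theorem~\ref{thm:dwar}.
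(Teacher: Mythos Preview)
Your argument is correct and is precisely the intended reading: the paper states this corollary without an explicit proof, leaving the justification implicit in the combination of Remark~\ref{rem:psi2mu} (the bijection $\mu\leftrightarrow\psi_\mu$, with continuity at $0$ corresponding to $\mu(\{0\})=0$) and Theorem~\ref{thm:dwar}. You have simply spelled out that implicit step faithfully.
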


The notion of static Coherent Acceptability Index (CAI) was introduced in \cite{chernyMadan2006}, in the same axiomatic spirit of CRM, aiming to generalize the known performance measures such as Sharpe ratio, gain to loss ratio, risk adjusted return on capital, etc. The dynamic coherent acceptability indices (DCAIs), in a discrete time setup, were first introduced and studied in \cite{BCZ2010}, and consequently in \cite{BCIR2012,BCC2014,BCDK2013,BiaginiBion-Nadal2012}.

\begin{definition} A DCAI is a function $\alpha: \mathcal{T} \times L^\infty \times \Omega \rightarrow [0, \infty]$ that is adapted, local, quasi-concave, monotone increasing and scale invariant.
\end{definition}

Similar to DCRM, we study time consistency of DCAI separately, in next section. Under some technical continuity assumptions, it was proved that a DCAI $\alpha_t$ can be uniquely identified with a family $\rho^x_t, \ x\in[0,+\infty)$, of DCRMs. We focus our attention on the case when $\rho^x, x\geq0$, are generated by distortion functions. This is motivated, in particular, by the extensive use of the static CAIs generated by distortion function in various applications from finance (cf. \cite{MadanSchoutens2016}).

Let us introduce some definitions and discuss some properties of families of distortion functions and the corresponding DCRMs.
In what follows, we assume that the underlying probability space is atomless.

\begin{definition}
	A family, indexed by $x >0$, of distortion functions $(\psi_x)_{x>0}$, is called increasing, if $\psi_{x_1}(y) \leq \psi_{x_2}(y)$ for any $y \in [0, 1]$ and  $0< x_1 \leq x_2$.
	Respectively, it is called right continuous, if $\lim\limits_{{z \rightarrow x^{+}}} \psi_{z}(y) = \psi_x(y)$ for all $y \in [0, 1]$ and $x>0$. 
\end{definition}

\begin{definition}
	A family, indexed by $x >0$, of DCRMs $(\rho^{\psi_x})_{x >0}$, is called increasing, if $\rho^{\psi_{x_1}}_t(X) \leq \rho^{\psi_{x_2}}_t(X)$ for any $X \in L^\infty$, $t \in \mathcal{T}$ and  $0< x_1 \leq x_2$.
Respectively, 	it is called  right continuous, if $\lim\limits_{{z \rightarrow x^{+}}} \rho^{\psi_z}_t(X) = \rho^{\psi_x}_t(X)$ for all $X \in L^\infty$, $t \in \mathcal{T}$ and $x>0$.
\end{definition}

\begin{lemma}   \label{le:cexin}
 A family of DCRMs $(\rho^{\psi_x})_{x >0}$ is increasing if and only if $(\psi_{x})_{x >0}$ is an increasing family of regular distortion functions.
 Moreover, 	a family of DCRMs $(\rho^{\psi_x})_{x >0}$ is right continuous if and only if  $(\psi_{x})_{x >0}$ is a right continuous family of regular  distortion functions.
\end{lemma}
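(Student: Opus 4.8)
The plan is to route everything through the Choquet representation \eqref{eq:ChoquetDCRM}, for which the dependence on the distortion is completely transparent: for fixed $X$, $t$ and $\omega\in\Omega\setminus N^X$, writing $P_y:=\bP(-X>y\mid\sF_t)(\omega)\in[0,1]$, we have
\[
\rho_t^{\psi}(X)(\omega)=\int_{[0,\infty)}\psi(P_y)\dif y+\int_{(-\infty,0)}\bigl[\psi(P_y)-1\bigr]\dif y .
\]
The forward implications (the $\psi$-property forces the $\rho$-property) are read off directly from this formula, while the converses are obtained by feeding a single well-chosen test position into it. Throughout, each $\psi_x$ is assumed regular, so that $\rho^{\psi_x}$ is a genuine DCRM by Proposition~\ref{prop:DCRM}; only the monotonicity/continuity in the index $x$ needs to be transferred.

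For the forward monotonicity direction I would fix $X$, $t$, $\omega$ and note that if $\psi_{x_1}(p)\le\psi_{x_2}(p)$ for every $p\in[0,1]$, then pointwise in $y$ both integrands above are monotone in the distortion, so integrating preserves the inequality and $\rho_t^{\psi_{x_1}}(X)\le\rho_t^{\psi_{x_2}}(X)$. For the forward right-continuity direction I would fix $X$, $t$, $x$ and a sequence $z_n\downarrow x$, and pass the limit inside both integrals by dominated convergence: since $X\in L^\infty$, with $M:=\|X\|_\infty$ we have $P_y=0$ for $y>M$ and $P_y=1$ for $y<-M$, so $\psi_{z_n}(P_y)$ vanishes on $\{y>M\}$ and $\psi_{z_n}(P_y)-1$ vanishes on $\{y<-M\}$, uniformly in $n$; the integrands are thus dominated by $\1_{[0,M]}$ and $\1_{[-M,0)}$ respectively, which are integrable. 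Right continuity of the family $(\psi_x)$ gives $\psi_{z_n}(P_y)\to\psi_x(P_y)$ for each fixed $y$, so DCT yields $\rho_t^{\psi_{z_n}}(X)(\omega)\to\rho_t^{\psi_x}(X)(\omega)$; as the limit is independent of the sequence, the full right limit exists and equals $\rho^{\psi_x}_t(X)$ $\bP$-a.s.

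The converses both rest on one computation. Because the space is atomless, for any $p\in(0,1)$ there is $A\in\sF$ with $\bP(A)=p$; taking $X=-\1_A$ and $t=0$ (so that $\sF_0$ is trivial and conditional probabilities are ordinary ones), one finds $\bP(-X>y)=1$ for $y<0$, $=p$ for $0\le y<1$, and $=0$ for $y\ge1$, whence the display gives
\[
\rho_0^{\psi}(-\1_A)=\int_{[0,1)}\psi(p)\dif y=\psi(p).
\]
Now if $(\rho^{\psi_x})$ is increasing, applying its defining inequality at $t=0$ to this $X$ gives $\psi_{x_1}(p)\le\psi_{x_2}(p)$; since $p\in(0,1)$ was arbitrary and the endpoints $p\in\{0,1\}$ are forced by $\psi(0)=0$, $\psi(1)=1$, the family $(\psi_x)$ is increasing. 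Identically, if $(\rho^{\psi_x})$ is right continuous then $\lim_{z\to x^+}\psi_z(p)=\lim_{z\to x^+}\rho_0^{\psi_z}(-\1_A)=\rho_0^{\psi_x}(-\1_A)=\psi_x(p)$ for every $p\in(0,1)$, giving right continuity of $(\psi_x)$.

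I expect the only genuinely delicate point to be the forward right-continuity direction: one must justify the interchange of limit and integral, and the clean way to do so is the uniform domination afforded by $X\in L^\infty$ (the integrands are supported on the fixed compact $y$-range $[-M,M]$ regardless of the index). The converses, by contrast, are essentially immediate once the identity $\rho_0^\psi(-\1_A)=\psi(\bP(A))$ and the atomlessness of the space are available, the latter being exactly what lets the single test family $\{-\1_A\}$ probe $\psi$ at every level $p\in(0,1)$.
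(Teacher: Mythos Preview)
Your proof is correct and follows essentially the same strategy as the paper: the forward directions are read directly from the Choquet formula (with dominated convergence on the compact $y$-support $[-M,M]$ for right continuity), and the converses are obtained by probing the distortion at an arbitrary level $p\in(0,1)$ via a single Bernoulli-type test position at $t=0$, using atomlessness. The only cosmetic differences are that the paper uses $X\in\{1,2\}$ with $\bP(X=1)=p$ rather than your $X=-\1_A$ (yielding the same key identity after a shift), and phrases the converse as a contradiction argument rather than your direct one.
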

The proof of Lemma~\ref{le:cexin} can be found in accompanying supplement. 

\begin{remark} Several notes are in order:
	\begin{itemize}
\item The assumption  made above  that probability space is atomless means that a probability space supports random variables with continuous distribution if and only if such probability space is atomless; for details see \cite[Proposition A.27]{FollmerSchiedBook2004}. We use this in the proof of Lemma~\ref{le:cexin} to ensure that $X= 1$ with probability $y^{\star}$ exists. If atomless property does not hold, then generally speaking the `only if' directions in Lemma~\ref{le:cexin} does not hold either. However, Lemma~\ref{le:cexin} may still hold true for a probability space with atoms, if the atoms do not affect the distribution of $X$.

\item A family of DCRMs $(\rho^{\psi_x})_{x >0}$ being increasing does not necessarily imply that the family of $(\psi_{x})_{x >0}$ is increasing; see Example~B.10. 
\item A family of DCRMs $(\rho^{\psi_x})_{x >0}$ being  right continuous does not necessarily imply that $(\psi_{x})_{x >0}$ is a right continuous family of distortion functions; see Example~B.10.
	\end{itemize}
\end{remark}

Given an increasing family of distortion functions $\Psi= (\psi_x)_{x> 0}$.  For any $t \in \mathcal{T}$,  $X \in L^\infty$ and $\omega \in \Omega \backslash N^X$, we define
\begin{equation}  \label{def:DCAI}
\alpha_{t}^{\Psi}(X)(\omega):= \sup \left\{x \in \mathbb{R}_+ \mid \rho_t^{\psi_{x}}(X)(\omega) \leq 0 \right \},
\end{equation}
with convention $\sup \emptyset= 0$. Moreover, for any $\omega \in N^X$ we let $\alpha_{t}^{\Psi}(X)(\omega)= 0$.

\begin{proposition}
 The mapping  $\alpha_{t}^{\Psi}, t\in\cT$, is a law-invariant DCAI.
\end{proposition}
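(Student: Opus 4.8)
The plan is to deduce each of the five defining axioms of a DCAI (adapted, local, quasi-concave, monotone increasing, scale invariant), together with law-invariance, from the corresponding coherence axioms of the generating family $(\rho^{\psi_x})_{x>0}$ from Definition~\ref{def:DCRM}, working pointwise in $\omega$ on $\Omega\setminus N^X$ and recalling that $\alpha_t^\Psi(X)\equiv 0$ on $N^X$. The structural fact driving everything is monotonicity of the family: by Lemma~\ref{le:cexin}, since $\Psi=(\psi_x)_{x>0}$ is increasing, $(\rho^{\psi_x})_{x>0}$ is increasing, so for fixed $X$, $t$ and $\omega\notin N^X$ the map $x\mapsto\rho_t^{\psi_x}(X)(\omega)$ is non-decreasing. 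Hence the acceptance set $\{x\in\mathbb{R}_+:\rho_t^{\psi_x}(X)(\omega)\le 0\}$ is an interval with left endpoint $0$, and in particular $\rho_t^{\psi_x}(X)(\omega)\le 0$ for every $x<\alpha_t^\Psi(X)(\omega)$. I would record this observation first, as it is used repeatedly.

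Three axioms are then immediate. For monotonicity, if $X\le Y$ then monotone decreasingness of each $\rho^{\psi_x}$ gives $\rho_t^{\psi_x}(Y)\le\rho_t^{\psi_x}(X)$, so the acceptance set of $X$ is contained in that of $Y$ and the suprema are ordered, i.e. $\alpha_t^\Psi(X)\le\alpha_t^\Psi(Y)$. For scale invariance, positive homogeneity $\rho_t^{\psi_x}(\beta X)=\beta\,\rho_t^{\psi_x}(X)$ for $\beta\in L^\infty_{t,+}$ with $\beta>0$ shows pointwise that $\rho_t^{\psi_x}(\beta X)(\omega)\le 0$ iff $\rho_t^{\psi_x}(X)(\omega)\le 0$, so the two acceptance sets, and hence their suprema, coincide. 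Locality is obtained the same way: for $A\in\sF_t$, locality of each $\rho^{\psi_x}$ gives $\1_A\rho_t^{\psi_x}(X)=\1_A\rho_t^{\psi_x}(\1_A X)$, so the acceptance sets agree on $A$ and $\1_A\alpha_t^\Psi(X)=\1_A\alpha_t^\Psi(\1_A X)$. Law-invariance is clear, since $\alpha_t^\Psi(X)$ is defined purely through the (conditionally) law-invariant quantities $\rho_t^{\psi_x}(X)$.

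For quasi-concavity, fix $X,Y$ and an $\sF_t$-measurable $\lambda$ with $0\le\lambda\le 1$, put $Z=\lambda X+(1-\lambda)Y$, fix $\omega$ off the common null set, and take any $x<\alpha_t^\Psi(X)(\omega)\wedge\alpha_t^\Psi(Y)(\omega)$. By the structural observation $\rho_t^{\psi_x}(X)(\omega)\le 0$ and $\rho_t^{\psi_x}(Y)(\omega)\le 0$, and sub-additivity together with positive homogeneity of $\rho_t^{\psi_x}$ give
\[
\rho_t^{\psi_x}(Z)(\omega)\le\lambda(\omega)\,\rho_t^{\psi_x}(X)(\omega)+(1-\lambda(\omega))\,\rho_t^{\psi_x}(Y)(\omega)\le 0,
\]
so $x\le\alpha_t^\Psi(Z)(\omega)$; taking the supremum over such $x$ yields $\alpha_t^\Psi(Z)(\omega)\ge\alpha_t^\Psi(X)(\omega)\wedge\alpha_t^\Psi(Y)(\omega)$.

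I expect adaptedness to be the main obstacle, as it is the only step where the pointwise acceptance sets must be assembled into a single $\sF_t$-measurable object. Here the monotone structure is essential: since $x\mapsto\rho_t^{\psi_x}(X)(\omega)$ is non-decreasing, for every $a\ge 0$ one has $\alpha_t^\Psi(X)(\omega)>a$ if and only if there is a rational $q>a$ with $\rho_t^{\psi_q}(X)(\omega)\le 0$, so that
\[
\{\omega:\alpha_t^\Psi(X)(\omega)>a\}=\Big(\bigcup_{q\in\mathbb{Q},\,q>a}\{\omega:\rho_t^{\psi_q}(X)(\omega)\le 0\}\Big)\setminus N^X.
\]
Each set in the union is $\sF_t$-measurable because $\rho^{\psi_q}$ is adapted, so the countable union is $\sF_t$-measurable; removing the $\mathbb{P}$-null set $N^X$ and using $\alpha_t^\Psi(X)=0$ thereon then gives adaptedness (after the usual completion of $\sF_t$). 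The care required is precisely in justifying the reduction to rational indices, which is where monotonicity and right-continuity of the family enter, and in treating the exceptional null set consistently across all $t\in\cT$.
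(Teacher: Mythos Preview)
Your proof is correct and follows essentially the same route as the paper: each DCAI axiom is derived pointwise from the corresponding coherence property of the generating DCRMs, with law-invariance immediate from the definition. The one place where your argument differs in presentation is adaptedness. The paper establishes the stronger equivalence $\{\alpha_t^{\Psi}(X)\geq x\}=\{\rho_t^{\psi_x}(X)\leq 0\}$ by proving left-continuity $\lim_{y\uparrow x}\rho_t^{\psi_y}(X)(\omega)=\rho_t^{\psi_x}(X)(\omega)$ via monotone and dominated convergence (using only that $\Psi$ is increasing), and then reuses this equivalence verbatim in the quasi-concavity step. Your route through the strict superlevel sets $\{\alpha_t^{\Psi}(X)>a\}$ and a countable union over rational indices is equally valid and slightly more elementary, since it bypasses the continuity argument; note, however, that right-continuity of $\Psi$ plays no role in either approach (only the increasing property is assumed and used), so your closing remark invoking it can be dropped.
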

The proof of this result, thanks to the robust representations results for DCAIs, follows from 	\cite{BiaginiBion-Nadal2012} or \cite{BCDK2013}. For the sake of completeness, a direct proof by verifying the required properties of DCAI is also presented in the Supplement~\ref{sec:supp}.

\subsection{Families of DCRMs} \label{sec:families}

In this section we give several important examples of families of DCRMs that generate DCAIs. These examples are based on families of distortion functions introduced in \cite{ChernyMadan2009}, that lead to the so-called MINVAR, MAXVAR, MAXMINVAR and MINMAXVAR risk measures. As already mentioned, the corresponding CAIs were successfully used, in a static setup, in a series of papers and monographs devoted to conic finance \cite{MadanSchoutens2016,MadanSchoutensEq2011,MadanSchoutens2011a}. Here we will present the dynamic counterpart of these measures.

\subsubsection{Dynamic MINVAR}
Consider the following family of regular distortion functions
\[
	\psi_{x}^{\mathrm{MINVAR}}(y)=1-(1-y)^{x+1}, \quad x \in \mathbb{R}_{+}, \ y \in[0,1].
\]
Clearly this is an increasing and continuous family of distortions. As next result shows, the probability measure $\mu$ from \eqref{eq:regdistortion} can be computed explicitly for these examples.
\begin{proposition}
	Given $x \in \mathbb{R}_{+}$, let $\mu\in\cP[0,1]$ be such that $\psi_{\mu}= \psi_{x}^{\mathrm{MINVAR}}$. Then, $\mu$ follows $\mathrm{Beta}(2, x)$ distribution.
\end{proposition}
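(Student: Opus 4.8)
The plan is to apply the explicit formula \eqref{eq:distPsi} from Remark~\ref{rem:psi2mu}, which recovers the distribution function $F$ of the measure $\mu$ associated with a concave distortion $\psi$ from the values of $\psi$ and its right derivative $(\psi_+)'$. Writing $\psi:=\psi_x^{\mathrm{MINVAR}}$, note that $\psi(y)=1-(1-y)^{x+1}$ is differentiable on $(0,1)$, so there the right derivative coincides with the ordinary derivative,
\[
  (\psi_+)'(y) = (x+1)(1-y)^x.
\]

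First I would substitute this into \eqref{eq:distPsi} to get, for $y\in(0,1)$,
\[
  F(y) = \psi(y) - y\,(\psi_+)'(y) = 1 - (1-y)^{x+1} - (x+1)\,y\,(1-y)^x,
\]
and then factor out $(1-y)^x$ and collect terms to reduce this to $F(y) = 1 - (1-y)^x\,(1+xy)$. Next I would differentiate $F$ on $(0,1)$ to extract the density; a direct application of the product rule yields
\[
  F'(y) = x(x+1)\,y\,(1-y)^{x-1}.
\]
I would then recognize the right-hand side as the $\mathrm{Beta}(2,x)$ density, since that density equals $y^{2-1}(1-y)^{x-1}/B(2,x)$ with $B(2,x)=\Gamma(2)\Gamma(x)/\Gamma(x+2)=1/\big(x(x+1)\big)$; equivalently, one checks $\int_0^1 x(x+1)\,y\,(1-y)^{x-1}\,\dif y = x(x+1)B(2,x)=1$.

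Finally, to be sure $\mu$ carries no boundary atoms I would verify the endpoint behaviour predicted by Remark~\ref{rem:psi2mu}: from $F(0)=\psi(0)=0$ we get $\mu(\{0\})=0$, consistent with the continuity of $\psi$ at $0$, and since $(\psi_+)'(1-)=\lim_{y\to 1}(x+1)(1-y)^x=0$ for $x>0$, the function $F$ has no jump at $1$. Both facts agree with the $\mathrm{Beta}(2,x)$ law being absolutely continuous with no mass at the endpoints. As the map $\mu\mapsto\psi_\mu$ in Remark~\ref{rem:psi2mu} is a bijection, identifying $F$ with the $\mathrm{Beta}(2,x)$ distribution function pins down $\mu$ uniquely and finishes the proof. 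The whole computation is elementary; the only place calling for a little care is the endpoint bookkeeping, making sure the absence of atoms is consistent with the claimed absolutely continuous law.
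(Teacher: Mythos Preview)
Your proof is correct and follows the same overall strategy as the paper: both start from the formula \eqref{eq:distPsi} in Remark~\ref{rem:psi2mu}, compute $F(y)=1-(1-y)^{x+1}-(x+1)y(1-y)^x$, and then identify this with the $\mathrm{Beta}(2,x)$ distribution function.

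The only difference is in the final identification step. The paper stays at the level of the CDF, rewriting $F(y)$ as $I_y(1,x)-y(1-y)^x/B(1,x)$ and invoking the standard recurrence $I_y(a+1,b)=I_y(a,b)-y^a(1-y)^b/\bigl(aB(a,b)\bigr)$ for the regularized incomplete beta function to conclude $F(y)=I_y(2,x)$. You instead pass to the density by differentiating $F$ and recognize $F'(y)=x(x+1)\,y(1-y)^{x-1}$ directly as the $\mathrm{Beta}(2,x)$ density. Your route is arguably more self-contained, since it avoids appealing to the incomplete-beta recurrence; on the other hand, since differentiation on $(0,1)$ alone does not rule out atoms at the endpoints, your explicit check that $(\psi_+)'(1-)=0$ (hence no jump of $F$ at $1$) is a necessary and welcome addition that the paper's CDF-level argument sidesteps.
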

\begin{proof}
Let $F$ denote the distribution function of $\mu$. By \eqref{eq:distPsi}, for any $y\in [0, 1]$, we have
	\begin{align*}
		F(y) &=1-(1-y)^{x+1} - (x+1)y (1-y)^x= 1-(1-y)^x -xy (1-y)^x \\
		&= 1-(1-y)^x -xy(1-y)^x = I_y(1,x)-xy(1-y)^x\\
		&= I_y(1,x)-\frac{y(1-y)^x}{B(1,x)}= I_y(2,x),
	\end{align*}
	where $I_y(2,x)$ is the regularized incomplete beta function and $B(1,x)= \frac{\Gamma(1) \Gamma(x)}{\Gamma(1+x)}= \frac{1}{x}$.
Note that $F$ is exactly the cumulative distribution function of the beta distribution $\mathrm{Beta}(2,x)$.
\end{proof}

Similar to \cite{ChernyMadan2009}, we provide an intuitive interpretation of $\rho_t^{\psi_x^\mathrm{MINVAR}}$.
Namely,  for any $x \in \mathbb{Z}_+$ and $t \in \mathcal{T}$, it holds that
\begin{align*}
	\rho^{\psi_{x}^{\mathrm{MINVAR}}}_t(X)=-\bE[Y\mid \sF_t], \quad \textrm{ with } \ Y \stackrel{d}{=} \min \left\{X_{1}, \ldots, X_{x+1}\right\},
\end{align*}
where  $X_{1}, \ldots, X_{x+1}$  are identically distributed as $X$ and mutually conditionally independent given $\sF_t$.
Indeed,
\begin{align*} 
	\begin{split}
		- \bE[Y\mid \sF_t] &= - \int_{-\infty}^{+\infty} y d \bP(Y \leq y\mid\sF_t)
		= -\int_{-\infty}^{+\infty} y d (1- \bP(\min \left\{X_{1}, \ldots, X_{x+1}\right\} > y\mid\sF_t))\\
		&= -\int_{-\infty}^{+\infty} y d (1- \prod_{i=1}^{x+1} \bP(X_i>y\mid\sF_t))
		= - \int_{-\infty}^{+\infty} y d (1-(1- F_{X\mid\sF_t}(y))^{x+1})\\
		&= -\int_{-\infty}^{+\infty} y d\psi_{x}^{MINVAR}\left(F_{X\mid\sF_t}(y)\right)
		= \rho^{\psi_{x}^{MINVAR}}_t(X).
	\end{split}
\end{align*}
The DAI generated by $\rho^{\psi_{x}^{\mathrm{MINVAR}}}$ is denoted by $\alpha^{\mathrm{MINVAR}}$.

\subsubsection{Dynamic MAXVAR}
The DCRM called dynamic MAXVAR is generated by the following family of  distortions
\[
	\psi_{x}^{MAXVAR}(y)=y^{\frac{1}{x+1}}, \quad x \in \mathbb{R}_{+}, \ y \in[0,1],
\]
with the corresponding $\mu\in\cP[0,1]$ from  \eqref{eq:regdistortion} given in the next result.
\begin{proposition}
For a fixed $x \in \mathbb{R}_{+}$, let $\mu$ be the probability measure such that $\psi_{\mu}= \psi_{x}^{\mathrm{MAXVAR}}$. Then,
$$
\mu= \frac{x}{x+1} \hat{\mu} + \frac{1}{x+1} \delta_1,
$$
where $\hat{\mu}$ follows the power with density $f_{\widehat \mu}(y) \propto y^{-\frac{x}{x+1}}$, and $\delta_1$ is the Dirac measure at $1$.
\end{proposition}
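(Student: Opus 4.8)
The plan is to apply the explicit formula \eqref{eq:distPsi} from Remark~\ref{rem:psi2mu}, which recovers the distribution function $F$ of the measure $\mu$ associated to a concave distortion $\psi$, exactly as in the MINVAR case treated above. First I would compute the right derivative of $\psi_x^{\mathrm{MAXVAR}}$: since $\psi_x^{\mathrm{MAXVAR}}(y) = y^{1/(x+1)}$, we have $(\psi_{x,+}^{\mathrm{MAXVAR}})'(y) = \frac{1}{x+1} y^{-x/(x+1)}$ for $y \in (0,1]$. Substituting into \eqref{eq:distPsi} gives, for $0 < y < 1$,
\[
F(y) = y^{1/(x+1)} - y \cdot \frac{1}{x+1} y^{-x/(x+1)} = \frac{x}{x+1}\, y^{1/(x+1)},
\]
after simplifying the exponent $1 - x/(x+1) = 1/(x+1)$ and collecting the common factor $y^{1/(x+1)}$.

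The key difference from the MINVAR computation is that here $F$ has an atom at the right endpoint. Since $(\psi_{x,+}^{\mathrm{MAXVAR}})'(1-) = \frac{1}{x+1} > 0$, Remark~\ref{rem:psi2mu} tells us that $F$ is discontinuous at $1$ with jump size $\frac{1}{x+1}$; indeed $F(1-) = \frac{x}{x+1}$ while $F(1) = 1$. Thus $\mu$ decomposes as an absolutely continuous part on $(0,1)$ of total mass $\frac{x}{x+1}$ together with a point mass $\frac{1}{x+1}\delta_1$ at $1$.

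It remains to identify the continuous part. Differentiating $F$ on $(0,1)$ gives the density $\frac{x}{(x+1)^2}\, y^{-x/(x+1)}$; normalizing by its total mass $\frac{x}{x+1}$ produces a probability density $f_{\hat\mu}(y) = \frac{1}{x+1}\, y^{-x/(x+1)} \propto y^{-x/(x+1)}$, which one checks integrates to $1$ over $[0,1]$. Letting $\hat\mu$ denote the corresponding probability measure then yields $\mu = \frac{x}{x+1}\,\hat\mu + \frac{1}{x+1}\,\delta_1$, as claimed. The only real subtlety, compared with MINVAR, is recognizing and correctly accounting for the atom at $1$ arising from the nonvanishing right derivative at the endpoint; everything else is a direct substitution into \eqref{eq:distPsi} followed by routine simplification.
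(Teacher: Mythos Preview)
Your proof is correct and follows exactly the same approach as the paper: apply formula \eqref{eq:distPsi} from Remark~\ref{rem:psi2mu} to compute $F(y)=\frac{x}{x+1}y^{1/(x+1)}$ on $(0,1)$, then observe that $(\psi_x^{\mathrm{MAXVAR}})'_+(1-)=\frac{1}{x+1}>0$ produces the atom at $1$. You actually go slightly further than the paper by explicitly differentiating $F$ and normalizing to identify the density of $\hat\mu$, whereas the paper's proof simply records $F$ and the jump size and stops.
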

\begin{proof}
By \eqref{eq:distPsi}, the distribution function of $\mu$ is given by
\[
F(y) =
	\begin{cases}
		0, & y=0\\
		\frac{x}{x+1}y^{\frac{1}{x+1}}, & 0<y<1\\
		1, & y=1.
	\end{cases}
\]
	Note that $F$ is discontinuous at $1$ since,
	\begin{align*}
		\mu(\{ 1 \})= \left( \psi_{x}^{\mathrm{MAXVAR}} \right)^{\prime}(1-) = \frac{1}{x+1} y^{\frac{1}{x+1}- 1}\vert_{y= 1-}=  \frac{1}{x+1}.
	\end{align*}
This completes the proof. 
\end{proof}

Next, we will show that for fixed $x \in \mathbb{Z}_+$ and $t \in \mathcal{T}$,
\[
	\rho^{\psi_{x}^{\mathrm{MAXVAR}}}_t(X)=-\bE[Y\mid \sF_t], \quad \textrm{with}\  X \stackrel{d}{=} \max \left\{Y_{1}, \ldots, Y_{x+1}\right\},
\]
where  $Y_{1}, \ldots, Y_{x+1}$  are identically distributed as $Y$ and mutually conditionally independent given $\sF_t$. This, in particular, justifies the name $\mathrm{MAXVAR}$; see also \cite{ChernyMadan2009}. Indeed, since
\[ 
\begin{split}
F_{X|\sF_t}(y)&= \bP(\max \left\{Y_{1}, \ldots, Y_{x+1}\right\} \leq y\mid\sF_t)
		= \bP(Y_{1} \leq y, \ldots, Y_{x+1}\leq y \mid \sF_t)\\
		&= \prod_{i=1}^{x+1} \bP(Y_i \leq y\mid\sF_t)
		= (F_{Y|\sF_t}(y))^{x+1},
	\end{split}
\]
we have that $F_{Y|\sF_t}(y)= (F_{X|\sF_t}(y))^{\frac{1}{x+1}}= \psi_{x}^{\mathrm{MAXVAR}}(F_{X|\sF_t}(y))$. Hence,
\begin{align*}
	- \bE[Y\mid \sF_t] &= - \int_{-\infty}^{+\infty} y d \bP(Y \leq y\mid\sF_t)
	= -\int_{-\infty}^{+\infty} y d\psi_{x}^{\mathrm{MAXVAR}}\left(F_{X|\sF_t}(y)\right)\\
	&= \rho^{\psi_{x}^{\mathrm{MAXVAR}}}_t(X).
\end{align*}

\subsubsection{Dynamic MAXMINVAR and MINMAXVAR}
Similarly to the previous examples, one can elevate the MAXMINVAR and MINMAXVAR defined in \cite{ChernyMadan2009} to the dynamic (conditional) setup, respectively considering the families of distortion functions
\begin{align*}
	\psi_{x}^{\mathrm{MAXMIN}}(y)=\left(1-(1-y)^{x+1}\right)^{\frac{1}{x+1}}, \quad
	\psi_{x}^{\mathrm{MINMAX}}(y)=1-\left(1-y^{\frac{1}{x+1}}\right)^{x+1},
\end{align*}
where $x \in \mathbb{R}_{+}, \ y \in[0,1]$.

Moreover, one can prove that for a fixed $x \in \mathbb{Z}_+$ and $t \in \mathcal{T}$,
\begin{align} 
	\rho^{\psi_{x}^{\mathrm{MAXMIN}}}_t(X)=-\bE[Y\mid \sF_t], \quad \textrm{with}\  \max \left\{Y_{1}, \ldots, Y_{x+1}\right\} \stackrel{d}{=} \min \left\{X_{1}, \ldots, X_{x+1}\right\},
\end{align}
where  $X_{1}, \ldots, X_{x+1}$  are identically distributed as $X$ and $Y_{1}, \ldots, Y_{x+1}$ are identically distributed as $Y$,
and both sequences are mutually conditionally independent given $\sF_t$, and respectively,
$$
	\rho^{\psi_{x}^{\mathrm{MINMAX}}}_t(X)=-\bE[Y\mid \sF_t], \quad \textrm{with}\  Y \stackrel{d}{=} \min \left\{Z_{1}, \ldots, Z_{x+1}\right\}, \  X \stackrel{d}{=}  \max \left\{Z_{1}, \ldots, Z_{x+1}\right\},
$$
where $Z_{1}, \ldots, Z_{x+1}$  are random variables identically distributed as $Z$ and mutually conditionally independent given $\sF_t$.

Finally, we want to mention that $\mu\in\cP[0,1]$ corresponding to $\psi_{x}^{\mathrm{MAXMIN}}$ or $\psi_{x}^{\mathrm{MINMAX}}$ does not follow a classical distribution, hence they are not presented here.

\section{Time consistency of measures generated by distortion functions} \label{sec:timeconsis}

Time consistency in decision making is a large topic with many existing fundamental results. We refer to the survey  \cite{BCP2017}, where the authors present a comprehensive literature review and discuss different frameworks to study time consistency for risk or performance measures that admit numerical representations. Here, we will use the approach to time consistency developed in \cite{BCP2017a}. For convenience, we list the relevant forms of time consistency in Appendix~\ref{append:defs}. We assume the setup of Section~\ref{sec:DCRM}, and in particular we consider an atomless filtered probability space.

\subsection{Time consistency of $\rho^\psi$}

We recall that if $\psi= \textrm{Id}$  is the identity function, then $\rho_{t}^\psi(X)= \mathbb{E}(-X \mid \sF_t)$. This extreme case naturally satisfies all forms of time consistency. Therefore, in this section we omit this trivial case, and unless otherwise specified,  we assume that $\psi\neq \textrm{Id}$.

As already mentioned, $\rho^\psi$ is never strong time consistent (i.e. recurrent), except for $\psi$ identity, but as we will show, it satisfies weaker forms of time consistency.

\subsubsection{Sub-martingale and weakly rejection time consistency}\label{sec:weak-rejection-rho}

\begin{theorem} \label{th:submtg}
For any $\psi \in \Upsilon$, the DCRM $\rho^\psi$ is sub-martingale time consistent.
\end{theorem}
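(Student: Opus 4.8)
The plan is to unwind the definition of sub-martingale time consistency recalled in Appendix~\ref{append:defs} into the concrete requirement that, for every $X\in L^\infty$ and every $t$, the risk values satisfy
\[
\rho^\psi_t(X)\ \ge\ \bE\!\left[\rho^\psi_{t+1}(X)\mid\sF_t\right],
\]
i.e.\ the value process $(-\rho^\psi_t(X))_t$ is a sub-martingale. The two-index form $\rho^\psi_s(X)\ge\bE[\rho^\psi_t(X)\mid\sF_s]$ for $s\le t$ then follows by iterating and using the tower property, so it suffices to treat consecutive times. I would work with the Choquet representation \eqref{eq:ChoquetDCRM} rather than the Stieltjes form \eqref{eq:ChoquetDCRM-2}, since there the integrand is directly a value of $\psi$ to which Jensen's inequality can be applied.

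First I would fix $X$, write $Z=-X$ (which is bounded), and set $g_u(y):=\bP(Z>y\mid\sF_u)$ for $u\in\{t,t+1\}$ and $y\in\bR$, understood through the regular conditional distribution off the common null set $N^X$. Taking $\bE[\,\cdot\mid\sF_t]$ in \eqref{eq:ChoquetDCRM} applied at time $t+1$ and exchanging the conditional expectation with the $\dif y$–integral by (conditional) Fubini—legitimate because $0\le\psi(g_{t+1}(y))\le 1$ and the integrands vanish outside the compact interval $[-\norm{Z}_\infty,\norm{Z}_\infty]$—reduces the claim to the pointwise-in-$y$ estimate
\[
\psi\big(g_t(y)\big)\ \ge\ \bE\!\left[\psi\big(g_{t+1}(y)\big)\mid\sF_t\right],\qquad y\in\bR .
\]

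The heart of the argument is then a single application of Jensen's inequality. By the tower property of regular conditional distributions,
\[
g_t(y)=\bP(Z>y\mid\sF_t)=\bE\!\left[\bP(Z>y\mid\sF_{t+1})\mid\sF_t\right]=\bE\!\left[g_{t+1}(y)\mid\sF_t\right],
\]
so $g_t(y)$ is the $\sF_t$–conditional mean of $g_{t+1}(y)$. Since $\psi\in\Upsilon$ is concave, the conditional Jensen inequality gives $\psi(\bE[g_{t+1}(y)\mid\sF_t])\ge\bE[\psi(g_{t+1}(y))\mid\sF_t]$, which is exactly the displayed pointwise estimate. Integrating this inequality over $y\in[0,\infty)$ and over $y\in(-\infty,0)$ (subtracting the constant $1$ on the latter range preserves the inequality) and recombining via \eqref{eq:ChoquetDCRM} yields $\rho^\psi_t(X)\ge\bE[\rho^\psi_{t+1}(X)\mid\sF_t]$, as required.

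The conceptual content is thus carried entirely by the concavity of $\psi$, and the only real work is technical: ensuring that $(y,\omega)\mapsto g_u(y,\omega)$ is jointly measurable and that the tower property and Jensen's inequality hold simultaneously for Lebesgue-almost all $y$ off a single null set, together with justifying the Fubini exchange. I expect this measurability bookkeeping through regular conditional distributions to be the main obstacle, not the inequality itself. It is also where the asymmetry with Section~\ref{sec:supermartingale} becomes visible: super-martingale time consistency would demand the reverse estimate $\psi(g_t(y))\le\bE[\psi(g_{t+1}(y))\mid\sF_t]$, which concavity cannot deliver for $\psi\neq\mathrm{Id}$, so the very same computation explains why $\rho^\psi$ fails to be super-martingale time consistent.
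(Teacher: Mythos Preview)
Your proposal is correct and follows essentially the same approach as the paper: exchange the conditional expectation with the $\dif y$-integral via Fubini, then apply the conditional Jensen inequality for the concave $\psi$ together with the tower property of conditional probabilities. The paper handles arbitrary $s\ge t$ directly rather than reducing to consecutive times, but the core Fubini--Jensen argument is identical.
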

\begin{proof}
Let $t,s\in\cT$ such that $t \leq s$. Then, first by  Fubini's theorem and then by Jensen's inequality, we deduce
\begin{align*}
\bE(\rho_{s}^\psi(X)\mid\sF_t)
&=  \int_{[0, \infty)}  \bE (\psi(\bP(- X > y|\sF_{s})) |\sF_t) \dif y + \int_{(-\infty, 0)}  \bE( [ \psi(\bP(- X > y|\sF_{s})) -1 ]|\sF_t)  \dif y \\
&\leq  \int_{[0, \infty)}  \psi (\bP(- X > y\mid\sF_{t})) \dif y + \int_{(-\infty, 0)}  [\psi(\bP(- X > y\mid\sF_{t})) -1]  \dif y  \\
&= \rho_{t}^\psi(X).
\end{align*}
The proof is complete.
\end{proof}

In view of \cite[Proposition 5.4]{BCP2017}, weak time consistency is one of the weakest forms of time consistency, in the sense that it is implied by any time consistency generated by a projective update rule, and in particular weak rejection time consistency is implied by the sub-martingale time consistency. Thus, Proposition~\ref{th:submtg} implies that $\rho^\psi$ is weakly rejection time consistent.

\subsubsection{Middle rejection time consistency}

Inspired by \cite[Example 2.7]{MaEtAl2018}, we give a counterexample of  middle rejection time consistency of $\rho^\psi$.

\begin{example} \label{ex:nonmiddle}
Let $\mathcal{T}:= \{0,1,2\}$, assume that $X= (X_t)_{t \in \mathcal{T}}$ is a two-period binomial model shown in the following graph, with the same upward probability $0.5$ at each time step, and let  $\mathbb{F}= (\sF^X_t)_{t \in \mathcal{T}}$ be the natural filtration generated by $X$. Take $\psi(z):= \sqrt{z}$, $z\in[0,1]$,  and let $Y= -\rho_{1}^\psi(X_2) \in L^\infty_1$.

\begin{tikzpicture}[>=stealth,sloped]
    \matrix (tree) [%
      matrix of nodes,
      minimum size=1cm,
      column sep=1cm,
      row sep=0.2cm,
    ]
    {
       $X_0$&$X_1$&$X_2$&$\mathbb{P}$\\
          &   & 2 & 0.25  \\
          & 1 &   & \\
     \  0 &   & 0 & 0.5 \\
          &-1 &   & \\
          &   &-2 & 0.25  \\
    };
    \draw[->] (tree-4-1) -- (tree-3-2);
    \draw[->] (tree-3-2) -- (tree-2-3);
    \draw[->] (tree-4-1) -- (tree-5-2);
    \draw[->] (tree-5-2) -- (tree-6-3);
    \draw[->] (tree-3-2) -- (tree-4-3);
    \draw[->] (tree-5-2) -- (tree-4-3);
\end{tikzpicture}

\noindent Note that on $\{X_1= 1\}$, $\rho^\psi_{1}(X_2)= \sqrt{2}- 2$, while on $\{X_1= -1\}$, $\rho^\psi_{1}(X)= \sqrt{2}$. Thus, $\rho^\psi_{0}(X_2)= \sqrt{3}-1$. On the other hand, by direct evaluations  $\rho^\psi_{0}(Y) = 2\sqrt{2}- 2$.  Thus, $ \rho^\psi_{0}(X_2)< \rho^\psi_{0}(Y) $, which implies that $\rho^\psi$ is not middle rejection time consistent.
\end{example}

\begin{remark}
 Example~\ref{ex:nonmiddle} can also serve as counterexample that $\rho^\psi$ is not strongly time consistent.
\end{remark}

\subsubsection{Super-martingale time consistency}\label{sec:supermartingale}

\begin{proposition} \label{prop:nonsupermtg}
For any non-constant random variable $X \in L^\infty$ and  $\psi\in\Upsilon$, except the identity distortion function,
\[
\rho_{t}^{\psi}(X) >  \mathbb{E}\left[\rho_{T}^{\psi}(X)\mid \sF_{t}\right], \quad \textrm{for all } t \in \mathcal{T}\backslash \{T\}.
\]
\end{proposition}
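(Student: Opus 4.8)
The plan is to reduce the right-hand side to $\bE[-X\mid\sF_t]$ and then read off the gap from the pointwise inequality $\psi(p)>p$ of Lemma~\ref{le:disprop1}. First I would evaluate $\rho_T^\psi(X)$ by setting $t=T$ in \eqref{eq:ChoquetDCRM}: since $X$ is $\sF_T$-measurable, the regular conditional distribution degenerates and $\bP(-X>y\mid\sF_T)=\1_{\{-X>y\}}$. Because $\psi(0)=0$ and $\psi(1)=1$, we have $\psi(\1_{\{-X>y\}})=\1_{\{-X>y\}}$, so the two integrals in \eqref{eq:ChoquetDCRM} recombine through the tail (layer-cake) identity to give $\rho_T^\psi(X)=-X$. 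Taking $\sF_t$-conditional expectations, and noting that for $\psi=\mathrm{Id}$ the same representation yields $\rho_t^{\mathrm{Id}}(X)=\bE[-X\mid\sF_t]$, I obtain
\[
\bE\!\left[\rho_T^\psi(X)\mid\sF_t\right]=\bE[-X\mid\sF_t]=\rho_t^{\mathrm{Id}}(X).
\]

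Next I would compare $\rho_t^\psi(X)$ with this quantity directly. Writing $p_t(y):=\bP(-X>y\mid\sF_t)$ and subtracting the representation \eqref{eq:ChoquetDCRM} for $\psi$ from the one for $\mathrm{Id}$, the constants $-1$ on $(-\infty,0)$ cancel and I am left with
\[
\rho_t^\psi(X)-\bE[-X\mid\sF_t]=\int_{\bR}\Big[\psi\big(p_t(y)\big)-p_t(y)\Big]\dif y .
\]
By Lemma~\ref{le:disprop1}, since $\psi\in\Upsilon$ is concave and $\psi\neq\mathrm{Id}$, the integrand is nonnegative for every $y$, and is strictly positive precisely on $\{y:p_t(y)(\omega)\in(0,1)\}$. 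Thus the inequality ``$\geq$'' holds for every $\omega$, and the entire content of the strict inequality reduces to showing that, for $\bP$-a.e.\ $\omega$, this set has strictly positive Lebesgue measure.

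The main obstacle is exactly this strictness step. The map $y\mapsto p_t(y)(\omega)$ is a non-increasing conditional tail function running from $1$ down to $0$, so $\{y:p_t(y)(\omega)\in(0,1)\}$ is an interval whose length is positive if and only if the regular conditional law of $X$ given $\sF_t$ at $\omega$ is non-degenerate; on the event where that law is a Dirac mass the integrand vanishes identically and only equality survives. I would therefore have to show that, for $\bP$-a.e.\ $\omega$, the conditional law of $X$ is not a point mass, which is where the hypothesis that $X$ is non-constant, together with the atomless assumption on the space, enters. Granting this, integrating the strictly positive integrand over an interval of positive length yields $\rho_t^\psi(X)(\omega)>\bE[-X\mid\sF_t](\omega)$ for a.e.\ $\omega$, that is $\rho_t^\psi(X)>\bE[\rho_T^\psi(X)\mid\sF_t]$, as claimed. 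As in Proposition~\ref{prop:DCRM}, the measurability of all conditional quantities is handled through the regular conditional distribution and the exceptional null set $N^X$.
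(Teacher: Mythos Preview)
Your overall strategy coincides with the paper's: reduce the right-hand side to $\bE[-X\mid\sF_t]$ via $\rho_T^\psi(X)=-X$, write the difference as $\int_{\bR}\big[\psi(p_t(y))-p_t(y)\big]\,\dif y$ with $p_t(y)=\bP(-X>y\mid\sF_t)$, and invoke Lemma~\ref{le:disprop1} for strict positivity of the integrand on $\{y:p_t(y)\in(0,1)\}$. You also correctly isolate the only non-trivial point, namely that this set has positive Lebesgue measure for a.e.\ $\omega$.

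The gap is that you do not actually prove this step; you write ``Granting this'' and appeal to the atomless assumption. That appeal is misplaced: atomlessness of $(\Omega,\sF,\bP)$ says nothing about whether the \emph{conditional} law of $X$ given $\sF_t$ is non-degenerate. The paper fills this step differently. It lets $a=\essinf X<b=\esssup X$ and works on the interval $[-b,-a]$. If $X$ has continuous distribution, the conditional tail $y\mapsto p_t(y)(\omega)$ is continuous and runs from $1$ to $0$ on $[-b,-a]$, so it must lie in $(0,1)$ on a subset $E$ of positive Lebesgue measure. If $X$ is discrete, the paper reduces without loss of generality to a binary $X$ taking values $a,b$ with strictly positive probabilities, so that $p_t(y)(\omega)=\bP(X=a\mid\sF_t)(\omega)\in(0,1)$ for all $y\in[-b,-a)=:E$. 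In either case one then bounds the integral from below by $\int_E[\psi(p_t(y))-p_t(y)]\,\dif y>0$. This explicit construction of $E$ is precisely the missing ingredient in your argument; the atomless hypothesis plays no role here.
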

\begin{proof}
By normalization and cash additivity of $\rho_{t}^{\psi}$, we have $\rho_{T}^{\psi}(X)= -X$, and hence it is enough to show that,
\[
\rho_{t}^{\psi}(X)-  \mathbb{E}\left[-X\mid \sF_{t}\right]> 0, \quad \forall t \in \mathcal{T}\backslash \{T\}.
\]
Let $a$ and $b$ be the essential infimum and essential supremum of $X$ respectively. Since $X$ is not a constant, then  $a<  b$.
If $X$ is continuous, then there exists $E \subset [-b, -a]$ of positive Lebesgue measure such that
$$
\bP(-X > y\mid\sF_t)(\omega) \in (0,1),
$$
for any $y \in E$ and $\omega \in \Omega \backslash N^X$.  If $X$ is discrete, then without loss of generality, assume $X$ is a binary random variable taking values $a$ and $b$ with strict positive probabilities. Hence, for any $y \in [-b,-a)$, \ $\omega \in \Omega \backslash N^X$, we have that $ \bP(-X > y\mid\sF_t)(\omega) = \mathbb{P}(X= a\mid\sF_t)(\omega) \in (0,1) $, and thus we put $E=[-b,-a)$ in this case. 

In view of Lemma~\ref{le:disprop1}, there exists $E \subset [-b, -a]$ with positive Lebesgue measure, so that for any $y \in E$ and $\omega \in \Omega \backslash N^X$ 
\begin{equation} \label{eq:sup-1}
\psi(\mathbb{P}(-X > y\mid\sF_t)(\omega))- \mathbb{P}(-X > y\mid\sF_t)(\omega) > 0. \\
\end{equation}

By \eqref{eq:sup-1}, for any $ t \in \mathcal{T}\backslash \{T\} $ and $\omega \in \Omega \backslash N^X$,
\begin{align*}
 \rho_{t}^{\psi}(X)(\omega)-  & \mathbb{E}\left[-X\mid \sF_{t}\right](\omega)
 = \int_{[0,\infty)}   \psi(\mathbb{P}(-X > y\mid\sF_t)(\omega))   \dif y \\
 & + \int_{(-\infty,0)}  [ \psi(\mathbb{P}(-X > y\mid\sF_t)(\omega)) -1 ]  \dif y-  \mathbb{E}\left[-X\mid \sF_{t}\right](\omega) \\
 =& \int_{[0,\infty)}   \psi(\mathbb{P}(-X > y\mid\sF_t)(\omega))   \dif y + \int_{(-\infty,0)} [ \psi(\mathbb{P}(-X > y\mid\sF_t)(\omega)) -1 ]  \dif y \\
 &-  \int_{[0,\infty)}   \mathbb{P}(-X > y\mid\sF_t)(\omega)  \dif y - \int_{(-\infty,0)}  [ \mathbb{P}(-X > y\mid\sF_t)(\omega) -1 ]  \dif y  \\
 =&  \int_{[0,\infty)}  \left[ \psi(\mathbb{P}(-X > y\mid\sF_t)(\omega))- \mathbb{P}(-X > y\mid\sF_t)(\omega) \right]  \dif y  \\
 & + \int_{(-\infty,0)} \left[  \psi(\mathbb{P}(-X > y\mid\sF_t)(\omega)) -\mathbb{P}(-X > y\mid\sF_t)(\omega)  \right] \dif y \\
 \geq&  \int_{E}  \left[  \psi(\mathbb{P}(-X > y\mid\sF_t)(\omega))- \mathbb{P}(-X > y\mid\sF_t)(\omega) \right]  \dif y   >   0.
\end{align*}
This concludes the proof.
\end{proof}

As a direct consequence of Proposition~\ref{prop:nonsupermtg}, we have the following result on super-martingale time consistency of $\rho^\psi$.
\begin{theorem}
For any  $\psi$ except the identity distortion function, $\rho^\psi$ does not satisfy super-martingale time consistency on $L^\infty$. In other words, for any  $\psi$ except the identity distortion function, there always exists $A \in \sF_t$ with $\bP(A)> 0$, such that on $A$,
\begin{align*}
\rho_{t}^{\psi}(X) >  \bE\left[\rho_{s}^{\psi}(X)\mid \sF_{t}\right],
\end{align*}
for some  $X \in L^\infty$  and some  $t, s \in \mathcal{T}$, $s> t $.
\end{theorem}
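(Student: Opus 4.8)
The plan is to recognize that this Theorem is an immediate corollary of Proposition~\ref{prop:nonsupermtg}, obtained by specializing to the terminal time $s=T$ and unwinding the definition of super-martingale time consistency. No new estimates are needed; the substantive inequality has already been established.

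First I would recall that, by definition, $\rho^\psi$ is super-martingale time consistent precisely when $\rho_t^\psi(X) \leq \bE[\rho_s^\psi(X)\mid\sF_t]$ holds $\bP$-a.s. for every $X \in L^\infty$ and all $t,s\in\mathcal{T}$ with $t \leq s$. Consequently, to exhibit failure of this property it suffices to produce a single triple $(X,t,s)$ with $s>t$ together with a set $A\in\sF_t$ of positive probability on which the reverse \emph{strict} inequality $\rho_t^\psi(X) > \bE[\rho_s^\psi(X)\mid\sF_t]$ holds.

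Next I would fix any non-constant $X \in L^\infty$, which exists since the underlying space is assumed atomless, and take $s=T$ together with an arbitrary $t\in\mathcal{T}\setminus\{T\}$. Because $\rho^\psi$ is normalized and cash-additive, one has $\rho_T^\psi(X) = -X$, so that $\bE[\rho_s^\psi(X)\mid\sF_t] = \bE[-X\mid\sF_t] = \bE[\rho_T^\psi(X)\mid\sF_t]$. Proposition~\ref{prop:nonsupermtg} then asserts exactly that
\[
\rho_t^\psi(X) > \bE[\rho_T^\psi(X)\mid\sF_t]
\]
for every $\omega \in \Omega\setminus N^X$, that is, $\bP$-almost surely. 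Taking $A$ to be the set where this strict inequality holds gives the desired violation.

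Since the quantitative work is supplied entirely by Proposition~\ref{prop:nonsupermtg}, there is no genuine obstacle; the only point deserving a moment's care is the measurability of the witnessing set. This is routine: both $\rho_t^\psi(X)$ and $\bE[\rho_T^\psi(X)\mid\sF_t]$ are $\sF_t$-measurable, so the event $A=\{\rho_t^\psi(X) > \bE[\rho_T^\psi(X)\mid\sF_t]\}$ belongs to $\sF_t$, and by the proposition it has full measure, whence $\bP(A)=1>0$. This is precisely the asserted failure of super-martingale time consistency, completing the proof.
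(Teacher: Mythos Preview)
Your proof is correct and takes essentially the same approach as the paper, which also derives the theorem as a direct consequence of Proposition~\ref{prop:nonsupermtg} by choosing $s=T$. You supply a few extra details (existence of a non-constant $X$ via atomlessness, and measurability of the witnessing set $A$) that the paper leaves implicit, but the argument is the same.
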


\subsubsection{Weak acceptance time consistency}\label{sec:weakacceptance}

In this section we show that, in general, DCRMs generated by distortion functions are not even weakly acceptance time consistent, in the sense of the following result.

\begin{theorem} \label{thm:nonweak}
	For any probability measure $\mu\in\cP[0, 1]\setminus \set{\delta_1}$ such that $\psi_{\mu}$ is absolutely continuous,  there exists a filtered probability space,  such that $\rho^{\psi_{\mu}}$ is not weakly acceptance time consistent.
\end{theorem}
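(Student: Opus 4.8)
\emph{Plan.} Unwinding the definition recorded in Appendix~\ref{append:defs}, weak acceptance time consistency of $\rho^{\psi_\mu}$ is the requirement that $\rho^{\psi_\mu}_t(X)\le \esssup(\rho^{\psi_\mu}_s(X)\mid\sF_t)$ for all $t\le s$ and $X\in L^\infty$; it is the ``acceptance'' mirror of the weak rejection bound $\rho^{\psi_\mu}_t(X)\ge\essinf(\rho^{\psi_\mu}_s(X)\mid\sF_t)$ that Theorem~\ref{th:submtg} already yields. To refute it I would build a three-date counterexample on $\cT=\{0,1,2\}$ with a two-branch tree, taking $t=0$, $s=1$: since $\sF_0$ is trivial the goal becomes the strict inequality $\rho^{\psi_\mu}_0(X)>\max$ of the two values of $\rho^{\psi_\mu}_1(X)$. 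I stress at the outset that the failure of super-martingale time consistency (Proposition~\ref{prop:nonsupermtg}) is \emph{not} enough here, since it only produces $\rho_t>\bE[\rho_s\mid\sF_t]$, whereas now $\rho_0$ must dominate the essential \emph{supremum} of $\rho_1$. The mechanism I would exploit instead is that, writing $\psi:=\psi_\mu$ and reading off the Choquet representation \eqref{eq:ChoquetDCRM}, $\rho^\psi$ is a \emph{concave} functional of the law of $X$: survival functions mix linearly across branches and $\psi$ is concave, so a genuinely non-degenerate mixture of two laws can carry strictly more distortion risk than either ingredient.

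Concretely, since $\mu\neq\delta_1$ the distortion $\psi$ is concave, continuous and $\psi\neq\mathrm{Id}$, hence not affine on $[0,1]$; I would fix $0<q<p<1$ on which $\psi$ is non-affine, choose any $a>0$, and set $c:=a\,\psi(p)/\psi(q)>a$ (here $\psi(q)>0$ and $\psi(p)>\psi(q)$ follow from concavity together with Lemma~\ref{le:disprop1}). On the tree let the time-$1$ ``up'' node have probability $\lambda$ and carry the conditional law $X=-a$ with probability $p$ and $X=0$ otherwise, and let the ``down'' node carry $X=-c$ with probability $q$ and $X=0$ otherwise. On each node the conditional law is deterministic, so $\rho^\psi_1(X)$ equals the static distortion risk of that branch, namely $a\psi(p)$ on the up node and $c\psi(q)$ on the down node; the choice of $c$ makes these equal to a common value $v$, whence $\esssup(\rho^\psi_1(X)\mid\sF_0)=v$.

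It remains to compute $\rho^\psi_0(X)$ as a function of the branch weight $\lambda$. Since $\sF_0$ is trivial the law of $X$ is the mixture, and \eqref{eq:ChoquetDCRM} gives (the negative part vanishes because $X\le 0$, so each $\rho$ reduces to $\int_0^\infty\psi(\cdot)\,dy$)
\begin{equation*}
 f(\lambda):=\rho^\psi_0(X)=a\,\psi\!\big(\lambda p+(1-\lambda)q\big)+(c-a)\,\psi\!\big((1-\lambda)q\big).
\end{equation*}
This $f$ is continuous and concave on $[0,1]$ with $f(0)=c\psi(q)=v=a\psi(p)=f(1)$, and it is \emph{not} affine: its first summand is a non-affine reparametrization of $\psi|_{[q,p]}$, and a sum of concave functions is affine only when each summand is. A concave, non-affine function with equal endpoint values strictly exceeds that value at some interior point, so there is $\lambda^\star\in(0,1)$ with $f(\lambda^\star)>v$. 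Assigning the up node probability $\lambda^\star$ then gives $\rho^\psi_0(X)=f(\lambda^\star)>v=\esssup(\rho^\psi_1(X)\mid\sF_0)$, which is exactly the required violation.

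The crux, and the step I expect to need the most care, is the passage from the super-martingale failure to beating the conditional essential supremum: this is what forces the three-date, concave-bulge construction rather than a one-step argument. A secondary point is to keep the argument uniform over \emph{all} admissible $\mu$, including distortions such as the $\avar$ one for which $\psi$ is piecewise linear and derivative-based reasoning degenerates; equalizing the two branch risks and invoking only ``concave $+$ non-affine $+$ equal endpoints $\Rightarrow$ interior strict bulge'' sidesteps any need for smoothness or strict concavity of $\psi$. The remaining bookkeeping (finiteness on the bounded tree and the reduction of $\rho^\psi_1$ on each node to a static distortion risk) is routine.
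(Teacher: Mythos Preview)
Your argument is correct and takes a genuinely different route from the paper's. The paper splits into two cases according to whether $\mu$ lies in the one-parameter family $\cP'=\{\tfrac{a-1}{a}\delta_{1/(a+1)}+\tfrac1a\delta_1:a\ge1\}$. For $\mu\notin\cP'$ it proves a delicate inequality (Lemma~\ref{lemma:nonweak-1}) for the quantity $m_\mu=\int_0^1 z\,\psi_\mu'(z)\,dz$, then builds a counterexample from a uniform random variable with a carefully chosen $\sF_1$ (Lemmas~\ref{lemma:nonweak-2}--\ref{lemma:nonweak-3}); for $\mu\in\cP'$ it uses a separate four-atom construction (Lemma~\ref{lemma:nonweak-4}) via the $\avar$ representation. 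All of this leans on derivative computations with $\psi_\mu'$. Your approach bypasses the case split entirely: the single observation that $\rho^\psi_0$, viewed through \eqref{eq:ChoquetDCRM}, is a concave functional of the survival function lets you equalize the two branch risks and invoke a one-line ``concave, non-affine, equal endpoints $\Rightarrow$ interior strict bulge'' argument. This is shorter, requires no differentiability of $\psi$ (only regularity), and is uniform over all $\mu\neq\delta_1$. What the paper's approach buys in return is explicitness: its counterexamples come with concrete numbers, whereas your $\lambda^\star$ is only asserted to exist.

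One small point to tighten: the ``unwinding'' of the appendix definition into $\rho_t(X)\le\esssup(\rho_s(X)\mid\sF_t)$ is correct for cash-additive maps, but the appendix only records the implication $\rho_s(X)\le0\Rightarrow\rho_t(X)\le0$. Either state the (easy) equivalence via cash-additivity, or simply replace your $X$ by $X+v$ at the end: then $\rho_1(X+v)=0$ while $\rho_0(X+v)=f(\lambda^\star)-v>0$, which violates the appendix definition verbatim. Also, your parenthetical ``$\psi(p)>\psi(q)$ follow[s] from concavity together with Lemma~\ref{le:disprop1}'' is slightly off; the clean reason is that your choice of $[q,p]$ already rules out $\psi$ being constant (hence affine) there.
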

The proof of Theorem~\ref{thm:nonweak} follows from a series of lemmas below.

\medskip
In view of Remark~\ref{rem:psi2mu}, Theorem~\ref{thm:nonweak} is equivalent to the statement that, for any absolutely continuous regular distortion $\psi$ except for the identity function,  there exists a filtered probability space,  such that $\rho^{\psi}$ is  not weakly acceptance time consistent.

\begin{lemma} \label{lemma:nonweak-1}
	For any probability measure $\mu\in\cP([0, 1])\setminus \cP'$, with
\[
		\cP' := \Set{\frac{a-1}{a} \delta_{\frac{1}{a+1}}+ \frac{1}{a} \delta_{1} \Mid  a \geq 1 },
\]	
and such that $\psi_{\mu}$ is absolutely continuous, we have
	\begin{equation} \label{eq:weaktcaux1}
		\psi_{\mu}\left( \int_{(0,1)}  z \psi_{\mu}^{'}(z)  dz \right)+ \int_{(0,1)}  z \psi_{\mu}^{'}(z)  dz-1 < 0.
	\end{equation}
\end{lemma}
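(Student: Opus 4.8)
The plan is to reduce the entire statement to a single scalar quantity together with one sharp pointwise inequality. First I would abbreviate $I := \int_{(0,1)} z\,\psi_{\mu}'(z)\,\dif z$ and use the representation $\psi_{\mu}'(z)=\int_{(z,1]}\tfrac1s\,\mu(\dif s)$, which holds a.e.\ because $\psi_\mu$ is absolutely continuous (equivalently $\mu(\{0\})=0$, cf.\ Remark~\ref{rem:psi2mu}). Applying Fubini's theorem and $\int_0^s z\,\dif z=s^2/2$ gives the clean identity $I=\tfrac12\int_{(0,1]} s\,\mu(\dif s)=:m/2$; that is, $I$ is half the mean $m$ of $\mu$. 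Since $\delta_1\in\cP'$ (take $a=1$) and $\mu\notin\cP'$, we have $\mu\neq\delta_1$, hence $m<1$ and $I<1/2$.

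Next I would rewrite the left-hand side of \eqref{eq:weaktcaux1} in a form that exposes the mean. Using $\psi_\mu(I)=\int_{[0,I]}\psi_\mu'(z)\,\dif z$ and Fubini once more yields $\psi_\mu(I)=\int_{(0,1]}\tfrac{\min(I,s)}{s}\,\mu(\dif s)$; subtracting this from the total mass $1=\int_{(0,1]}\mu(\dif s)$ gives
\[
1-\psi_\mu(I)=\int_{(0,1]}\Big(1-\tfrac{I}{s}\Big)^{+}\mu(\dif s),
\]
where $(x)^{+}:=\max(x,0)$. Thus \eqref{eq:weaktcaux1} is equivalent to the strict inequality $\int_{(0,1]}\big(1-I/s\big)^{+}\mu(\dif s)>I$.

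The heart of the argument is a sharp linear minorant of the integrand: for every $s\in(0,1]$,
\[
\Big(1-\tfrac{I}{s}\Big)^{+}\ \ge\ s-I,
\]
with equality exactly at $s\in\{I,1\}$. Indeed, for $s\le I$ the left side is $0\ge s-I$, while for $s\ge I$ the inequality, after multiplying by $s>0$, reads $(s-1)(s-I)\le 0$, which holds on $[I,1]$ and is strict on $(I,1)$. Integrating against $\mu$ and using $\int_{(0,1]} s\,\mu(\dif s)=m=2I$ gives $1-\psi_\mu(I)\ge m-I=I$, i.e.\ the non-strict bound $\psi_\mu(I)+I-1\le 0$.

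Finally I would upgrade this to a strict inequality through the equality analysis. Since the gap $(1-I/s)^{+}-(s-I)$ is nonnegative and vanishes only at $s\in\{I,1\}$, equality forces $\mu$ to be supported on $\{I,1\}$, say $\mu=p\,\delta_I+(1-p)\,\delta_1$. The mean constraint $pI+(1-p)=2I$ then determines $p=\tfrac{1-2I}{1-I}$ and $1-p=\tfrac{I}{1-I}$; writing $a:=\tfrac{1-I}{I}\ge 1$ (so $I=\tfrac{1}{a+1}$) one checks $p=\tfrac{a-1}{a}$ and $1-p=\tfrac1a$, i.e.\ $\mu\in\cP'$. As $\mu\notin\cP'$ by hypothesis, the inequality must be strict, which proves \eqref{eq:weaktcaux1}. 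The main obstacle is discovering the correct minorant $s\mapsto s-I$ (the chord through $(I,0)$ and $(1,1-I)$, noting that $(1-I/s)^{+}$ is itself neither convex nor concave); once this is in hand the Fubini reductions are routine, and the only delicate point is verifying that the equality case of the minorant pins down exactly the family $\cP'$.
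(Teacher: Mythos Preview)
Your proof is correct and takes a genuinely different route from the paper's. The paper argues geometrically: it first writes $m_\mu = 1-\int_{(0,1)}\psi_\mu(z)\,\dif z$ via integration by parts, and then observes that the area under the concave curve $\psi_\mu$ dominates the area of the two triangles with vertices $O=(0,0)$, $A=(m_\mu,\psi_\mu(m_\mu))$, $B=(m_\mu,0)$, $C=(1,1)$; this gives the non-strict inequality. Equality forces $\psi_\mu$ to be piecewise linear with a single kink at $m_\mu$, and the paper then translates this back through $\psi'_\mu$ to pin down $\mu\in\cP'$, which takes several steps. Your approach instead works directly at the level of $\mu$: you express both sides in terms of $\mu$ using $I=\tfrac12\int s\,\mu(\dif s)$ and $1-\psi_\mu(I)=\int (1-I/s)^+\,\mu(\dif s)$, and reduce the inequality to the elementary pointwise minorant $(1-I/s)^+\ge s-I$, whose equality set $\{I,1\}$ immediately identifies the extremal measures. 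Your route avoids the figure, and the equality analysis is noticeably cleaner since the support condition on $\mu$ falls out in one line rather than requiring the paper's reconstruction of $\mu$ from the shape of $\psi_\mu'$. The paper's approach, on the other hand, makes the role of concavity of $\psi_\mu$ more transparent and gives geometric intuition for why the critical family $\cP'$ corresponds precisely to two-piece linear distortions.
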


\noindent
\textit{Proof.}
Let $m_{\mu}:=  \int_{(0,1)}  z \psi_{\mu}^{'}(z)  dz$. Then, after integration by parts, we obtain

\begin{align}  \label{eq:weaktc-1}
m_{\mu} =  1- \int_{(0,1)}  \psi_{\mu}(z) dz.
\end{align}
With this at hand, \eqref{eq:weaktcaux1}  becomes
\begin{align}  \label{eq:weaktc-2}
		\psi_{\mu}\left( m_{\mu} \right)+ m_{\mu}-1 < 0,
\end{align}
which will be proved in three steps.

We first prove a mild version of \eqref{eq:weaktc-2}, namely
	\begin{equation}\label{eq:weaktc-3}
		\psi_{\mu}\left( m_{\mu} \right)+ m_{\mu}-1 \leq 0.
	\end{equation}
for any $\mu\in\cP([0,1])$ such that $\psi_{\mu}$ is absolutely continuous.

\begin{wrapfigure}{r}{0.5\textwidth}
	\includegraphics[width=0.45\textwidth]{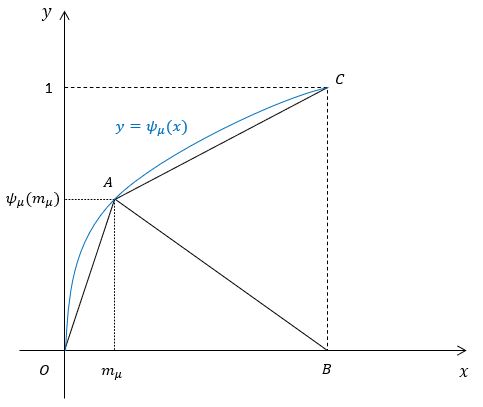}
	\caption{}
	\label{fig:weaktc}
\end{wrapfigure}
	
In view of Figure~\ref{fig:weaktc},  we note that the sum of area of the triangles $\Delta AOB$ and $\Delta ABC$ is $\frac{1}{2} \psi_{\mu}\left( m_{\mu} \right)+ \frac{1}{2}(1-m_{\mu})$. On the other hand, due to the concavity of function $\psi_{\mu}$, this area is smaller than the area under the curve  $y=\psi_{\mu}(x)$, for $x\in[0,B]$, and thus
	$$
		\int_{(0,1)}  \psi_{\mu}(z) dz \geq \frac{1}{2} \psi_{\mu}\left( m_{\mu} \right)+ \frac{1}{2}(1-m_{\mu}).
	$$
By \eqref{eq:weaktc-1},  the last inequality implies that
$$
1- m_{\mu} \geq \frac{1}{2} \psi_{\mu}\left( m_{\mu} \right)+ \frac{1}{2}(1-m_{\mu}),
$$
and thus \eqref{eq:weaktc-3} is proved.

In the next step we will show that equality in \eqref{eq:weaktc-3} holds if and only if $\mu\in\cP'$.
	
Assume that equality in \eqref{eq:weaktc-3} holds. Clearly, in view of the above and Figure~\ref{fig:weaktc}, this is equivalent that the graph of $y=\psi_{\mu}(x)$, coincides with line segment $OA$, for $x\in[0,m_\mu]$, and with line segment $AC$, for $x\in[m_\mu,B]$.
Hence,  equality in \eqref{eq:weaktc-3}  holds only if,
	\begin{equation}  \label{eq:weaktc-4}
		\psi^{\prime}_{\mu}\left(z \right) =
		\begin{cases}
			\frac{\psi_{\mu}(m_{\mu})}{m_{\mu}},  & z \in (0, m_{\mu})\\
			\frac{1- \psi_{\mu}(m_{\mu})}{1- m_{\mu}}, & z \in [m_{\mu}, 1).
		\end{cases}
	\end{equation}
By \eqref{eq:regdistortion}, $\psi^{\prime}_{\mu}(z)= \int_{(z,1]} \frac{1}{s }  \mu(ds)$, and using $\psi_{\mu}\left( m_{\mu} \right)= 1-m_{\mu}$,  \eqref{eq:weaktc-4}  becomes,
	\begin{equation}  \label{eq:weaktc-5}
		\int_{(z,1]} \frac{1}{s }  \mu(ds) =
		\begin{cases}
			\frac{1-m_{\mu}}{m_{\mu}},  & z \in (0, m_{\mu})\\
			\frac{m_{\mu}}{1-m_{\mu}}, & z \in [m_{\mu}, 1).
		\end{cases}
	\end{equation}
Note that $\mu$ satisfies \eqref{eq:weaktc-5} only if $\mu= c\delta_{m_{\mu}}+ (1-c)\delta_{1}$, for some $c \in (0, 1)$. Indeed, otherwise there exists an interval $[l, k] \subset (0, m_{\mu})$  or $[l, k] \subset\ (m_{\mu}, 1)$, such that $\mu([l, k])> 0$. If $[l, k] \subset (0, m_{\mu})$, then for any $z_1 \in (0, l)$ and $z_2 \in (k, m_{\mu})$, we have $\int_{(z_2,1]} \frac{1}{s }  \mu(ds)< \int_{(z_1,1]} \frac{1}{s }  \mu(ds)$, which contradicts \eqref{eq:weaktc-5}.  The case $[l, k] \subset (m_{\mu}, 1)$ is treated similarly.
	
Now with $\mu= c\delta_{m_{\mu}}+ (1-c)\delta_{1}$, using \eqref{eq:weaktc-5} we deduce
	\begin{equation} \label{eq:weaktc-6}
		\begin{cases}
			c+  m_{\mu}(1- c)= 1- m_{\mu}\\
			(1-m_{\mu}) (1-c)    = m_{\mu}
		\end{cases}.
	\end{equation}
Solving  \eqref{eq:weaktc-6} for $c$, we obtain that $\mu= \frac{1- 2m_{\mu}}{1- m_{\mu}} \delta_{m_{\mu}}+ \frac{m_{\mu}}{1-m_{\mu}} \delta_{1}$.
Note that by Lemma~\ref{le:disprop1} and  since $\psi_{\mu}\left( m_{\mu} \right)+ m_{\mu}-1= 0$, we have
\begin{align*}
0< \frac{m_{\mu}}{1- m_{\mu}}= \frac{1- \psi_{\mu}(m_{\mu})}{1- m_{\mu}} \leq \frac{1- m_{\mu}}{1- m_{\mu}}= 1,
\end{align*}
and thus  $a:= \frac{1-m_{\mu}}{m_{\mu}} \geq 1$. Consequently, $\mu= \frac{a-1}{a} \delta_{\frac{1}{a+1}}+ \frac{1}{a} \delta_{1}\in\cP'$.
	
Finally we show that if $\mu\in\cP'$ then \eqref{eq:weaktc-3} becomes equality.
We start by representing  $m_{\mu}$ in terms of $\mu$,
	\begin{align}
			m_{\mu}&=\int_{(0,1)}  z \psi_{\mu}^{'}(z)  \dif z= \int_{(0,1)}  z \int_{(z,1]} \frac{1}{s} \mu(ds)  \dif z= \int_{(0,1)}  z \int_{(0,1]} \1_{\{z<s\}} \frac{1}{s} \mu(ds)  \dif z\\
			&= \int_{(0,1]} \frac{1}{s}  \int_{(0,1)} z  \1_{\{z<s\}}   \dif z \mu(\dif s)= \frac{1}{2}  \int_{(0,1]} s \mu(\dif s). \label{eq:weaktc-6-2}
	\end{align}
Using the form of $\mu\in\cP'$, the last equality implies that $m_{\mu}= \frac{1}{a+1}$. From here, and \eqref{eq:regdistortion}, by direct calculations we deduce
\begin{align*}
		\psi_{\mu}\left( m_{\mu} \right)+ m_{\mu}-1 &=  \int_{(0,m_{\mu}]} \int_{(z,1]} \frac{1}{s} \mu(\dif s)  \dif z + m_{\mu}-1\\
		&=   \int_{(0,1]} \1_{z \leq m_{\mu}} \int_{(0,1]} \1_{\{z<s\}} \frac{1}{s} \mu(\dif s)  \dif z + m_{\mu}-1\\
		&=   \int_{(0,1]} \frac{1}{s} \int_{(0,1]} \1_{z < \min \{ m_{\mu}, s \} }  \dif z  \mu(ds)  + m_{\mu}-1\\
		&= \int_{(0,1]}  \min \left\{ \frac{m_{\mu}}{s}, 1 \right\}  \mu(\dif s) + m_{\mu}-1 \\
		&= \int_{(0,m_{\mu}]}   \mu(\dif s)+ \int_{(m_{\mu}, 1]}  \frac{m_{\mu}}{s}  \mu(\dif s) + m_{\mu}-1 \\
		&= \int_{(0,\frac{1}{a+1}]}   \mu(\dif s)+ \frac{1}{a+1} \int_{(\frac{1}{a+1}, 1]}  \frac{1}{s}  \mu(\dif s) + \frac{1}{a+1}-1 \\
		&= \frac{a-1}{a} + \frac{1}{a+1} \frac{1}{a}  + \frac{1}{a+1}-1 \\
		&= 0.
	\end{align*}
This completes the proof.
\hfill $\square$

\begin{lemma} \label{lemma:nonweak-2}
For any probability measure $\mu\in\cP[0, 1]\setminus \cP'$,
and  such that $\psi_{\mu}$ is absolutely continuous, there exist $a, b, c, d \in \mathbb{R}$ satisfying the following conditions:
	\begin{enumerate}[(i)]
		\item  $a <0 < d$, $a< b < c < b$, $ \ 2(c-b)= d-a$;
		\item  $\frac{(a+d)}{a-d}+ \psi_{\mu}\left(\frac{2(b-a)}{d-a}\right) \leq   \int_{(0,1)}  z \psi_{\mu}^{'}(z)  \dif z$;
		\item  $\frac{2b}{a-d}  \leq  \int_{(0,1)}  z \psi_{\mu}^{'}(z)  \dif z$;
		\item $  \int_{(0,1)}  z \psi_{\mu}^{'}(z)  \dif z < \frac{a}{a-d}$.
	\end{enumerate}
\end{lemma}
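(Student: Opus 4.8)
The plan is to reduce the four conditions to a single transparent requirement on two auxiliary parameters, and then to meet that requirement using the strict inequality already secured in Lemma~\ref{lemma:nonweak-1}. Write $m_\mu := \int_{(0,1)} z\,\psi_\mu^{'}(z)\,\dif z$; by \eqref{eq:weaktc-6-2} we have $m_\mu = \tfrac12\int_{(0,1]} s\,\mu(\dif s)$, and since absolute continuity of $\psi_\mu$ forces $\mu(\{0\})=0$ (Remark~\ref{rem:psi2mu}), the mass of $\mu$ sits in $(0,1]$, so that $m_\mu\in(0,\tfrac12]$. The guiding observation is that, after imposing the equality $2(c-b)=d-a$, the data $(a,b,c,d)$ has three degrees of freedom, two of which can be removed by a convenient normalization.

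Concretely, I would fix the scale by taking $d-a=1$ (any positive value works, since all four conditions are invariant under the common scaling of $a,b,c,d$) and parametrize $a=-\alpha$, $d=1-\alpha$, $b=-\alpha+u/2$, $c=-\alpha+(u+1)/2$, with $\alpha>0$ and $u\in(0,1)$. Then $2(c-b)=d-a$ and the ordering $a<b<c<d$ in (i) hold automatically for $u\in(0,1)$, while $a<0<d$ holds as soon as $0<\alpha<1$. A direct substitution gives $\tfrac{a}{a-d}=\alpha$, $\tfrac{2b}{a-d}=2\alpha-u$, $\tfrac{a+d}{a-d}=2\alpha-1$, and $\tfrac{2(b-a)}{d-a}=u$, so that (iv), (iii), (ii) become, respectively,
\[
m_\mu<\alpha,\qquad \alpha\le \tfrac{m_\mu+u}{2},\qquad \alpha\le \tfrac{m_\mu+1-\psi_\mu(u)}{2}.
\]
Hence it suffices to find $u\in(0,1)$ and $\alpha$ with $m_\mu<\alpha\le\min\{\tfrac{m_\mu+u}{2},\,\tfrac{m_\mu+1-\psi_\mu(u)}{2}\}$, and such an $\alpha$ exists precisely when $m_\mu<u$ and $\psi_\mu(u)<1-m_\mu$.

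It then remains to produce such a $u$. By Lemma~\ref{lemma:nonweak-1}, since $\mu\notin\cP'$, we have the \emph{strict} inequality $\psi_\mu(m_\mu)<1-m_\mu$, and $m_\mu\le\tfrac12<1$ leaves the interval $(m_\mu,1)$ nonempty. Because $\psi_\mu$ is continuous (being absolutely continuous), the strict inequality persists on a right neighborhood of $m_\mu$, so I can choose $u\in(m_\mu,1)$ with $\psi_\mu(u)<1-m_\mu$. This $u$ satisfies both $u>m_\mu$ and $\psi_\mu(u)<1-m_\mu$, yielding a nonempty range for $\alpha$; moreover any such $\alpha$ is automatically $<1$, since $\alpha\le\tfrac{m_\mu+u}{2}<1$, whence $d=1-\alpha>0$ as required. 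Reading the substitutions backwards then delivers $a,b,c,d$ meeting (i)–(iv).

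The only genuinely delicate point is the need for $u$ \emph{strictly} larger than $m_\mu$: at $u=m_\mu$ the bound (iii) would force $\alpha\le m_\mu$, contradicting (iv). This is exactly why the strict form of Lemma~\ref{lemma:nonweak-1}, valid because $\mu\notin\cP'$, together with the continuity of $\psi_\mu$, is indispensable — a merely non-strict inequality would not open the required window $(m_\mu,1)$ for $u$. (I also read the chain in (i) as $a<b<c<d$, which is the ordering used throughout; the printed $a<b<c<b$ appears to be a typo.)
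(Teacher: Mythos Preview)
Your argument is correct. The reduction via scale invariance is clean, the substitutions are accurate, and the existence of $u\in(m_\mu,1)$ with $\psi_\mu(u)<1-m_\mu$ follows exactly as you say from Lemma~\ref{lemma:nonweak-1} and the continuity of $\psi_\mu$. You also correctly handle the boundary issues ($0<\alpha<1$, $u\in(0,1)$) needed for (i).

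The paper's proof follows the same core idea but with a slightly different execution. Instead of your two-parameter reduction, the authors work directly: they observe that $f(z):=\psi_\mu(z)+z-1$ satisfies $f(m_\mu)<0$ (Lemma~\ref{lemma:nonweak-1}) and $f(\tfrac12)>0$ (Lemma~\ref{le:disprop1}, using $\psi_\mu\neq\textrm{Id}$ since $\delta_1\in\cP'$), and then invoke the intermediate value theorem to pick $z_0\in(m_\mu,\tfrac12)$ with $f(z_0)=0$. They then write down explicit values $a=-m_\mu-z_0$, $b=-m_\mu$, $c=1-m_\mu$, $d=2-m_\mu-z_0$ and verify (i)--(iv) by substitution. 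In your parametrization this corresponds to $u=z_0$ and $\alpha=\tfrac{m_\mu+z_0}{2}$, which sits exactly on the boundary of your feasible region (both (ii) and (iii) hold with equality). Your approach is arguably tidier: recognizing the scale invariance up front makes the structure transparent, and a plain continuity argument at $m_\mu$ replaces the IVT step. The paper's approach buys an explicit closed-form choice of $a,b,c,d$, which is convenient for the subsequent Lemma~\ref{lemma:nonweak-3}, but either construction feeds equally well into that lemma.
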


\begin{proof}
	Let $m_{\mu}:=  \int_{(0,1)}  z \psi_{\mu}^{'}(z)  \dif z$. Then, using representation \eqref{eq:weaktc-6-2}, and  since $\mu\neq \delta_1 $, we obtain that $0< m_{\mu}< \frac{1}{2}$.
	Define $f: [0,1] \rightarrow  \mathbb{R}$ by $f(z)= \psi_{\mu}\left( z \right)+ z-1$. Lemma~\ref{le:disprop1} implies that $f(\frac{1}{2})= \psi_{\mu}\left( \frac{1}{2} \right)+ \frac{1}{2}-1> 0$, and due to Lemma~\ref{lemma:nonweak-1}, $f(m_{\mu})< 0$. Since $f$ is continuous, by intermediate value theorem, there exits  $z_0 \in (m_{\mu}, \frac{1}{2})$, such that $ f(z_0)= 0$.
	
Take $a, b, c, d$ as follows,
$$
a= -m_{\mu}- z_0, \ b= -m_{\mu}, \ c= 1-m_{\mu},\ d= 2-m_{\mu}-z_0.
$$
	We now verify that $a, b, c, d$ satisfy conditions ($i$)-($iv$).
Clearly,
\begin{align*}
d= 2-m_{\mu}-z_0 >1 > c= 1-m_{\mu} >0 > b= -m_{\mu} > a= -m_{\mu}-z_0, \\
d- c= 1-z_0> 0,  \quad c-b= 1 > 0, \quad  b-a= z_0> 0, \quad   c-b=  \frac{d-a}{2},
\end{align*}
and thus (i) is satisfied.
As far as (ii) is concerned,
\begin{align*}
		\frac{(a+d)}{a-d}+ \psi_{\mu}\left(\frac{2(b-a)}{d-a}\right)= m_{\mu}+z_0 -1 + \psi_{\mu} \left(z_0 \right) \leq m_{\mu}.
\end{align*}
Relations (iii) and (iv) follow directly by substituting the above values of $a,b,c,d$.
\end{proof}

\begin{lemma}  \label{lemma:nonweak-3}
Let  $a, b, c, d \in \mathbb{R}$ satisfy conditions ($i$)-($iv$) in Lemma~\ref{lemma:nonweak-2}. Take  $X \sim U[a, d]$ and construct the filtration $\mathbb{F}= (\sF_t)_{t \in \{0, 1, 2 \}}$ as follows
$$
\sF_0 = \{ \emptyset, \Omega \}, \quad \sF_{1}= \sigma \left( \left\{ X^{-1}\left([a, b) \cup  [c, d] \right) \right\}, \left\{ X^{-1}\left([b, c) \right) \right\} \right), \quad  \sF_{2}= \sigma(X).
$$
Then, on this filtered probability space, for any probability measure $\mu\in\cP[0,1]\setminus\cP'$ and such that $\psi_{\mu}$ is absolutely continuous,
	\begin{align*}
		\rho_{1}^{\psi_{\mu}}(X) \leq 0 \  \Rightarrow \ \rho_{0}^{\psi_{\mu}}(X)  >0.
	\end{align*}
\end{lemma}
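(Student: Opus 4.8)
The plan is to compute both $\rho_1^{\psi_\mu}(X)$ and $\rho_0^{\psi_\mu}(X)$ explicitly via the integration-by-parts representation \eqref{eq:ChoquetDCRM-2}, and then show that conditions (ii) and (iii) force $\rho_1^{\psi_\mu}(X)\le 0$ on all of $\Omega$ while condition (iv) forces $\rho_0^{\psi_\mu}(X)>0$; the implication is then immediate. The basic building block I would establish first is a closed form for $\rho^{\psi_\mu}$ of a uniform law: if $V$ is uniform on $[\alpha,\beta]$, then substituting $u=F_V(y)$ in \eqref{eq:ChoquetDCRM-2} and using $\int_{(0,1)}u\,\dif\psi_\mu(u)=\int_{(0,1)}u\,\psi_\mu'(u)\,\dif u=m_\mu$ (legitimate since $\psi_\mu$ is absolutely continuous), one obtains $\rho^{\psi_\mu}(V)=-\alpha-(\beta-\alpha)m_\mu$, where $m_\mu:=\int_{(0,1)}z\,\psi_\mu'(z)\,\dif z$ as in Lemma~\ref{lemma:nonweak-1}.

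Then I would dispose of $\rho_0^{\psi_\mu}(X)$ and the easy atom of $\sF_1$. Since $\sF_0$ is trivial and $X\sim U[a,d]$, the uniform formula gives $\rho_0^{\psi_\mu}(X)=-a-(d-a)m_\mu$, and a one-line rearrangement shows that $\rho_0^{\psi_\mu}(X)>0$ is exactly condition (iv), namely $m_\mu<\tfrac{a}{a-d}$. For $\sF_1$, observe that condition (i), $2(c-b)=d-a$, makes the two atoms $A_1=X^{-1}([a,b)\cup[c,d])$ and $A_2=X^{-1}([b,c))$ each of probability $\tfrac12$. On $A_2$ the regular conditional law of $X$ is $U[b,c)$, so the uniform formula yields $\rho_1^{\psi_\mu}(X)=-b-\tfrac{d-a}{2}m_\mu$ there, and the inequality $\rho_1^{\psi_\mu}(X)\le 0$ on $A_2$ is precisely condition (iii).

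The crux is the atom $A_1$, where the conditional law of $X$ is uniform on the \emph{disconnected} set $[a,b)\cup[c,d]$, so its conditional distribution function $G$ is piecewise linear with a flat stretch on $[b,c)$. Writing $p:=\tfrac{2(b-a)}{d-a}=G(b^-)=G(c)$, the flat part contributes nothing to \eqref{eq:ChoquetDCRM-2}, and the two linear pieces transform under $u=G(y)$ into integrals over $(0,p)$ and $(p,1)$. Peeling off the constants via $\int_0^p\psi_\mu'=\psi_\mu(p)$, $\int_p^1\psi_\mu'=1-\psi_\mu(p)$, and $\int_0^1 u\,\psi_\mu'=m_\mu$, and using $c-b=\tfrac{d-a}{2}$ to collapse the algebra, I expect the conditional risk on $A_1$ to reduce to
\begin{equation*}
\rho_1^{\psi_\mu}(X)=\frac{d-a}{2}\bigl(\psi_\mu(p)-1-m_\mu\bigr)-a \qquad\text{on } A_1.
\end{equation*}
Dividing by the positive number $\tfrac{d-a}{2}$ and using $1+\tfrac{2a}{d-a}=\tfrac{a+d}{d-a}=-\tfrac{a+d}{a-d}$, the inequality $\rho_1^{\psi_\mu}(X)\le 0$ on $A_1$ becomes $\psi_\mu\bigl(\tfrac{2(b-a)}{d-a}\bigr)+\tfrac{a+d}{a-d}\le m_\mu$, i.e. exactly condition (ii). The main obstacle is precisely this $A_1$ computation: one must set up the regular conditional distribution on the union of intervals correctly (so that the change of variables is valid and the flat segment is genuinely discarded), and then track the boundary terms carefully, since a single sign slip would destroy the exact correspondence with (ii).

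Finally, I would assemble the pieces: conditions (ii) and (iii) give $\rho_1^{\psi_\mu}(X)\le 0$ on $A_1$ and on $A_2$ respectively, hence on all of $\Omega$, while condition (iv) gives $\rho_0^{\psi_\mu}(X)>0$. This establishes $\rho_1^{\psi_\mu}(X)\le 0\Rightarrow\rho_0^{\psi_\mu}(X)>0$, and moreover exhibits a genuine, non-vacuous failure of weak acceptance time consistency, since under the construction the antecedent actually holds. Throughout, all conditional quantities are read off the regular conditional distribution on $\Omega\setminus N^X$, and \eqref{eq:ChoquetDCRM-2} applies because $\psi_\mu$ is a regular (concave, continuous) distortion.
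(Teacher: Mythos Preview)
Your proposal is correct and follows essentially the same route as the paper: compute $\rho_1^{\psi_\mu}(X)$ explicitly on each of the two $\sF_1$-atoms and $\rho_0^{\psi_\mu}(X)$ on the trivial atom, then identify the resulting inequalities with conditions (ii), (iii), (iv) respectively. The only cosmetic difference is that the paper carries out the computation via the quantile representation $\rho_t^{\psi_\mu}(X)=-\int_{(0,1)} q_z^{+}(X\mid\sF_t)\,\psi_{\mu,+}'(z)\,\dif z$ of Lemma~\ref{le:altdwar}, whereas you use the equivalent Choquet form \eqref{eq:ChoquetDCRM-2}; your closed form on $A_1$, namely $\frac{d-a}{2}\bigl(\psi_\mu(p)-1-m_\mu\bigr)-a$, coincides with the paper's $-\frac{a+d}{2}+\frac{d-a}{2}\psi_\mu(p)-\frac{d-a}{2}m_\mu$.
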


\begin{proof}
	By  Lemma~\ref{lemma:nonweak-2} condition ($i$), $ \ 2(c-b)= d-a$, two events in $\sF_{1}$ have same probability, i.e.
	\[
	\mathbb{P}\left(\left\{ X^{-1}\left([a, b) \cup  [c, d] \right) \right\}\right)=  \mathbb{P}\left(\left\{ X^{-1}\left( [b, c) \right)  \right\}\right)= \frac{1}{2}.
	\]
We first show that $\rho_{1}^{\psi_{\mu}}(X) \leq 0 $ on the event $\left\{ X^{-1}\left([a, b) \cup  [c, d] \right) \right\}$.
\begin{align*}
\rho_{1}^{\psi_{\mu}}(X)&= \int_{(0,\frac{2(b-a)}{d-a})}  q^{+}_{1-z}(-X \mid \sF_1)  \psi_{\mu}^{'}(z)  \dif z+ \int_{[\frac{2(b-a)}{d-a},1)}  q^{+}_{1-z}(-X \mid \sF_1)  \psi_{\mu}^{'}(z)  \dif z \\
&= \int_{(0,\frac{2(b-a)}{d-a})}  \left( \frac{(1-z)(d-a)}{2}- (d-c)-b \right) \psi_{\mu}^{'}(z)  \dif z \\
&\quad + \int_{[\frac{2(b-a)}{d-a},1)}  \left(\frac{(1-z)(d-a)}{2}- d \right)  \psi_{\mu}^{'}(z)  \dif z\\
&= \int_{(0,\frac{2(b-a)}{d-a})}  \left( -\frac{(d-a)z}{2}- a \right) \psi_{\mu}^{'}(z)  \dif z+ \int_{[\frac{2(b-a)}{d-a},1)}  \left(-\frac{(d-a)z}{2}- \frac{(a+d)}{2} \right)  \psi_{\mu}^{'}(z)  \dif z\\
&= - a \psi_{\mu}\left(\frac{2(b-a)}{d-a}\right)-\frac{(d-a)}{2} \int_{(0,\frac{2(b-a)}{d-a})}  z \psi_{\mu}^{'}(z)  \dif z\\ 
& - \frac{(a+d)}{2} \left[1-\psi_{\mu}\left(\frac{2(b-a)}{d-a}\right)\right]- \frac{(d-a)}{2}\int_{[\frac{2(b-a)}{d-a},1)}  z \psi_{\mu}^{'}(z)  \dif z\\
&= -\frac{(a+d)}{2}+ \frac{(d-a)}{2} \psi_{\mu}\left(\frac{2(b-a)}{d-a}\right)- \frac{(d-a)}{2}\int_{(0,1)}  z \psi_{\mu}^{'}(z)  \dif z\\
&\leq 0.
\end{align*}	
Next, we show that $\rho_{1}^{\psi_{\mu}}(X) \leq 0 $ on the event $\left\{ X^{-1}\left( [b, c) \right) \right\}$. Indeed,
\begin{align*}
\rho_{1}^{\psi_{\mu}}(X)&= \int_{(0,1)}  q^{+}_{1-z}(-X \mid \sF_1)  \psi_{\mu}^{'}(z)  \dif z \\
&= \int_{(0,1)}  \left( \frac{(1-z)(d-a)}{2}-c  \right) \psi_{\mu}^{'}(z)  \dif z \\
&= -b-  \frac{(d-a)}{2} \int_{(0,1)}  z \psi_{\mu}^{'}(z)  \dif z\\
&\leq 0.
\end{align*}

	Finally we prove that $\rho_{0}^{\psi_{\mu}}(X) > 0 $,
	\begin{align*}
		\rho^{\psi_{\mu}}_{0}(X)=& \int_{(0,1)}   q^{+}_{1-z}(-X )  \psi_{\mu}^{'}(z)  \dif z\\
		=& \int_{(0,1)}   \left( -z(d-a)-a  \right) \psi_{\mu}^{'}(z)  \dif z\\
		=& -a- (d-a)\int_{(0,1)}  z \psi_{\mu}^{'}(z)  \dif z\\
		>&0.
	\end{align*}
This concludes the proof.
\end{proof}

We remark that Lemma~\ref{lemma:nonweak-1},  Lemma~\ref{lemma:nonweak-2}  and  Lemma~\ref{lemma:nonweak-3} show that there exists a filtered probability space,  such that $\rho^\psi$ is not weakly acceptance time consistent, regardless of $\mu\in \cP[0,1]\setminus\cP'$ such that $\psi_{\mu}$ is absolutely continuous. Next we will show that this also holds true for any $\mu\in\cP'$, except $\mu=\delta_1$.
We start with a representation of  $\rho^{\psi_{\mu}}$ generated by $\mu\in\cP'$ in terms of a weighted average of conditional AV@R and regular conditional expectation.
\begin{lemma}  \label{0329 lemma-4.22}
For any $t \in \mathcal{T}$, $X \in L^\infty$, $\omega \in \Omega \backslash N^X$, $a \geq 1$, and corresponding  $\mu\in\cP'$, we have that
\[
\rho^{\psi_{\mu}}_t(X)(\omega)= \frac{a-1}{a} \avar_{\frac{1}{a+1}}(X \mid \sF_t)(\omega)- \frac{1}{a} \bE(X \mid \sF_t)(\omega).
\]
\end{lemma}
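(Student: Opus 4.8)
The plan is to reduce everything to Theorem~\ref{thm:dwar}, which already identifies $\rho^{\psi_\mu}$ with the dynamic weighted value at risk $\dwvar^\mu$, and then to evaluate the defining integral of $\dwvar^\mu$ against the explicit atomic measure $\mu\in\cP'$. This turns the statement into an essentially mechanical computation, with only one genuinely substantive point.

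First I would fix $a\geq 1$ together with the corresponding $\mu = \frac{a-1}{a}\delta_{\frac{1}{a+1}}+ \frac{1}{a}\delta_{1}\in\cP'$, and invoke Theorem~\ref{thm:dwar} to write, for every $\omega\in\Omega\setminus N^X$,
\[
\rho^{\psi_\mu}_t(X)(\omega) = \dwvar^\mu_t(X)(\omega).
\]
Next, using the definition \eqref{eq:dwvar-def} and observing that both atoms of $\mu$, namely $\tfrac{1}{a+1}\in(0,1)$ (since $a\geq 1$) and $1$, lie inside the integration domain $(0,1]$, the integral collapses to a two-term sum:
\[
\dwvar^\mu_t(X)(\omega) = \frac{a-1}{a}\,\avar_{\frac{1}{a+1}}(X\mid\sF_t)(\omega) + \frac{1}{a}\,\avar_{1}(X\mid\sF_t)(\omega).
\]

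The only real step is to identify the conditional average value at risk at level one with the negative conditional expectation, that is, $\avar_{1}(X\mid\sF_t) = -\bE(X\mid\sF_t)$. Reading off the representation of $\avar_\alpha$ through conditional upper quantiles used in Lemma~\ref{le:altdwar} at $\alpha=1$, this reduces to the conditional analogue of $\int_0^1 q_z(X)\,\dif z = \bE(X)$, i.e. to $\int_{(0,1)} q^{+}_z(X\mid\sF_t)\,\dif z = \bE(X\mid\sF_t)$, which follows from the properties of conditional quantiles recorded in Appendix~\ref{append:defs}. Substituting this identity into the two-term sum gives exactly the asserted formula. I expect this level-one identity to be the main (and only minor) obstacle, since the rest is a direct plug-in; care is needed only to justify it rigorously through the regular conditional distribution rather than by analogy with the unconditional case.
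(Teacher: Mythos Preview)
Your argument is correct, and it is in fact more economical than the paper's. The paper does not go through Theorem~\ref{thm:dwar} and the defining integral \eqref{eq:dwvar-def} directly; instead it invokes Lemma~\ref{le:altdwar} to write $\rho^{\psi_\mu}_t(X)(\omega)=-\int_{(0,1)} q^{+}_{z}(X\mid\sF_t)(\omega)\,\psi_{\mu,+}'(z)\,\dif z$, then plugs in the explicit piecewise-constant derivative $\psi_{\mu,+}'$ (equal to $a$ on $(0,\tfrac{1}{a+1})$ and $\tfrac{1}{a}$ on $[\tfrac{1}{a+1},1)$, from \eqref{eq:weaktc-4}), splits the integral accordingly, and rearranges the pieces into the $\avar$ term and the conditional expectation term. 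Your route sidesteps that whole computation by integrating against the atomic $\mu$ at the level of \eqref{eq:dwvar-def}, which collapses the integral to two terms immediately.

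Both arguments converge on the same residual fact, namely $\avar_{1}(X\mid\sF_t)=-\bE(X\mid\sF_t)$, i.e.\ $\int_{(0,1)} q^{+}_{z}(X\mid\sF_t)\,\dif z=\bE(X\mid\sF_t)$. The paper relies on this implicitly in its last step as well, and it is indeed supported by the supplement (Lemma~\ref{le:altcond}, together with the observation that $q^{+}_{z}$ and $q^{-}_{z}$ differ for at most countably many $z$). Your identification of this as the only substantive point is accurate.
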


\begin{proof}
	Recall that if $\mu\in\cP'$, then $\psi_\mu$ is piece-wise linear (see the proof of Lemma~\ref{lemma:nonweak-1}).
By Lemma~\ref{le:altdwar},  combined with \eqref{eq:weaktc-4}, and the facts that $\psi_{\mu}\left( m_{\mu} \right)=1- m_{\mu}$ and  $m_\mu=\frac{1}{1+a}$, we obtain
\begin{align*}
\rho^{\psi_{\mu}}_t(X)(\omega)& = -\int_{(0,1)}   q^{+}_{z}(X \mid \sF_t)(\omega) \psi_{\mu, +}^{'}(z)  \dif z \\
& = -a \int_{(0,\frac{1}{a+1})}    q^{+}_{z}(X \mid \sF_t)(\omega)  \dif z- \frac{1}{a} \int_{[\frac{1}{a+1}, 1)}   q^{+}_{z}(X \mid \sF_t)(\omega)   \dif z \\
& = -a \int_{(0,\frac{1}{a+1})}    q^{+}_{z}(X \mid \sF_t)(\omega)  \dif z- \frac{1}{a} \int_{(0, 1)}   q^{+}_{z}(X \mid \sF_t)(\omega)   \dif z\\
& \qquad  + \frac{1}{a} \int_{(0, \frac{1}{a+1})}   q^{+}_{z}(X \mid \sF_t)(\omega)   \dif z.
\end{align*}
In view of \eqref{eq:avar}, we continue,
	\begin{align*}
		\rho^{\psi_{\mu}}_t(X)(\omega)&= (a- \frac{1}{a}) \int_{(0, \frac{1}{a+1})}   q^{+}_{z}(X \mid \sF_t)(\omega)   \dif z- \frac{1}{a} \bE(-X \mid \sF_t)(\omega) \\
		&= \frac{a-1}{a} \avar_{\frac{1}{a+1}}(X \mid \sF_t)(\omega)- \frac{1}{a} \bE(X \mid \sF_t)(\omega).
	\end{align*}
The proof is complete.
\end{proof}

\begin{lemma} \label{lemma:nonweak-4}
	For any $a> 1$, take $X$ and the filtration $\mathbb{F}= (\sF_t)_{t \in \{0, 1, 2 \}}$ as follows,
$$
X = 2\cdot \1_{\Omega_1} + \1_{\Omega_2}-\frac{a+2}{a} \cdot\1_{\Omega_3} - \frac{2a+4}{a}\cdot\1_{\Omega_4},
$$
where $\Omega_j, \ j=1,2,3,4$, are such that
$$
\mathbb{P}(\Omega_1)= \mathbb{P}(\Omega_2) = \frac{1}{2}- \frac{1}{4(a+1)}, \quad
 \mathbb{P}(\Omega_3)= \mathbb{P}(\Omega_4)= \frac{1}{4(a+1)},
$$
and put
$$
\sF_0 = \{ \emptyset, \Omega \}, \  \sF_{1}= \sigma \left( \left\{ X^{-1}(2) \cup  X^{-1}\left(-\frac{2a+4}{a}\right) \right\}, \left\{  X^{-1}(1) \cup  X^{-1}\left(-\frac{a+2}{a}\right) \right\} \right),
$$
and $\sF_{2}=\sF=  \sigma(X)$.
Then, for $\mu\in\cP'$ corresponding to the above $a>1$ it holds
\begin{align*}
		\rho_{1}^{\psi_{\mu}}(X) \leq 0  \Rightarrow \rho_{0}^{\psi_{\mu}}(X)  >0.
\end{align*}
\end{lemma}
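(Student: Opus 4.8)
The plan is to evaluate $\rho_1^{\psi_{\mu}}(X)$ and $\rho_0^{\psi_{\mu}}(X)$ explicitly through the representation of Lemma~\ref{0329 lemma-4.22}, which for $\mu\in\cP'$ reads $\rho^{\psi_{\mu}}_t(X)=\tfrac{a-1}{a}\,\avar_{\frac{1}{a+1}}(X\mid\sF_t)-\tfrac1a\,\bE(X\mid\sF_t)$. The construction is rigged so that $\rho_1^{\psi_{\mu}}(X)$ vanishes identically while $\rho_0^{\psi_{\mu}}(X)$ stays strictly positive. First I would record the two atoms of $\sF_1$, namely $A_1=X^{-1}(2)\cup X^{-1}(-\tfrac{2a+4}{a})=\Omega_1\cup\Omega_4$ and $A_2=X^{-1}(1)\cup X^{-1}(-\tfrac{a+2}{a})=\Omega_2\cup\Omega_3$, and observe that the prescribed masses give $\mathbb P(A_1)=\mathbb P(A_2)=\tfrac12$, so conditioning on either atom merely renormalises by a factor of $2$.

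Next I would compute the $\sF_1$-conditional law of $X$ on each atom: on $A_1$ the value $-\tfrac{2a+4}{a}$ carries conditional mass $\tfrac{1}{2(a+1)}$ and $2$ the complementary mass, while on $A_2$ the value $-\tfrac{a+2}{a}$ carries conditional mass $\tfrac{1}{2(a+1)}$ and $1$ the rest. The decisive numerical coincidence is that this ``bad'' conditional mass $\tfrac{1}{2(a+1)}$ is exactly half of the AV@R level $\alpha=\tfrac{1}{a+1}$. Using the quantile form of conditional $\avar$ from \eqref{eq:avar}, the worst $\alpha$-fraction on $A_1$ is all of the atom $-\tfrac{2a+4}{a}$ together with an equal piece of the atom $2$, which yields $\avar_{\alpha}(X\mid A_1)=\tfrac2a$ and $\bE(X\mid A_1)=\tfrac{2(a-1)}{a}$; likewise $\avar_{\alpha}(X\mid A_2)=\tfrac1a$ and $\bE(X\mid A_2)=\tfrac{a-1}{a}$. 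Substituting into the representation gives $\rho_1^{\psi_{\mu}}(X)=0$ on both $A_1$ and $A_2$, hence on all of $\Omega$, so the hypothesis $\rho_1^{\psi_{\mu}}(X)\le 0$ holds (with equality).

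For the unconditional term I would order the four values $-\tfrac{2a+4}{a}<-\tfrac{a+2}{a}<1<2$ and integrate the upper quantile over $(0,\alpha)$. Here the level $\alpha=\tfrac1{a+1}$ straddles the atom $\{X=1\}$: the worst $\alpha$-fraction absorbs both negative atoms in full (total mass $\tfrac{1}{2(a+1)}$) and an extra mass $\tfrac{1}{2(a+1)}$ drawn from the level $X=1$, producing $\avar_{\alpha}(X)=\tfrac{a+6}{4a}$, while a direct summation gives $\bE(X)=\tfrac{3(a-1)}{2a}$. The representation then yields $\rho_0^{\psi_{\mu}}(X)=\tfrac{a-1}{a}\cdot\tfrac{a+6}{4a}-\tfrac1a\cdot\tfrac{3(a-1)}{2a}=\tfrac{a-1}{4a}>0$ for $a>1$, which is the desired strict positivity.

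The only delicate step, and the main obstacle, is the bookkeeping in the two quantile integrals: one must correctly locate where the level $\alpha=\tfrac1{a+1}$ falls among the cumulative masses (conditionally it cuts into the top atom of each fibre, unconditionally it cuts into the $\{X=1\}$ atom) and average the correct fraction of each value. The vanishing of $\rho_1^{\psi_{\mu}}(X)$ is engineered precisely by the masses $\mathbb P(\Omega_3)=\mathbb P(\Omega_4)=\tfrac{1}{4(a+1)}$, so I would verify those conditional computations carefully; once they are in hand the conclusion follows at once.
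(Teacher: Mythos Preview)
Your proposal is correct and follows essentially the same approach as the paper: both invoke the representation of Lemma~\ref{0329 lemma-4.22}, compute $\avar_{\frac{1}{a+1}}$ and the conditional expectation on each of the two $\sF_1$-atoms to obtain $\rho_1^{\psi_\mu}(X)=0$, and then repeat the computation unconditionally to get $\rho_0^{\psi_\mu}(X)=\tfrac{a-1}{4a}>0$. Your intermediate values $\avar_\alpha(X\mid A_1)=\tfrac{2}{a}$, $\bE(X\mid A_1)=\tfrac{2(a-1)}{a}$, $\avar_\alpha(X)=\tfrac{a+6}{4a}$, $\bE(X)=\tfrac{3(a-1)}{2a}$ all agree with the paper's (slightly less simplified) expressions.
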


\begin{proof} By  Lemma~\ref{0329 lemma-4.22}, on the event $\left\{ X^{-1}(2) \cup  X^{-1}\left(-\frac{2a+4}{a}\right) \right\}$,
	\begin{align*}
		\rho^{\psi_{\mu}}_{1}(X)&= \frac{a-1}{a} \avar_{\frac{1}{a+1}}\left(X \mid \sF_1\right)- \frac{1}{a} \bE(X \mid \sF_1)\\
		&=   \frac{a-1}{a} \Big(  \frac{a+2}{a} -  1\Big)- \frac{1}{a}\Big( -\frac{2a+4}{a} \frac{1}{2(a+1)}+ 2\big(1- \frac{1}{2(a+1)}\big) \Big) \\
		&= 0.
	\end{align*}
Similarly, on the event $\left\{  X^{-1}(1) \cup  X^{-1}\left(-\frac{a+2}{a}\right) \right\}$,
	$$
		\rho^{\psi_{\mu}}_{1}(X) 
		= \frac{a-1}{a}\left( \frac{a+2}{2a} - \frac{1}{2}  \right)  - \frac{1}{a} \left( -\frac{a+2}{a} \frac{1}{2(a+1)}+ \left(1- \frac{1}{2(a+1)}\right) \right) = 0
	$$
	On the other hand,
	\begin{align*}
		\rho^{\psi_{\mu}}_{0}(X)
		&= \frac{a-1}{a}\left(\frac{a+2}{2a} + \frac{a+2}{4a} - \frac{1}{2}\right)-
		\frac{1}{a}\left[\left(-\frac{2a+4}{a}-\frac{a+2}{a}\right)\frac{1}{4(a+1)}+3 \left(\frac{1}{2}- \frac{1}{4(a+1)}\right)  \right]\\
		&= \frac{a-1}{4a} >0,
	\end{align*}
and the proof is complete.
\end{proof}

Combining Lemma~\ref{lemma:nonweak-1},  Lemma~\ref{lemma:nonweak-2},  Lemma~\ref{lemma:nonweak-3} and Lemma~\ref{lemma:nonweak-4} the proof of Theorem~\ref{thm:nonweak} is immediate.
Here, we remark that part of the constructive proof of Theorem~\ref{thm:nonweak} was partially inspired by a counterexample in \cite{ArtznerDelbaenEberHeathKu2007}.

Finally, since the weak acceptance time consistency is implied by the strong time consistency and middle acceptance time consistency (see \cite{BCP2017}), Theorem~\ref{thm:nonweak} also shows that $\rho^{\psi_{\mu}}$ generally speaking is neither strongly time consistent nor middle acceptance time consistent, with exception of the trivial case $\psi(x)=x$.

\subsection{Time consistency of  $\alpha^\Psi$} \label{sec:timeConsistencyDCAI}

In \cite{ChernyMadan2009} the authors introduce a class of performance measures, called Coherent Acceptability Indices, as functions  $\alpha:L^\infty\to [0,+\infty]$, that are monotone increasing, scale invariant and quasi-concave. Under some technical continuity assumptions, it can be shown that any CAI $\alpha$ can be characterized by a family of CRMs $\set{\bar\rho^x, \ x\geq 0}$. In particular,  each $\bar\rho^x$ can be generated by a distortion function. Applications of these types of performance measures go beyond the risk management, and were successfully done, for example, in portfolio management and pricing of derivatives (cf. \cite{MadanSchoutens2016} and references therein).

The time consistency of acceptability indices, first studied in \cite{BCZ2010}, is a delicate issue. It is can be shown (cf. \cite{BCP2017,BCP2017a}), that a dynamic (coherent) acceptability index, in particular $\alpha^\Psi$, is never strongly time consistent, and in contrast to dynamic risk measures, it can not satisfy a the `recursive' property in principle. In \cite{BCZ2010}, the authors introduce a time consistency property specific to DCAI, called the  weak time consistency in the current manuscript. See also \cite{BCC2014,Bion-Nadal2004,Bion-Nadal2008,RosazzaGianinSgarra2012,BiaginiBion-Nadal2012} for further theoretical developments on acceptability indices. Thanks to robust type representations such as \eqref{def:DCAI}, the time consistency of $\rho^{\psi_x}, x>0$, can  be transferred to appropriate time consistency of $\alpha^\Psi$; see \cite{BCP2017,BCP2017a}.

\begin{proposition}
If $\Psi =(\psi_x)_{x >0}$ is an increasing family of regular distortion functions,  then $\alpha^{\Psi}$ is weakly rejection time consistent.
\end{proposition}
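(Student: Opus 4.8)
The plan is to reduce weak rejection time consistency of $\alpha^\Psi$—which, in the form relevant here (Appendix~\ref{append:defs}), asserts that for $t\le s$, $X\in L^\infty$ and $m_t\in\bar L^0(\sF_t)$ one has $\alpha_s^\Psi(X)\le m_t \Rightarrow \alpha_t^\Psi(X)\le m_t$—to a statement about the level sets of $\alpha^\Psi$, and then to push each level through the sub-martingale time consistency of the underlying $\rho^{\psi_x}$ from Theorem~\ref{th:submtg}.

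First I would record the two one-sided relations that link $\alpha^\Psi$ to the family and that follow immediately from the definition \eqref{def:DCAI} together with the fact, guaranteed by Lemma~\ref{le:cexin}, that $x\mapsto\rho_u^{\psi_x}(X)$ is non-decreasing: for $u\in\cT$, $x\ge 0$ and $\omega\in\Omega\setminus N^X$, (a) $\rho_u^{\psi_x}(X)(\omega)\le 0 \Rightarrow \alpha_u^\Psi(X)(\omega)\ge x$, and hence by contraposition (b) $\alpha_u^\Psi(X)(\omega)< x \Rightarrow \rho_u^{\psi_x}(X)(\omega)>0$. These two implications capture everything I need about the $\sup$ appearing in \eqref{def:DCAI}.

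Next I would fix the hypothesis $\alpha_s^\Psi(X)\le m_t$. Since $\alpha_s^\Psi(X)\ge 0$ by construction, this forces $m_t\ge 0$ a.s., so it is enough to show $\mathbb P(\{\alpha_t^\Psi(X)>x\}\cap\{m_t\le x\})=0$ for every rational $x\ge 0$, as $\{\alpha_t^\Psi(X)>m_t\}$ is the countable union of these sets. Fixing such an $x$ and writing $C:=\{m_t\le x\}\in\sF_t$, the hypothesis gives $\alpha_s^\Psi(X)\le x$ on $C$, so for each rational $x'>x$ relation (b) yields $\rho_s^{\psi_{x'}}(X)>0$ a.s. on $C$. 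Now comes the transfer: each $\psi_{x'}\in\Upsilon$, so Theorem~\ref{th:submtg} provides $\mathbb E(\rho_s^{\psi_{x'}}(X)\mid\sF_t)\le \rho_t^{\psi_{x'}}(X)$; since $\rho_s^{\psi_{x'}}(X)>0$ a.s. on the $\sF_t$-set $C$, its conditional expectation is strictly positive a.s. on $C$, whence $\rho_t^{\psi_{x'}}(X)>0$ a.s. on $C$. Discarding the countably many null sets over rational $x'>x$ and filling in the reals by monotonicity (Lemma~\ref{le:cexin}), I obtain $\rho_t^{\psi_{x''}}(X)>0$ on $C$ for every real $x''>x$; by \eqref{def:DCAI} this means $\alpha_t^\Psi(X)\le x$ on $C$, i.e. $\{\alpha_t^\Psi(X)>x\}\cap C$ is null. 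Taking the union over rational $x\ge 0$ finishes the argument.

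The step I expect to be delicate is the boundary behaviour: the tempting equivalence ``$\alpha_t^\Psi(X)\ge x \Leftrightarrow \rho_t^{\psi_x}(X)\le 0$'' fails on the plateau $\{\rho_t^{\psi_x}(X)=0\}$, and transferring only a non-strict inequality (as the abstract weak rejection time consistency of each $\rho^{\psi_x}$ would give) cannot exclude that plateau. The crucial point is that Theorem~\ref{th:submtg} yields the genuinely stronger sub-martingale inequality, which upgrades strict positivity at time $s$ to strict positivity at time $t$, thereby eliminating the plateau and closing the gap.
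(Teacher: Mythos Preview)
Your argument is correct. In particular, the step transferring strict positivity from time $s$ to time $t$ via the sub-martingale inequality is sound: since $C\in\sF_t$ and $\rho_s^{\psi_{x'}}(X)>0$ a.s.\ on $C$, one has $\1_C\,\mathbb E[\rho_s^{\psi_{x'}}(X)\mid\sF_t]=\mathbb E[\1_C\rho_s^{\psi_{x'}}(X)\mid\sF_t]$, and the latter is strictly positive a.s.\ on $C$ (a nonnegative random variable with a.s.\ vanishing $\sF_t$-conditional expectation on an $\sF_t$-set must vanish a.s.\ there). The covering $\{\alpha_t^\Psi(X)>m_t\}=\bigcup_{x\in\mathbb Q_+}(\{\alpha_t^\Psi(X)>x\}\cap\{m_t\le x\})$ and the monotone fill-in from rational to real $x''>x$ are also fine.

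The paper's own proof takes a different, much shorter route: it simply invokes an abstract transfer principle from \cite[Proposition~12]{BCP2017}, together with the fact (already noted in Section~\ref{sec:weak-rejection-rho}) that each $\rho^{\psi_x}$ is weakly rejection time consistent because it is sub-martingale time consistent. Your proof is essentially a self-contained instantiation of that transfer principle, carried out directly for the distortion family at hand. What your route buys is transparency: your observation that the equivalence $\{\alpha_t^\Psi(X)\ge x\}=\{\rho_t^{\psi_x}(X)\le 0\}$ (established in the adaptiveness proof in the Supplement) forces one to propagate \emph{strict} positivity of $\rho_s^{\psi_{x'}}(X)$, and that the sub-martingale inequality of Theorem~\ref{th:submtg} is exactly what delivers this, is a genuine clarification of why the argument works. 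The paper's route buys modularity: it separates the structural statement about DCAIs from the specific properties of $\rho^{\psi}$, at the price of an external citation.
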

The proof follows from \cite[Proposition~12]{BCP2017}, and the fact that $\rho^{\psi_x}, x>0$, are weakly rejection time consistent, as proved in Section~\ref{sec:weak-rejection-rho}.

\begin{proposition}
  	Let $\Psi =(\psi_x)_{x >0}$ be an increasing and right continuous  family of distortion functions and for each $x >0$,  $\psi_x$ is generated by a probability measure $\mu\in\cP[0,1]\setminus\set{\delta_1}$ and  such that $\psi_x$ is absolutely continuous, then  $\alpha^{\Psi}$ is not weakly acceptance time consistent on some filtered probability space.
\end{proposition}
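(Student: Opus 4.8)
The plan is to deduce the failure of weak acceptance time consistency for the index $\alpha^\Psi$ from the corresponding failure for the individual generating risk measures $\rho^{\psi_x}$ established in Theorem~\ref{thm:nonweak}, through the robust-representation correspondence between DCAIs and their generating families of DCRMs. Since $\Psi=(\psi_x)_{x>0}$ is assumed increasing and right continuous, Lemma~\ref{le:cexin} tells us that the family $(\rho^{\psi_x})_{x>0}$ is itself an increasing, right continuous family of DCRMs. This is precisely the structure under which $\alpha^\Psi$, defined in \eqref{def:DCAI} by $\alpha^\Psi_t(X)=\sup\{x\in\mathbb{R}_+:\rho^{\psi_x}_t(X)\le 0\}$, is the DCAI admitting $(\rho^{\psi_x})_{x>0}$ as its robust representation (cf.\ \cite{BiaginiBion-Nadal2012,BCDK2013}). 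I would therefore invoke the acceptance-side analogue of \cite[Proposition~12]{BCP2017} --- the result already used in the weak rejection case --- which states that, on a fixed filtered probability space, $\alpha^\Psi$ is weakly acceptance time consistent if and only if every $\rho^{\psi_x}$, $x>0$, is weakly acceptance time consistent.

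Given this correspondence, the argument is immediate. Fix any $x_0>0$. By hypothesis $\psi_{x_0}$ is absolutely continuous and generated by some $\mu\in\cP[0,1]\setminus\{\delta_1\}$, so Theorem~\ref{thm:nonweak} produces a filtered probability space on which $\rho^{\psi_{x_0}}$ fails to be weakly acceptance time consistent. Realizing $\alpha^\Psi$ over this same space, the correspondence above forces $\alpha^\Psi$ to fail weak acceptance time consistency there as well, which is exactly the assertion ``on some filtered probability space.''

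Should one want to bypass the general correspondence and argue directly, I would carry the explicit witness of Lemmas~\ref{lemma:nonweak-3}--\ref{lemma:nonweak-4} up to the level of the index. On the space built there one has $\rho^{\psi_{x_0}}_1(X)\le 0$ a.s.\ and $\rho^{\psi_{x_0}}_0(X)>0$; since $x\mapsto\rho^{\psi_x}_t(X)$ is nondecreasing by Lemma~\ref{le:cexin}, the first gives $\alpha^\Psi_1(X)\ge x_0$ a.s., while monotonicity together with $\rho^{\psi_{x_0}}_0(X)>0$ confines $\{x:\rho^{\psi_x}_0(X)\le 0\}$ to $[0,x_0)$ and hence $\alpha^\Psi_0(X)\le x_0$. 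The main obstacle lies exactly here: to violate weak acceptance time consistency at threshold $x_0$ one needs the \emph{strict} index inequality $\alpha^\Psi_0(X)<x_0$, whereas the generating relation yields only $\alpha^\Psi_0(X)\le x_0$ directly. Closing this gap requires extra control of $x\mapsto\rho^{\psi_x}_0(X)$ on a left neighborhood of $x_0$ (using that $\rho^{\psi_{x_0}}_0(X)$ is bounded strictly away from zero), or a careful reading of the precise inequalities in the definition of weak acceptance time consistency. It is precisely this boundary bookkeeping that the abstract correspondence of \cite{BCP2017} packages away, which is why I would present the first route as the proof.
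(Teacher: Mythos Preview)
Your proposal is correct and follows essentially the same route as the paper: both argue by contraposition, invoking the correspondence from \cite{BCP2017} that weak acceptance time consistency of $\alpha^{\Psi}$ implies weak acceptance time consistency of each generating $\rho^{\psi_x}$, and then conclude via Theorem~\ref{thm:nonweak}. The paper cites \cite[Proposition~13]{BCP2017} (the acceptance-side result you call ``the acceptance-side analogue of Proposition~12'') together with the inverse robust representation $\rho_t^{\psi_x}(X)=-\inf\{c\in\mathbb{R}:\alpha^{\Psi}_t(X-c)\le x\}$ from \cite[Theorem~4.8]{BCZ2010}; your reference to the robust-representation correspondence is the same content, and only the one direction you actually use is needed, so your ``if and only if'' phrasing is stronger than necessary but harmless.
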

\begin{proof} It can be proved (cf. \cite[Theorem~4.8]{BCZ2010}) that
$$
 \rho_t^{\psi_x}(X)(\omega)= -\inf \{ c \in \mathbb{R} \mid \alpha^{\Psi}_t(X -c)(\omega) \leq x \}.
$$
Then, in view of \cite[Proposition~13]{BCP2017},  if $\alpha^\Psi$ is weakly acceptance time consistent, then  $\rho^{\psi_x}, x>0$ is also weakly time consistent, which contradicts Theorem~\ref{thm:nonweak}. The proof is complete.
\end{proof}

\begin{appendix}

\section{Auxiliary definitions and results}\label{append:defs}

A function $f:\cT\times L^\infty\times\Omega \to \bR$ is said to be:
\begin{itemize}
	\item \textit{Adapted}, if $f_t(X)$ is $\sF_t$-measurable;
	\item \textit{Normalized}, if $f_t(0) = 0$;
	\item \textit{Local}, if $\1_{A} f_{t}(X)= \1_{A} f_{t}\left(\1_{A} X\right)$;
	\item \textit{Cash Additive}, if $f_t(X+ m) = f_t(X)- m$;
	\item \textit{Monotone increasing}, if $X \leq Y$ implies $f_t(X) \leq f_t(Y)$;
	\item \textit{Monotone decreasing}, if $-f$ is monotone increasing;
	\item \textit{Sub-additive}, if $f_t( X + Y) \leq f_t(X)+ f_t(Y)$;
	\item \textit{Positive Homogeneous}, if $f_t(\gamma X) = \gamma f_t(X)$;
	\item \textit{Quasi-concave}, if $f_{t}(X) \geq n$ and  $f_{t}\left(Y\right) \geq n$,  then  $f_{t}\left(\lambda X+ (1-\lambda) Y\right) \geq n$;
	\item \textit{Scale Invariance}, if $f_{t}(\beta X)= f_{t}(X)$;
	\item \textit{Law-invariant}, if the value of $f_t(X)$ depends only on the conditional distribution of $X$, namely for any $X,Y\in L^\infty$ such that $\bP(X\in A|\sF_t)=\bP(Y\in A|\sF_t)$, for any $A\in\cB(\bP)$, we have that $f_t(X)=f_t(Y)$,
\end{itemize}
for $t \in \mathcal{T}$, $X, Y \in L^\infty$, $A \in \sF_{t}$, $m \in L^\infty_t$, $n, \gamma \in L^\infty_{t, +}$, $\lambda \in L^{\infty}_t$, $0 \leq \lambda \leq 1$, $\beta\in L_t^\infty, \ \beta>0$.

\subsection{Conditional quantiles, $\var$ and $\avar$}

The conditional versions of qualtiles, value at risk ($\var$) and average value at risk ($\avar$), are defined naturally by using `probabilistic conditioning' of the corresponding regular (or static) notions. Most of the properties are also expected to hold after `conditioning'.  While morally this is true, the difficulties are hidden in technical details related to measurability and well-definiteness of these quantities. Here, we define these objects, and state some of their properties, while the details are deferred to the Supplement~\ref{sec:supp}. 

\begin{definition}  \label{def:condquantile} For any $ \alpha \in (0,1)$, the conditional upper and lower $\alpha$-quantile of $X\in L^\infty$ with respect to $\sigma$-field $\sF_t$ are defined as
	\begin{align*}
		q_{\alpha}^{+}(X \mid \sF_t) = \esssup \{m \in L^{\infty}_t \mid \mathbb{P}(X \leq m\mid\sF_t) \leq \alpha \},\\
		q_{\alpha}^{-}(X \mid \sF_t) =\essinf \{m \in L^{\infty}_t \mid  \mathbb{P}(X \leq m\mid\sF_t) \geq \alpha \}.
	\end{align*}
\end{definition}

It can be shown that the upper and lower quantile admit the representations

	\begin{align*}
		q_{\alpha}^{+}(X \mid \sF_t)(\omega)&= \sup \{x \in \mathbb{R} \mid \bP(X \leq x\mid\sF_t)(\omega) \leq \alpha \}  \\
		&= \sup \{x \in \mathbb{R} \mid \bP(X < x\mid\sF_t)(\omega) \leq \alpha \} \\
		q_{\alpha}^{-}(X \mid \sF_t)(\omega) &= \inf \{x \in \mathbb{R} \mid \bP(X \leq x\mid\sF_t)(\omega) \geq \alpha \} \\
		& = \inf \{x \in \mathbb{R} \mid \bP(X < x\mid\sF_t)(\omega) \geq \alpha \},  \\
		q^{+}_{\alpha}(X \mid \sF_t)(\omega) & = - q^{-}_{1-\alpha}(-X \mid \sF_t)(\omega).
\end{align*}

The conditional $\var$,  similar to its static counterpart, is defined in terms of conditional $\alpha$-quantile function.

\begin{definition} For fixed $ \alpha \in (0,1)$ and  $X \in L^\infty$, conditional $\var$ at level $\alpha$ with respect to $\sigma$-field $\sF_t$  is defined as,
	\begin{align}
		\var_{\alpha}(X \mid \sF_t) := -q_{\alpha}^{+}(X \mid \sF_t)= \essinf \{m \in L^{\infty}_t \mid  \mathbb{P}(X+ m < 0 \mid \sF_t) \leq \alpha \}.
	\end{align}	
\end{definition}
From financial point of view, $\var_{\alpha}(X|\sF_t)$ can be viewed as the smallest amount of capital which, if added to the position $X$ at time $t$, implies that (conditional) probability that the secured position $X+m$ will yield losses is below the level $\alpha$. Clearly, the $\var$ does not capture the size or distribution of the losses beyond the quantile. To overcome this, the notion of average value at risk is introduced.

\begin{definition} The conditional Average Value at Risk at level $\alpha \in (0, 1]$ of a position $ X \in L^\infty$ is given by
	\begin{align} \label{eq:avar}
		\avar_{\alpha}(X\mid \sF_t):= \frac{1}{\alpha} \int_{(0, \alpha)} \var_{z}(X \mid \sF_t) \dif z= -\frac{1}{\alpha} \int_{(0, \alpha)} q_{z}^{+}(X \mid \sF_t) \dif z.
	\end{align}
\end{definition}
Note that $ q_{z}^{+}(X \mid \sF_t)$ is monotone increasing with respect to $z$. Since monotonicity implies Borel measurability, $z \mapsto  q_{z}^{+}(X \mid \sF_t)$ is Borel measurable. Thus, the integral in \eqref{eq:avar} is well-defined.

\subsection{Notions of Time Consistency}
In \cite{BCP2017,BCP2017a} the authors introduced a general framework for studying time consistency of local and monotone (increasing) functions $f:\cT\times L^\infty\times\Omega \to \bR$.
In particular, the notion of weakly/middle and rejection or acceptance time consistency, super/sub-martingale time consistency, and strong time consistency were introduced. The time consistency property and its economic interpretation stays at the foundation of the inter-temporal decision making theory. It is beyond the scope of this work to discuss the subtle differences between different forms of time consistency, and we refer the reader to the survey \cite{BCP2017} and references therein. Here, for the sake of completeness, but also to avoid confusions between acceptance and rejection time consistency depending on monotonicity, we simply list some of these notions relevant to the objects we study.

\smallskip
\noindent
A DCRM $\rho$ is:
\begin{itemize}
	\item \textit{strong} time consistent if $\rho_s(X)=\rho_s(Y) \Rightarrow \rho_t(X) = \rho_t(Y)$;
	\item \textit{sub-martingale} time consistent if
$\rho_{t}(X) \geq \mathbb{E}\left[\rho_{s}(X) \mid \sF_{t}\right]$;

\item \textit{ middle  rejection} time consistent if, $\rho_{s}(X) \geq \rho_{s}(Y)  \Rightarrow \rho_{t}(X) \geq \rho_{t}(Y)$, for any $Y \in L^\infty_s$;

\item \textit{ super-martingale} time consistent if
$\rho_{t}(X) \leq \mathbb{E}\left[\rho_{s}(X) \mid \sF_{t}\right]$;

\item \textit{weak acceptance} time consistent if $\rho_{s}(X) \leq 0  \Rightarrow \rho_{t}(X)  \leq 0,$
\end{itemize}
for any $X,Y \in L^\infty$, $s, t \in \mathcal{T}$, such that $s > t $.

\bigskip

\noindent
A DCAI $\alpha$ is:
\begin{itemize}
	\item \textit{weakly rejection} time consistent if,
$\alpha_{s}\left(X \right) \leq m_t    \Rightarrow  \alpha_{t}\left(X\right) \leq m_t$;

\item \textit{ weakly acceptance} time consistent if,
$\alpha_{s}\left(X \right) \geq m_t    \Rightarrow  \alpha_{t}\left(X\right) \geq m_t,$

\end{itemize}
for any  $X \in L^\infty$, $s, t \in \mathcal{T}$ with $s>t$ and non-negative  $ m_t \in \sF_{t}$.

\end{appendix}

\newpage

\bibliographystyle{alpha}

\newcommand{\etalchar}[1]{$^{#1}$}

\newpage 

\section{Part II: Technical Supplement}\label{sec:supp}

Here, we present detailed proofs of some technical results related to Dynamic Coherent Risk Measures (DCRMs) and Dynamic Acceptability Indices (DCAIs) generated by distortion functions, alongside some properties of conditional quantiles, conditional value at risk ($\var$) and conditional average value at risk ($\avar$).   This part can be viewed as a self-contained supplement to the first part of this manuscript. We will follow the notations and definitions as in the first part without repeating many of them here.

Let $T$ be a fixed and finite time horizon, and let $\mathcal{T}:= \{0, 1, ..., T \}$. We consider a filtered probability space $(\Omega, \sF, \{\sF_t\}_{t \in \mathcal{T}} , \mathbb{P})$, with $\sF_0 = \{ \emptyset, \Omega \}$ and $\sF = \sF_T$. Throughout, we will use the notations $L^\infty:= L^\infty(\Omega, \sF, \mathbb{P})$, $L^\infty_t:= L^\infty(\Omega, \sF_t, \mathbb{P})$, $t\in\cT$, and  $L^\infty_{t, +}$ the set of all non-negative random variables in $L^\infty_t$, for $t\in\cT$. As usual, all equalities and inequalities will be understood in $\mathbb{P}$-almost surely sense unless otherwise stated.    The set of all probability measures on $[0,1]$ is denoted by $\cP([0,1])$.  

We recall that for any real-valued random variable $X : (\Omega, \sF) \rightarrow (\mathbb{R}, \mathcal{B}(\mathbb{R}))$ there exists a \textit{regular conditional distribution} of $X$ given the $\sigma$-algebra $\sF_t$. That is,  there exits a null set $N^X_t \in \sF_t$, such that for any $ \omega \in \Omega \backslash N^X_t$ and any $B\in \mathcal{B}(\mathbb{R})$, we have that $\mathbb{P}(X \in B | \sF_t)(\omega)$ is a distribution function (cf. \cite[Theorem 8.29]{Klenke2013}). In what follows, conditional probabilities will be understood in this sense, and since the set $N^X:= {\cup}_{t \in \mathcal{T}} N^X_t$ is also a null set, and we will use it conveniently instead of $N_t$, for all $t\in\cT$.

We start with a result on distortion functions. 

\begin{lemma} \label{le:disprop1-1} 
	For any concave distortion $\psi$ except the identity function, we have that $\psi(x)> x$ for any $x \in (0,1)$. Moreover, 	any concave distortion $\psi$ is continuous on $(0, 1]$.
\end{lemma}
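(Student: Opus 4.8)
The plan is to prove the two claims in order, relying throughout on the elementary chord estimate that concavity forces $\psi$ to lie above the diagonal. First I would observe that for any $x\in[0,1]$, writing $x=(1-x)\cdot 0+x\cdot 1$ and applying concavity together with the boundary values $\psi(0)=0$, $\psi(1)=1$ gives $\psi(x)\ge (1-x)\psi(0)+x\psi(1)=x$. Thus $\psi(x)\ge x$ on all of $[0,1]$, which already pins down the endpoints and serves as the workhorse for both parts.

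For the \emph{strict} inequality on $(0,1)$ I would argue by contradiction. Set $g:=\psi-\mathrm{id}$; then $g$ is concave (a concave function minus a linear one), satisfies $g\ge 0$ by the previous step, and $g(0)=g(1)=0$, so $0$ is the minimum value of $g$ and it is attained at both endpoints. Suppose $g(x_0)=0$ for some interior $x_0\in(0,1)$. The key step is the standard fact that a concave function attaining its minimum at an interior point of its domain must be constant: for any $y_1<x_0<y_2$ one writes $x_0=ty_1+(1-t)y_2$ with $t\in(0,1)$, and concavity gives $0=g(x_0)\ge t\,g(y_1)+(1-t)\,g(y_2)\ge 0$, forcing equality and hence $g(y_1)=g(y_2)=0$. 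Taking $y_2=1$ for $y_1<x_0$ and $y_1=0$ for $y_2>x_0$ then yields $g\equiv 0$ on $[0,1]$, i.e. $\psi=\mathrm{id}$, contradicting the hypothesis. Hence $\psi(x)>x$ for every $x\in(0,1)$.

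For continuity on $(0,1]$ I would split off the endpoint. On the open interval $(0,1)$ continuity is automatic, since every finite-valued concave function on an open real interval is continuous (indeed locally Lipschitz), and $\psi$ takes values in $[0,1]$. It remains to check left-continuity at $1$: as $\psi$ is non-decreasing and bounded by $1$, the limit $L:=\lim_{x\to 1^-}\psi(x)$ exists with $L\le\psi(1)=1$; on the other hand the diagonal bound $\psi(x)\ge x$ gives $L\ge\lim_{x\to 1^-}x=1$, so $L=1=\psi(1)$ and $\psi$ is continuous at $1$. Combining the two gives continuity on $(0,1]$. Note that no claim is made at $0$, where a concave distortion may genuinely jump (consistent with $\mu(\{0\})>0$ in Remark~\ref{rem:psi2mu}).

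I expect the only real obstacle to be the rigorous form of the interior-minimum argument in the second paragraph; everything else follows directly from the diagonal bound and monotonicity. The one point to be careful about there is that the convex-combination weights stay in $(0,1)$ and that the concavity inequality is applied in the correct lower-bounding direction, but beyond that the reasoning is routine.
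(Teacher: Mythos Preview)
Your proof is correct and follows essentially the same route as the paper's: establish the diagonal bound $\psi(x)\ge x$ from concavity and the endpoint values, upgrade to strict inequality on $(0,1)$ by a contradiction argument, invoke local Lipschitz continuity of concave functions on the open interval, and use the diagonal bound to pin down the left limit at $1$. Your packaging is a bit cleaner in two places---the ``interior minimum of a concave function forces constancy'' argument for $g=\psi-\mathrm{id}$ replaces the paper's explicit case split on $m<z$ versus $m>z$, and your continuity-at-$1$ argument uses only $\psi(x)\ge x$ and so works uniformly, whereas the paper appeals to the strict inequality and must treat $\psi=\mathrm{id}$ separately---but the underlying ideas are the same.
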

\begin{proof}
	Since $\psi$ is concave on $[0, 1]$, then $\psi(x) \geq x$ for any $x \in [0,1]$. Otherwise suppose there exists $x' \in [0,1]$ such that $\psi(x') < x'$. Take $x= 1$, $y= 0$ and $\alpha= x'$,  then we get, 
	\[ 
	\psi(\alpha x +(1-\alpha)y)=  \psi(x') < x' = \alpha \psi(x) +(1-\alpha) \psi(y), 
	\]
	which contradicts that $\psi$ is concave on $[0, 1]$. 
	
	Since $\psi$ cannot be identity function, thus there exists  $m \in (0, 1)$, such that $\psi(m) > m$. 
	
	Now we prove the lemma by contradiction. Suppose there exists $z \in (0, 1)$, such that  $\psi(z) \leq z$. Since $\psi(x) \geq x$ for any $x \in [0,1]$, then  $\psi(z) = z$.   First assume $m < z$. We will show that $\psi$ is not concave on $[0, 1]$, i.e.
	\begin{align}  \label{eq:notconcave}
		\exists \ x, y  \in [0, 1], \ \exists \ \alpha \in [0, 1],  \  \text{such that} \  \psi(\alpha x +(1-\alpha)y) < \alpha \psi(x) +(1-\alpha) \psi(y).
	\end{align}	
	Take $x= m$, $y= 1$, $\alpha= \frac{1-z}{1-m}$. Then we obtain 
	\begin{align*} 
		\psi(\alpha x +(1-\alpha)y) =  \psi \left(\frac{1-z}{1-m} m + \frac{z-m}{1-m} \right) =\psi(z) = z. 
	\end{align*}
	Since $z= \frac{1-z}{1-m} m + \frac{z-m}{1-m}$, it follows that, 
	\begin{align*} 
		\psi(\alpha x +(1-\alpha)y)= \frac{(1-z)m}{1-m} + \frac{z-m}{1-m}< \frac{(1-z)\psi(m)}{1-m}  + \frac{z-m}{1-m} = \alpha \psi(x) +(1-\alpha) \psi(y). 
	\end{align*} 
	This leads to \eqref{eq:notconcave}, and thus $\psi$ is not concave. The case $m > z$, is treated similarly, take $x= 0$, $y= m$, $\alpha= 1-\frac{z}{m}$, and obtain \eqref{eq:notconcave}.

	To prove the second part, we show that $\psi$ is continuous on $(0, 1)$.  By \cite[Proposition A.4 (a)]{FollmerSchiedBook2004}, concavity of $\psi$ on $[0, 1]$ implies that $\psi$ is locally Lipschitz continuous on $(0, 1)$. 
	So for each point $r \in (0, 1)$, there exists a neighbourhood $O_r$ of $r$, for any $x, y \in O_r$,  there exists $M_r \in \mathbb{R}_+$, such that  $\frac{|\psi(x)-\psi(y)|}{|x-y|} \leq M_r$. Take $y =r$,  then $|\psi(x)-\psi(r)| \leq M_r |x-r| $. Therefore $\psi(x) \rightarrow \psi(r)$ as $x \rightarrow r$, which means that $\psi$ is continuous at $r$. Since $r$ is arbitrary on $(0, 1)$,  $\psi$ is continuous on $(0, 1)$. 
	
	Next for the continuity at point 1, if $\psi$ is identity function, then obviously it is continuous at 1. For other distortion $\psi$,  we use proof by contradiction. Suppose $\psi$ is not continuous at point 1, since $\psi(1)= 1$ and $\psi$ is increasing, assume $\psi(1-)= a< 1$. Take arbitrary $y \in [a, 1)$, since $\psi$ is increasing, $\psi(y) \leq \psi(1-)= a \leq y$. This is in contradiction to  $\psi(y)> y$ for any $y \in (0,1)$.
	This concludes the proof. 
\end{proof}

Next we prove to auxiliary results needed for next section. 
\begin{lemma} \label{le:sub2}
	Let  $A_1, \ldots , A_{n} \in \sF_T $ be a (disjoint) partition of $\Omega$ and put $I:= \{ 1,...,n \}$. For given $t\in\cT$, $\psi \in \Upsilon$, and $\omega \in \Omega \backslash N^X$,  construct the function $Q^{\omega}$  on $\sG:=\sigma (A_{1}, \ldots , A_{n} )$  generated by
	\begin{equation} \label{eq:subQ}
		Q^{\omega}[\bigcup_{k \in J \subseteq I} A_{k}]:= \sum_{k \in J \subseteq I} [\psi(\bP(B_{k}\mid\sF_t)(\omega))- \psi(\bP(B_{k-1}\mid\sF_t)(\omega))],
	\end{equation}
	where $B_{0}:= \emptyset, \  B_{k}:= \bigcup_{j=1}^{k} A_{j}$, $k \in I$. Then, $Q^\omega$ is a probability measure on $(\Omega, \sG)$. 
\end{lemma}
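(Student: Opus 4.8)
The plan is to verify directly that $Q^\omega$ satisfies the defining axioms of a probability measure on the finite $\sigma$-algebra $\sG$: well-definedness, non-negativity, unit total mass, and additivity. Since $\sG$ is generated by the finite partition $A_1,\dots,A_n$, every $E\in\sG$ is a disjoint union of atoms, i.e.\ $E=\bigcup_{k\in J}A_k$ for some $J\subseteq I$. The first thing I would check is that the value assigned by \eqref{eq:subQ} does not depend on the chosen representation of $E$: if some $A_k=\emptyset$ then $B_k=B_{k-1}$, so the corresponding bracketed increment vanishes, and hence adding or removing such indices from $J$ leaves $Q^\omega(E)$ unchanged. Throughout I fix $\omega\in\Omega\backslash N^X$, so that $A\mapsto\bP(A\mid\sF_t)(\omega)$ is a genuine $[0,1]$-valued, monotone, finitely additive set function with $\bP(\emptyset\mid\sF_t)(\omega)=0$ and $\bP(\Omega\mid\sF_t)(\omega)=1$.

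Next I would establish non-negativity. For each $k\in I$ we have $B_{k-1}\subseteq B_k$, so $\bP(B_{k-1}\mid\sF_t)(\omega)\le\bP(B_k\mid\sF_t)(\omega)$ by monotonicity of the conditional probability; since $\psi\in\Upsilon$ is non-decreasing, each increment $\psi(\bP(B_k\mid\sF_t)(\omega))-\psi(\bP(B_{k-1}\mid\sF_t)(\omega))$ in \eqref{eq:subQ} is non-negative. Consequently $Q^\omega(E)\ge 0$ for every $E\in\sG$, and taking $J=\emptyset$ (an empty sum) gives $Q^\omega(\emptyset)=0$.

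For the total mass I would take $J=I$, so $E=\bigcup_{k\in I}A_k=\Omega$, and exploit the telescoping structure of \eqref{eq:subQ}:
\[
Q^\omega(\Omega)=\sum_{k=1}^{n}\bigl[\psi(\bP(B_k\mid\sF_t)(\omega))-\psi(\bP(B_{k-1}\mid\sF_t)(\omega))\bigr]=\psi(\bP(B_n\mid\sF_t)(\omega))-\psi(\bP(B_0\mid\sF_t)(\omega)).
\]
Since $B_n=\Omega$ and $B_0=\emptyset$, and using the boundary values of the conditional probability at the fixed $\omega$ together with $\psi(0)=0$, $\psi(1)=1$, this equals $\psi(1)-\psi(0)=1$. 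Finally, additivity is immediate from the index-sum form of the definition: if $E_1=\bigcup_{k\in J_1}A_k$ and $E_2=\bigcup_{k\in J_2}A_k$ are disjoint then $J_1\cap J_2=\emptyset$, so splitting the sum over $J_1\cup J_2$ gives $Q^\omega(E_1\cup E_2)=Q^\omega(E_1)+Q^\omega(E_2)$; because $\sG$ is finite, finite additivity coincides with countable additivity.

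I do not anticipate a genuine obstacle here—the argument is a bookkeeping exercise. The only point requiring care is the interplay between well-definedness and the regular-conditional-distribution framework: one must know that off the null set $N^X$ the quantities $\bP(B_k\mid\sF_t)(\omega)$ are honest, monotone probabilities taking the correct boundary values $0$ and $1$, since these are exactly the ingredients that make both the sign of each increment and the telescoping collapse to $\psi(1)-\psi(0)$ valid. That is why the statement is phrased for $\omega\in\Omega\backslash N^X$, and I would flag this dependence explicitly at the start of the proof.
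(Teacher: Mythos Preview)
Your proposal is correct and follows essentially the same approach as the paper: both verify $Q^\omega(\emptyset)=0$, obtain $Q^\omega(\Omega)=1$ by telescoping, and reduce $\sigma$-additivity to finite additivity via the index-sum definition on the finite $\sigma$-algebra. You are in fact slightly more thorough than the paper, which does not explicitly check non-negativity or discuss well-definedness; your use of the monotonicity of $\psi$ and of the regular conditional probability to secure $Q^\omega\ge 0$ is a welcome addition.
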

\begin{proof} Clearly 	$Q^{\omega}[\emptyset]= 0$, and 
	\begin{align*}
		Q^{\omega}[\Omega] &= 	Q^{\omega}[\bigcup_{k \in I} A_{k}]= \sum_{k \in I} [\psi(\bP(B_{k}\mid\sF_t)(\omega))- \psi(\bP(B_{k-1}\mid\sF_t)(\omega))]\\ 
		& = \psi(\bP(B_{n}\mid\sF_t)(\omega))- \psi(\bP(B_{0}\mid\sF_t)(\omega))= 1. 
	\end{align*}
	
	Since the $\cG$ is generated by a finite partition, $\sigma$-additivity of $Q^\omega$ is equivalent to finite additivity, which is immediate: 
	$$
	Q^{\omega}[\bigcup_{k \in J \subseteq I} A_{k}]= \sum_{k \in J \subseteq I} [\psi(\bP(B_{k}\mid\sF_t)(\omega))- \psi(\bP(B_{k-1}\mid\sF_t)(\omega))]= \sum_{k \in J \subseteq I} Q^{\omega}[A_{k}].
	$$
	Thus, $Q^\omega$ is a probability measure, and this concludes the proof. 
\end{proof}

\begin{lemma} \label{le:ineqDCRM}
	Given $\psi \in \Upsilon$ and $t \in \mathcal{T}$, for any $\omega \in \Omega \backslash N^X$, let $Q^{\omega}$ be the probability measure given by \eqref{eq:subQ}. Then, for any non-negative random variable $X = \sum_{i=1}^{n} x_{i} \1_{A_{i}}$ we have 
	\begin{equation}\label{eq:QOmega2}
		\sum_{i=1}^{n} -x_{i}Q^{\omega}[A_{i}] \leq  \rho_t^\psi(X)(\omega).
	\end{equation}
\end{lemma}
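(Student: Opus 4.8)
The plan is to rewrite both sides of \eqref{eq:QOmega2} as integrals in a single real variable via the layer-cake representation, and then reduce the inequality to a pointwise comparison of two set functions on the finite $\sigma$-algebra $\sG$. Write $p_i := \bP(A_i \mid \sF_t)(\omega)$, $P_k := \bP(B_k \mid \sF_t)(\omega) = \sum_{j \le k} p_j$, and $w_i := Q^\omega[A_i] = \psi(P_i) - \psi(P_{i-1})$, so that $w_i \ge 0$ and $\sum_i w_i = 1$. Since $X \ge 0$ gives $x_i \ge 0$, the identity $x_i = \int_0^\infty \1_{\{x_i > u\}}\,\dif u$ together with Tonelli's theorem and finiteness of the partition yields
$$\sum_{i=1}^n x_i\, w_i = \int_0^\infty \Big(\sum_{i:\, x_i > u} w_i\Big)\,\dif u = \int_0^\infty Q^\omega[\{X > u\}]\,\dif u,$$
where $\{X > u\} = \bigcup_{i:\, x_i>u} A_i \in \sG$. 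On the other side, specialising \eqref{eq:ChoquetDCRM} to $X \ge 0$ (the integrand over $[0,\infty)$ vanishes since $\bP(-X>y\mid\sF_t)=0$ for $y\ge 0$ and $\psi(0)=0$, and substituting $u=-y$ in the remaining integral) gives
$$\rho_t^\psi(X)(\omega) = -\int_0^\infty \big[1 - \psi(\bP(X < u \mid \sF_t)(\omega))\big]\,\dif u.$$

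Next I would introduce the set function $c(A) := \psi(\bP(A\mid\sF_t)(\omega))$ on $\sF_T$. Because the values $x_i$ produce only finitely many atoms, inside the $\dif u$-integral one may replace $\bP(X<u\mid\sF_t)$ by $\bP(X\le u\mid\sF_t)$, so that for a.e.\ $u$ one has $1-\psi(\bP(X<u\mid\sF_t)(\omega)) = 1 - c(\{X\le u\}) = 1 - c(\{X>u\}^c)$. Comparing the two displays, \eqref{eq:QOmega2} reduces to the pointwise estimate $Q^\omega[\{X>u\}] \ge 1 - c(\{X>u\}^c)$ for every $u$. Writing $S := \{X>u\}$ and using $Q^\omega[S] = 1 - Q^\omega[S^c]$, this is equivalent to the single inequality
$$Q^\omega[S'] \le c(S') = \psi\big(\bP(S'\mid\sF_t)(\omega)\big) \qquad \text{for every } S' = \bigcup_{i\in J} A_i,\ J\subseteq I,$$
applied to $S' = S^c$.

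The heart of the argument, and the step I expect to be the main obstacle, is this last inequality: the additive measure $Q^\omega$, built from the increments of $c$ along the maximal chain $\emptyset = B_0 \subset B_1 \subset \cdots \subset B_n = \Omega$, is dominated by $c$ on every union of cells. I would prove it from submodularity of $c$. Concavity of $\psi$ with $\psi(0)=0$ makes $t\mapsto \psi(t+h)-\psi(t)$ non-increasing, whence $c(U\cup V)+c(U\cap V)\le c(U)+c(V)$ for all $U,V\in\sF_T$ (check by writing the four probabilities in terms of $\bP(U\setminus V)$, $\bP(V\setminus U)$, $\bP(U\cap V)$). Enumerating $J=\{i_1<\cdots<i_k\}$ and setting $S_0:=\emptyset$, $S_j:=A_{i_1}\cup\cdots\cup A_{i_j}$, one has $S_{j-1}\subseteq B_{i_j-1}$ and $A_{i_j}$ disjoint from $B_{i_j-1}$, so submodularity with $U=B_{i_j-1}$ and $V=S_{j-1}\cup A_{i_j}$ gives the marginal comparison $c(B_{i_j})-c(B_{i_j-1})\le c(S_j)-c(S_{j-1})$. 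Summing over $j=1,\dots,k$ telescopes the right-hand side to $c(S_k)-c(S_0)=c(S')$, while the left-hand side equals $\sum_{i\in J} w_i = Q^\omega[S']$, establishing $Q^\omega[S']\le c(S')$ and closing the argument. The only remaining points are routine: the interchange of sum and integral (nonnegativity and finiteness) and the harmless replacement of $<$ by $\le$ under the integral sign.
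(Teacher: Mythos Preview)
Your proof is correct and takes a genuinely different route from the paper's. The paper follows the classical exchange argument (essentially F\"ollmer--Schied, Lemma~4.92): it first computes $\rho_t^\psi(X)(\omega)=\sum_i(-x_i)Q^\omega[A_i]$ when the values are sorted $x_1\le\cdots\le x_n$, and then shows that swapping two adjacent reverse-ordered indices can only increase the left-hand side, using concavity through a slope comparison $\psi(P_{i+1})-\psi(P_i^\tau)\le\psi(P_i)-\psi(P_{i-1})$; the general inequality follows by induction on the number of inversions.

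You instead reduce, via the layer-cake representation, to the single combinatorial fact $Q^\omega[S']\le c(S')$ for every union $S'$ of cells, where $c(\cdot)=\psi(\bP(\cdot\mid\sF_t)(\omega))$, and prove it by submodularity of $c$ and a telescoping sum; this is precisely the statement that the greedy allocation $w_i=c(B_i)-c(B_{i-1})$ lies in the core of the submodular capacity $c$ restricted to $\sG$. Your argument is cleaner and isolates the underlying mechanism (submodularity $\Rightarrow$ core nonempty) more transparently than the transposition bookkeeping. The paper's approach has one small bonus: it explicitly identifies the sorted case as the case of equality, which is then used verbatim in the proof of sub-additivity of $\rho_t^\psi$; your method gives this too (take $S'=B_k$ for each $k$ to see all inequalities are equalities), but you would need to state it.
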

\begin{proof}			
	First we will show that if  $x_{n} \geq \ldots \geq x_{1} \geq x_0:=0$, then $\rho_t^\psi(X)(\omega)= \sum_{i=1}^{n} -x_{i} Q^{\omega}(A_{i}) $, i.e. \eqref{eq:QOmega2} becomes equality. With $B_k, \ k\in I$, defined in Lemma~\ref{le:sub2}, we notice that 
	$$
	\{ -X > y \} =
	\begin{cases}
		B_{n}, & y < -x_{n}\\
		B_{n-1}, & -x_{n} \leq  y < -x_{{n}-1}\\
		\ldots\\
		B_{1}, & -x_2\leq y < -x_1\\
		B_{0}, & -x_1 \leq  y. 
	\end{cases}
	$$
	This, consequently implies that 
	\begin{align*}
		\rho_t^\psi(X)(\omega) &= \int_{(-\infty, 0)}   [ \psi(\bP(-X > y\mid\sF_t)(\omega)) -1 ]  \dif y\\
		&= \sum_{i=0}^{n-1}   [ \psi(\bP(B_i\mid\sF_t)(\omega)) -1 ] (x_{i+1}- x_{i})\\
		&= -Q^{\omega}(A_{n}) (x_{n}- x_{n-1})- \sum_{i=n-1}^{n} Q^{\omega}(A_{i}) (x_{n-1}- x_{n-2})- \ldots \\
		& \qquad  \qquad - \sum_{i=2}^{n} Q^{\omega}(A_{i}) (x_{2}- x_{1})- \sum_{i=1}^{n} Q^{\omega}(A_{i}) x_{1}  \\
		&= \sum_{i=1}^{n} -x_{i}  Q^{\omega}(A_{i}).
	\end{align*}
	
	Second, we prove  that $\rho_t^\psi(X)(\omega) \geq \sum_{k=1}^{n} -x_{k} Q^{\omega}(A_{k}) $  for arbitrary non-negative  step function $X$.  Note that, any permutation  $\sigma$  of  $\{1, \ldots, n \}$  induces a probability  $Q_{\sigma}^{\omega}$  by applying Lemma~\ref{le:sub2}  to the relabeled partition $ A_{\sigma(1)} \ldots, A_{\sigma(n)}$ .  Let  $\sigma$  be the permutation of $I$ such that  $x_{\sigma(1)} \leq \ldots \leq x_{\sigma(n)}$,  then in view of the above, we have  
	\[
	\rho_t^\psi(X)({\omega})= \sum_{k=1}^{n} -x_{\sigma(k)}  Q^{\omega}_{\sigma}(A_{\sigma(k)}).
	\] 
	Hence, the assertion will follow if we can prove that
	\begin{equation} \label{eq:2-11}
		\sum_{k=1}^{n} -x_{\sigma(k)}  Q^{\omega}_{\sigma}(A_{\sigma(k)}) \geq \sum_{k=1}^{n} -x_{k}  Q^{\omega}(A_{k}).
	\end{equation}
	We claim that \eqref{eq:2-11} will follow if we can prove that 
	\begin{equation} \label{eq:2-12}
		\sum_{k=1}^{n} -x_{\tau(k)}  Q^{\omega}_{\tau}(A_{\tau(k)}) \geq \sum_{k=1}^{n} -x_{k}  Q^{\omega}(A_{k}),
	\end{equation}
	where  $\tau$  is the transposition of arbitrary, but fixed, two neighboring indices  $i$  and  $i+1$ where $ x_{i} \geq x_{i+1}$, such that $ x_{\tau(i)} \leq x_{\tau(i+1)}$; operation $\tau$ only affects the positions $i$  and  $i+1$, with all other terms remaining terms unchanged. Let us show that indeed \eqref{eq:2-12} implies \eqref{eq:2-11}. If the values of $X$ are not arranged in increasing order, then there exists at least one pair of neighboring indices with decreasing order, so that we can apply $m$ finitely many times neighboring swapping operation $\tau$ until we get the permutation $\sigma$, and at each step we relabel $X$. At $j$-th step, we swap two reverse-order
	neighboring indices of $X_{\tau}^{j-1}$ which is the relabeled sequence from step $j-1$. Then, we get that $X^{j}_{\tau}= \sum_{k=1}^{n} x_{\tau(k)}^{j-1} A^{j-1}_{\tau(k)}$, and by \eqref{eq:2-12}, we deduce that 
	\begin{equation} \label{eq:2-13}
		\bE^{Q^{\omega}_j}(-X^{j}_{\tau})= \sum_{k=1}^{n} -x^{j-1}_{\tau(k)}  Q^{\omega}_j(A^{j-1}_{\tau(k)}) \geq   \sum_{k=1}^{n} -x^{j-1}_k  Q^{\omega}_{j-1}(A^{j-1}_k )= \bE^{Q^{\omega}_{j-1}}(-X^{j-1}_{\tau}).
	\end{equation}
	where $Q^{\omega}_j$ and $Q^{\omega}_{j-1}$ correspond to $ Q^{\omega}_{\tau}$ and $Q^{\omega}$ in \eqref{eq:2-12} respectively. Taking $j=m$, we get $X^{m}_{\tau}$ being the same as $X_{\sigma}$ since both are the relabeled $X$ in increasing order. Moreover,  $Q^{\omega}_{\sigma}$ and $Q^{\omega}_{m}$ are the same since both are constructed based on the same sequence of sets, so that
	\begin{equation} \label{eq:2-14}
		\bE^{Q^{\omega}_{\sigma}}(-X_{\sigma})   =  \bE^{Q^{\omega}_m}(-X^m_{\tau}).
	\end{equation}
	By applying \eqref{eq:2-13} inductively, we have
	\begin{equation} \label{eq:2-15}
		\bE^{Q^{\omega}_m}(-X^m_{\tau})  \geq \bE^{Q^{\omega}_{m-1}}(-X^{m-1}_{\tau}) \geq \ldots \geq \bE^{Q^{\omega}_{1}}(-X^{1}_{\tau}) \geq \bE^{Q^{\omega}}(-X).
	\end{equation}
	Combining \eqref{eq:2-14} and \eqref{eq:2-15},  then \eqref{eq:2-11} follows, and thus  \eqref{eq:2-12} implies \eqref{eq:2-11}.

	Finally, it remains to prove \eqref{eq:2-12}. Recall that $\tau(i)= i+1$, $\tau(i+1)= i$ and $\tau(k)= k$ for all other $k$. Thus,
	\begin{align}
		-\sum_{k=1}^{n} x_{\tau(k)}  Q_{\tau}^{\omega}(A_{\tau(k)}) +\sum_{k=1}^{n} x_{k}  Q^{\omega}(A_{k}) 
		&= -x_{\tau(i)} Q_{\tau}^{\omega}\left[A_{\tau(i)}\right]- x_{\tau(i+1)} Q^{\omega}_{\tau}\left[A_{\tau(i+1)}\right] \\
		& \qquad  +x_{i} Q^{\omega}\left[A_i \right]+ x_{i+1} Q^{\omega}\left[A_{i+1}\right] \\
		&= -x_{i+1} Q^{\omega}_{\tau}\left[A_{i+1}\right]- x_{i} Q^{\omega}_{\tau}\left[A_{i}\right] \\
		& \qquad + (x_{i} Q^{\omega}\left[A_i \right]+ x_{i+1} Q^{\omega}\left[A_{i+1}\right]) \\
		&= -x_{i}\left(Q^{\omega}_{\tau}\left[A_{i}\right]-Q^{\omega}\left[A_{i}\right]\right)-x_{i+1}\left(Q^{\omega}_{\tau}\left[A_{i+1}\right]-Q^{\omega}\left[A_{i+1}\right]\right).
	\end{align}
	To compute $Q^{\omega}_{\tau}\left(A_{k}\right)$, let us introduce
	\begin{equation}
		B_{0}^{\tau}:=\emptyset \quad \text { and } \quad B_{k}^{\tau}:=\bigcup_{j=1}^{k} A_{\tau(j)}, \quad k=1, \ldots, n.
	\end{equation}
	We note that $B_{i}^{\tau}= \bigcup_{j=1}^{i-1} A_{j} \bigcup A_{i+1}$ 	and  $B_{k}^{\tau}=B_{k}$,  for  $k \neq i$.  Hence,
	\begin{equation} \label{eq:2-18}
		\begin{split}
			Q_{\tau}^{\omega}\left[A_{i}\right]+ Q^{\omega}_{\tau}\left[A_{i+1}\right] 
			& =Q^{\omega}_{\tau}\left[A_{\tau(i+1)}\right] + Q^{\omega}_{\tau}\left[A_{\tau(i)}\right]\\ 
			& =  \psi(\bP(B_{i+1}^{\tau}\mid\sF_t)(\omega))- \psi(\bP(B_{i}^{\tau}\mid\sF_t)(\omega)) \\
			&\qquad + \psi(\bP(B_{i}^{\tau}\mid\sF_t)(\omega))- \psi(\bP(B_{i-1}^{\tau}\mid\sF_t)(\omega)) \\
			&=  \psi(\bP(B_{i+1}^{\tau}\mid\sF_t)(\omega))- \psi(\bP(B_{i-1}^{\tau}\mid\sF_t)(\omega)) \\
			&=  \psi(\bP(B_{i+1}\mid\sF_t)(\omega))- \psi(\bP(B_{i-1}\mid\sF_t)(\omega)) \\
			&= Q^{\omega}\left[A_{i}\right]+Q^{\omega}\left[A_{i+1}\right].
		\end{split}
	\end{equation}
	Moreover,  $B_{i}^{\tau} \cap B_{i}=B_{i-1}$,  $B_{i}^{\tau} \cup B_{i}=B_{i+1}$ and we have,
	\begin{equation} \label{eq:2-19}
		\begin{split}
			\bP(B_{i}\mid\sF_t)(\omega) &- \bP(B_{i-1}\mid\sF_t)(\omega) 
			=  \bP(-X > -x_{i+1}\mid\sF_t)(\omega) - \bP(-X > -x_{i}\mid\sF_t)(\omega)\\
			& =  \bP(-X > -x_{i+2}\mid\sF_t)(\omega) - \bP( \{ -X > -x_{i} \} \cup \{-x_{i+1}  \geq -X > -x_{i+2} \} \mid\sF_t)(\omega)\\
			& =  \bP(B_{i+1}\mid\sF_t)(\omega)- \bP(B_{i}^{\tau}\mid\sF_t)(\omega).
		\end{split}
	\end{equation}
	Since $\psi$ is concave, we deduce 
	\[
	\frac{\psi(\bP(B_{i+1}\mid\sF_t)(\omega))- \psi(\bP(B_{i}^{\tau}\mid\sF_t)(\omega))}{\bP(B_{i+1}\mid\sF_t)(\omega)- \bP(B_{i}^{\tau}\mid\sF_t)(\omega)} \leq \frac{\psi(\bP(B_{i}\mid\sF_t)(\omega))- \psi(\bP(B_{i-1}\mid\sF_t)(\omega))}{\bP(B_{i}\mid\sF_t)(\omega)- \bP(B_{i-1}\mid\sF_t)(\omega)},
	\]
	and invoking \eqref{eq:2-19}, we obtain 
	\[
	\psi(\bP(B_{i+1}\mid\sF_t)(\omega))- \psi(\bP(B_{i}^{\tau}\mid\sF_t)(\omega)) \leq \psi(\bP(B_{i}\mid\sF_t)(\omega))- \psi(\bP(B_{i-1}\mid\sF_t)(\omega)).
	\]
	Thus,
	\begin{equation} \label{eq:2-20}
		\begin{split}
			Q^{\omega}\left[A_{i+1}\right]&= \psi(\bP(B_{i+1}\mid\sF_t)(\omega))- \psi(\bP(B_{i}\mid\sF_t)(\omega))\\  &\leq   \psi(\bP(B_{i}^{\tau}\mid\sF_t)(\omega))- \psi(\bP(B_{i-1}^{\tau}\mid\sF_t)(\omega))\\  &= Q^{\omega}_{\tau}\left[A_{\tau(i)}\right]\\ &=Q^{\omega}_{\tau}\left[A_{i+1}\right].
		\end{split}
	\end{equation}
	By  \eqref{eq:2-18} and \eqref{eq:2-20}, we have $Q^{\omega}\left[A_{i+1}\right] \leq Q^{\omega}_{\tau}\left[A_{i+1}\right] \hspace{2mm} $ and $Q^{\omega}\left[A_{i}\right] \geq Q^{\omega}_{\tau}\left[A_{i}\right] $. Moreover, 
	\[
	Q^{\omega}\left[A_{i}\right]- Q^{\omega}_{\tau}\left[A_{i}\right]= Q^{\omega}_{\tau}\left[A_{i+1}\right]- Q^{\omega}\left[A_{i+1}\right]. 
	\]
	Due to our assumption  $x_{i}\geq x_{i+1}$, and thus 
	\begin{align*}
		-\sum_{k=1}^{n} x_{\tau(k)}  Q^{\omega}_{\tau}(A_{\tau(k)}) +\sum_{k=1}^{n} x_{k}  Q^{\omega}(A_{k})&= -x_{i}\left(Q^{\omega}_{\tau}\left[A_{i}\right]-Q^{\omega}\left[A_{i}\right]\right)  -x_{i+1}\left(Q^{\omega}_{\tau}\left[A_{i+1}\right]-Q^{\omega}\left[A_{i+1}\right]\right)\\
		&= (x_{i}- x_{i+1}) \left(Q^{\omega}_{\tau}\left[A_{i+1}\right]-Q^{\omega}\left[A_{i+1}\right]\right)
		\geq 0.
	\end{align*} 
	Hence \eqref{eq:2-12} holds true. This completes the proof.
\end{proof}	

For the sake of completeness, we also recall here that a function $f:\cT\times L^\infty\times\Omega \to \bR$ is said to be: 
\begin{itemize}
	\item \textit{Adapted}, if $f_t(X)$ is $\sF_t$-measurable; 
	\item \textit{Normalized}, if $f_t(0) = 0$;   
	\item \textit{Local}, if $\1_{A} f_{t}(X)= \1_{A} f_{t}\left(\1_{A} X\right)$; 
	\item \textit{Cash Additive}, if $f_t(X+ m) = f_t(X)- m$; 
	\item \textit{Monotone increasing}, if $X \leq Y$ implies $f_t(X) \leq f_t(Y)$; 
	\item \textit{Monotone decreasing}, if $-f$ is monotone increasing; 
	\item \textit{Sub-additive}, if $f_t( X + Y) \leq f_t(X)+ f_t(Y)$; 
	\item \textit{Positive Homogeneous}, if $f_t(\gamma X) = \gamma f_t(X)$; 
	\item \textit{Quasi-concave}, if $f_{t}(X) \geq n$ and  $f_{t}\left(Y\right) \geq n$,  then  $f_{t}\left(\lambda X+ (1-\lambda) Y\right) \geq n$;  
	\item \textit{Scale Invariance}, if $f_{t}(\beta X)= f_{t}(X)$;
	\item \textit{Law-invariant}, if the value of $f_t(X)$ depends only on the conditional distribution of $X$, namely for any $X,Y\in L^\infty$ such that $\bP(X\in A|\sF_t)=\bP(Y\in A|\sF_t)$, for any $A\in\cB(\bP)$, we have that $f_t(X)=f_t(Y)$,
\end{itemize}
for $t \in \mathcal{T}$, $X, Y \in L^\infty$, $A \in \sF_{t}$, $m \in L^\infty_t$, $n, \gamma \in L^\infty_{t, +}$, $\lambda \in L^{\infty}_t$, $0 \leq \lambda \leq 1$, $\beta\in L_t^\infty, \ \beta>0$.

\subsection{On DCRMs}

For any $\psi,  X \in L^\infty$  and $t \in \mathcal{T}$, we define 
\begin{align} \label{eq:ChoquetDCRM-supp}
	\rho_t^\psi(X):= \int_{[0,\infty)}   \psi \left(\bP(-X > y \mid \sF_t)\right)   dy+ \int_{(-\infty,0)}  \left[ \psi(\bP(-X > y \mid \sF_t)) -1 \right]  \dif y. 
\end{align} 
\begin{definition}\label{def:DCRM-supp} 
	A \textit{Dynamic Coherent Risk Measure} (DCRM) is a function $\rho: \mathcal{T} \times L^\infty \times \Omega \rightarrow \mathbb{R}$ that is adapted, normalized,  local, cash-additive, monotone decreasing, sub-additive and positive homogeneous.
\end{definition}

\begin{proposition} \label{prop:DCRM-supp} 
	The mapping $\rho^\psi$ is a DCRM. 
\end{proposition}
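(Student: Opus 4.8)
The plan is to verify the seven defining properties of Definition~\ref{def:DCRM-supp} one at a time, treating most of them as routine and isolating sub-additivity as the genuine difficulty. First I would record that $\rho^\psi_t$ is well defined and real-valued: since $X\in L^\infty$, the event $\{-X>y\}$ is null for $y>\norm{X}_\infty$ and has full measure for $y<-\norm{X}_\infty$, so the integrand vanishes off the bounded interval $[-\norm{X}_\infty,\norm{X}_\infty]$ and each integral in \eqref{eq:ChoquetDCRM-supp} is finite. \emph{Adaptedness} follows because $(y,\omega)\mapsto \bP(-X>y\mid\sF_t)(\omega)$ is jointly measurable and $\sF_t$-measurable in $\omega$ for each $y$, $\psi$ is Borel (being continuous by Lemma~\ref{le:disprop1-1}), and integration in $y$ over a fixed set preserves $\sF_t$-measurability by Fubini--Tonelli. \emph{Normalization} is direct: for $X=0$ one has $\bP(-0>y\mid\sF_t)=\1_{\{y<0\}}$, so $\psi(\cdot)=\psi(0)=0$ on $[0,\infty)$ and $\psi(\cdot)-1=\psi(1)-1=0$ on $(-\infty,0)$, whence both integrals vanish. \emph{Monotone decrease} is immediate from the pointwise monotonicity of the integrand: $X\le Y$ gives $\{-X>y\}\supseteq\{-Y>y\}$, hence $\bP(-X>y\mid\sF_t)\ge\bP(-Y>y\mid\sF_t)$, and since $\psi$ is non-decreasing the integrands for $X$ dominate those for $Y$ at every $y$.

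Next I would dispatch the three properties that reduce to pointwise (in $\omega$) manipulations of the defining integral. For \emph{cash-additivity} with $m\in L^\infty_t$, I use $\bP(-(X+m)>y\mid\sF_t)(\omega)=\bP(-X>y+m(\omega)\mid\sF_t)(\omega)$, valid since $m$ is frozen under the conditioning; the substitution $y\mapsto y+m(\omega)$ then shifts the split point of the two integrals from $0$ to $-m(\omega)$, and a short bookkeeping of the resulting boundary strip yields $\rho^\psi_t(X+m)(\omega)=\rho^\psi_t(X)(\omega)-m(\omega)$. For \emph{positive homogeneity} with $\gamma\in L^\infty_{t,+}$, on $\{\gamma>0\}$ the change of variables $y\mapsto y/\gamma(\omega)$ in $\bP(-\gamma X>y\mid\sF_t)(\omega)=\bP(-X>y/\gamma(\omega)\mid\sF_t)(\omega)$ produces the factor $\gamma(\omega)$, while on $\{\gamma=0\}$ one has $\gamma X=0$ and both sides equal $0$ by normalization. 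For \emph{locality} with $A\in\sF_t$, I write $\{-\1_A X>y\}=(\{-X>y\}\cap A)\cup(\{y<0\}\cap A^c)$ and use $\bP(\,\cdot\cap A\mid\sF_t)=\1_A\bP(\,\cdot\mid\sF_t)$ to get $\1_A\bP(-\1_A X>y\mid\sF_t)=\1_A\bP(-X>y\mid\sF_t)$ for every $y$; applying $\psi$ and integrating gives $\1_A\rho^\psi_t(\1_A X)=\1_A\rho^\psi_t(X)$.

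The crux is \emph{sub-additivity}, and this is where Lemma~\ref{le:sub2} and Lemma~\ref{le:ineqDCRM} enter. I would first reduce to non-negative random variables: using the already-established cash-additivity, replacing $X,Y$ by $X+\norm{X}_\infty,\,Y+\norm{Y}_\infty\ge0$ leaves the target inequality $\rho^\psi_t(X+Y)\le\rho^\psi_t(X)+\rho^\psi_t(Y)$ unchanged, as the additive constants cancel. Next I treat non-negative step functions $X=\sum_i x_i\1_{A_i}$ and $Y=\sum_i y_i\1_{A_i}$ written on a common refinement $A_1,\dots,A_n$ of $\Omega$. Fixing $\omega\in\Omega\backslash N^X$, I let $Q^\omega$ be the probability measure of Lemma~\ref{le:sub2} built from the labelling that arranges the values of $X+Y$ in increasing order; by the equality case established inside the proof of Lemma~\ref{le:ineqDCRM} this gives $\rho^\psi_t(X+Y)(\omega)=\bE^{Q^\omega}\!\big(-(X+Y)\big)=\bE^{Q^\omega}(-X)+\bE^{Q^\omega}(-Y)$. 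Applying the inequality \eqref{eq:QOmega2} of Lemma~\ref{le:ineqDCRM} to $X$ and to $Y$ \emph{against this same} $Q^\omega$ bounds the two summands by $\rho^\psi_t(X)(\omega)$ and $\rho^\psi_t(Y)(\omega)$ respectively, which is exactly sub-additivity for step functions. Finally I pass to general $X,Y\in L^\infty$ by uniform approximation: monotonicity and cash-additivity together show $\rho^\psi_t$ is $1$-Lipschitz in the $\norm{\cdot}_\infty$ norm, so choosing step functions converging uniformly to $X$ and $Y$ carries the step-function inequality to the limit.

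The main obstacle is this sub-additivity step, because the Choquet integral is only sub-additive (not additive) in general and the inequality hinges on the concavity of $\psi$ through the comonotone rearrangement argument encoded in Lemma~\ref{le:ineqDCRM}; the delicate point is to use one and the same measure $Q^\omega$, the one adapted to the ordering of $X+Y$, simultaneously for both $X$ and $Y$, rather than the measures adapted to their individual orderings. The remaining care throughout is measurability and the $\omega$-wise handling of conditional quantities via the regular conditional distribution on $\Omega\backslash N^X$, as already set up in the preliminaries.
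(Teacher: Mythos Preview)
Your proof is correct and, for the hard part (sub-additivity), follows the same line as the paper: Lipschitz continuity from monotonicity plus cash-additivity, reduction to step functions on a common refinement, choice of the measure $Q^\omega$ adapted to the ordering of $X+Y$ so that the Choquet integral of $X+Y$ is realized as an expectation, and then Lemma~\ref{le:ineqDCRM} applied to $X$ and $Y$ separately against that same $Q^\omega$. Your explicit remark that the whole point is to use \emph{one} $Q^\omega$ for both summands is exactly the mechanism.

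Where you differ from the paper is in cash-additivity and positive homogeneity. You argue directly via the ``freezing'' identity
\[
\bP(-(X+m)>y\mid\sF_t)(\omega)=\bP(-X>y+m(\omega)\mid\sF_t)(\omega),\qquad
\bP(-\gamma X>y\mid\sF_t)(\omega)=\bP(-X>y/\gamma(\omega)\mid\sF_t)(\omega),
\]
and then do a one-line change of variables in $y$. The paper instead approximates $m$ (respectively $\gamma$) by $\sF_t$-measurable step functions $m^n=\sum_i y_i\1_{A_i}$, proves the identity on each atom $A_i$ by pulling $\1_{A_i}$ through the conditional expectation, and passes to the limit via monotone and dominated convergence. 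Your route is shorter and cleaner; what it costs is that the freezing identity itself is not entirely trivial: one must check that evaluating the regular conditional distribution at the $\omega$-dependent Borel set $(-\infty, y+m(\omega)]$ agrees a.s.\ with the conditional probability of the event $\{-X>y+m\}$. The paper does record this as Lemma~\ref{le:omega}, so you could simply cite it. The paper's step-function route avoids invoking that lemma at this stage but pays for it with the approximation and convergence bookkeeping.
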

\begin{proof} 	
	First we show that \eqref{eq:ChoquetDCRM-supp} is well-defined, that is:
	\begin{enumerate}[(i)]
		\item For any $t \in \mathcal{T}$, $X \in L^\infty$ and $y  \in \mathbb{R}$, the mapping $y \mapsto \psi(\bP(-X > y \mid \sF_t)(\omega))$ is Borel measurable, and thus the integrals in \eqref{eq:ChoquetDCRM-supp} are well-defined; 
		\item $\rho_t^\psi(X) \in L^\infty$, for all $t \in \mathcal{T}$, $X \in L^\infty$.
	\end{enumerate}
	Since for any $ \omega \in \Omega \backslash N^{X}$, $\bP(-X > y \mid \sF_t)(\omega)$ is a distribution function, then it is Borel measurable with respect to $y$. Consequently, $\psi$ being continuous, preserves measurability, and (i) is proved.  Next take $a$ and $b$ to be the essential infimum and the essential supremum of $X$, respectively, and without loss of generality, assume that $a < 0 <  b$. Then,  
	\begin{align*}
		\rho_t^\psi(X)(\omega) &= \int_{[0,\infty)}   \psi \left(\bP(-X > y \mid \sF_t)\right) \dif y+ \int_{(-\infty,0)}  \left[ \psi(\bP(-X > y \mid \sF_t)) -1 \right]  \dif y\\
		&=  \int_{[0,-a]}  \psi \left(\bP(-X > y \mid \sF_t)\right)\dif y+ \int_{[-b,0)}  \left[ \psi(\bP(-X > y \mid \sF_t)) -1 \right]  \dif y\\
		&\in [-b, -a], 
	\end{align*} 
	which means that $\rho_t^\omega$ is bounded and (ii) is proved. 
	
	Next we will show that $\rho^\psi$ satisfies all DCRM properties, that is, adapted, normalized, monotone, local, cash-additive, monotone decreasing, sub-additive and positive homogeneous. 
	
	\smallskip 
	\noindent
	\textit{Adaptiveness.} 
	For any $ \omega \in \Omega \backslash N^{X}$ and $y \in \mathbb{R}$,  $\bP(-X > y \mid \sF_t)(\omega)$ is a regular conditional distribution, and hence it is $\sF_t \bigotimes \mathcal{B}(\mathbb{R})$ joint measurable. Continuity of $\psi $ preserves joint measurability, by Tonelli theorem, $\omega \mapsto  \rho_t^\psi(X)(\omega)$ is $\sF_t$-measurable.
	
	\smallskip 
	\noindent
	\textit{Normalization.}
	\begin{align*}
		\rho_t^\psi(0)= \int_{[0,\infty)}   \psi \left(\bP(0 > y \mid \sF_t)\right)   \dif y+ \int_{(-\infty,0)}  \left[ \psi(\bP(0 > y \mid \sF_t)) -1 \right] \dif y= 0. 
	\end{align*} 
	
	\smallskip 
	\noindent
	\textit{Locality.}  		
	For any $ \omega \in \Omega \backslash  N^X $ and $A \in \sF_t$, note that, 
	\begin{align*}
		\bP(-\1_{A} X > y \mid \sF_t)(\omega) = 
		\begin{cases}
			\bP( A \cap \{ -X > y \} \mid \sF_t)(\omega), & y \in [0, \infty) \\
			\bP( \{ A \cap \{ -X > y \} \} \cup A^c \mid \sF_t)(\omega), & y \in (-\infty, 0). 
		\end{cases}
	\end{align*}
	Denote by $J_1= \1_{A}(\omega) \rho_t^\psi(\1_{A}  X)(\omega)$ and we will show that $J_1= \1_{A}(\omega) \rho_t^\psi(X)(\omega)$.
	\begin{align*}
		J_1  & = \1_{A}(\omega)  \int_{[0, \infty)}   \psi(\bP(-\1_{A} X > y \mid \sF_t)(\omega)) \dif y \\ 
		& \qquad +  \1_{A}(\omega)  \int_{(-\infty, 0)}   [ \psi(\bP(-\1_{A}  X > y\mid \sF_t)(\omega)) -1 ]  \dif y\\
		&=  \1_{A}(\omega)  \int_{[0, \infty)}   \psi(\bP( A \cap \{ -X > y \} \mid  \sF_t)(\omega))   \dif y \\
		& \qquad +  \1_{A}(\omega)  \int_{(-\infty, 0)}   [ \psi(\bP( \{ A \cap \{ -X > y \} \} \cup A^c \mid \sF_t)(\omega)) -1 ]  \dif y\\
		& = \1_{A}(\omega) \int_{[0, \infty)}   \psi(\bE( \1_A \1_{ -X > y } \mid \sF_t)(\omega))   \dif y \\
		& \qquad +    \1_{A}(\omega) \int_{(-\infty, 0)}   [ \psi(\bE( \1_A \1_{ -X > y } +\1_{A^c}  \mid \sF_t)(\omega)) -1 ]  \dif y.
	\end{align*} 
	Then, since $\psi(0)= 0$ and $\psi(1)=1$, we can move $\1_A$ outside of $\psi$ and continue the proof as following,  
	\begin{align*}
		J_1  & = \1_{A}(\omega)  \int_{[0, \infty)}   \1_A(\omega)  \psi(\bP( -X > y \mid\sF_t)(\omega))   \dif y \\
		& \qquad +   \1_{A}(\omega)  \int_{(-\infty, 0)}   [ \1_A(\omega) \psi(  \bP(  -X > y \mid \sF_t)(\omega))+ \1_{A^c}(\omega) -1 ] \dif y\\
		& =  \1_{A}(\omega)  \int_{[0, \infty)}  \psi(\bP( -X > y \mid \sF_t)(\omega))   \dif y \\
		& \qquad  +  \1_{A}(\omega) \int_{(-\infty, 0)}   [ \1_A(\omega) \psi(  \bP(  -X > y \mid \sF_t)(\omega))- \1_{A}(\omega) ]  \dif y\\
		& =  \1_{A}(\omega)  \int_{[0, \infty)}  \psi(\bP( -X > y \mid \sF_t)(\omega))   \dif y   \\
		& \qquad +  \1_{A}(\omega) \int_{(-\infty, 0)}   [ \psi(  \bP(  -X > y \mid \sF_t)(\omega))-1 ]  \dif y\\
		& =  \1_{A}(\omega) \rho_t^\psi(X)(\omega). 
	\end{align*}

	\smallskip\noindent
	\textit{Cash Additivity.} 		
	We first prove cash additivity for $m \in L^\infty_{t, +}$. For any~$m \in L^\infty_{t, +}$, there exists a sequence of step functions,
	$$
	m^n:= \sum_{i=1}^{N_n} y_i \1_{A_i}, \hspace{3mm} y_i \geq 0, \hspace{3mm} A_i \in \sF_t,
	$$
	increasingly converge to $m$ in $L^\infty$, where $N_n$ is a constant and $A_i \cap A_j= \emptyset$, for $i \neq j$. For any $X \in L^\infty$, with fixed $y \in \mathbb{R}$, $ \{ X+ m^n> y \}$ is an increasing sequence of sets converging to $ \{ X+ m> y \}$. Note that for any $n \in \bN$, there exists a null set $N^{X, m^n, t} \in \sF_t$, such that for any $\omega \in \Omega \backslash N^{X, m^n, t}$,  $\bP(X+ m^n> y \mid \sF_t)(\omega)$ is a regular distribution.
	
	Let $M= \bigcup_{n \in \bN} \bigcup_{t \in \cT} N^{X, m^n, t} \bigcup N^{X, m}$, which is a null set as a countable union of null sets.  For any $ \omega \in \Omega \backslash M$, by continuity of probability,
	\[
	\bP(X+ m^n> y \mid \sF_t)(\omega) \nearrow \bP(X+ m > y \mid \sF_t)(\omega), \quad n\to\infty. 
	\] 
	By left-continuity of $\psi$, as $n\to\infty$,
	\[
	\psi(\bP(X+ m^n> y \mid \sF_t)(\omega)) \nearrow \psi(\bP( X+ m> y \mid \sF_t)(\omega)).
	\]
	Then by monotone convergence theorem, as $n\to\infty$,
	\begin{equation} \label{eq:ca1}
		\int_{[0,\infty)} \psi (\bP(X+ m^n >y \mid \sF_t)(\omega)) \dif y \rightarrow \int_{[0,\infty)}  \psi (\bP(X+ m >y\mid\sF_t)(\omega)) \dif y. \\
	\end{equation}
	Note that for any $n \in \bN$ and $y \in \mathbb{R}_{-}$,
	\[
	\left|\psi(\bP(X+ m^n> y\mid\sF_t)(\omega))- 1 \right| \leq |\psi(\bP(X> y\mid\sF_t)(\omega))- 1|,
	\]
	and the latter is integrable on $(-\infty, 0)$. By dominated convergence theorem, as $n\to\infty$,
	\begin{equation} \label{eq:ca2}
		\int_{(-\infty,0)} [\psi(\bP(X+ m^n> y \mid \sF_t)(\omega))- 1] \dif y \rightarrow \int_{(-\infty,0)}   [\psi(\bP(X+ m> y\mid \sF_t)(\omega))- 1] \dif y.
	\end{equation}
	Combining \eqref{eq:ca1} and \eqref{eq:ca2}, and obtain
	\begin{equation} \label{eq:ca3}
		\rho_t^\psi(X+ m^n)(\omega) \rightarrow \rho_t^\psi(X+ m)(\omega), \quad n\to\infty.  
	\end{equation}
	It is enough to show that,
	\begin{equation} \label{eq:ca4}
		\rho_t^\psi(X+ m^n)(\omega) = \rho_t^\psi(X)(\omega)- m^n(\omega), 
	\end{equation}
	since this combined with \eqref{eq:ca3}  imply the desired equality $\rho_t^\psi(X+ m)(\omega)= \rho_t^\psi(X)(\omega)- m(\omega)$. Let us prove next \eqref{eq:ca4}, for which we put $J_2 := \rho_t^\psi(X+ m^n)$. Then, 
	\begin{align*}
		J_2 &= \int_{[0, \infty)}   \psi(\bP(-X- m^n > y\mid\sF_t))   \dif y+ \int_{(-\infty, 0)}   \left[ \psi(\bP(-X-m^n > y\mid\sF_t)) -1 \right]  \dif y\\
		&= \int_{[0, \infty)}   \psi\left(\bP\left(-X- \sum_{i=1}^{N_n} y_i \1_{A_i} > y\mid\sF_t\right)\right)   \dif y \\
		& \qquad \qquad + \int_{(-\infty, 0)}   \left[ \psi\left(\bP\left(-X- \sum_{i=1}^{N_n} y_i 1_{A_i} > y\mid\sF_t\right)\right) -1 \right]  \dif y\\
		&= \int_{[0, \infty)}   \psi\left(\bP\left(-X  > \sum_{i=1}^{N_n} (y+y_i)\1_{A_i} \mid\sF_t\right)\right)   \dif y \\
		&\qquad \qquad + \int_{(-\infty, 0)}   \left[ \psi\left(\bP\left(-X  > \sum_{i=1}^{N_n} (y+y_i)\1_{A_i} \mid\sF_t\right)\right) -1 \right]  \dif y\\
		&= \int_{[0, \infty)}   \psi\left(\bE\left(\1_{-X  > \sum_{i=1}^{N_n} (y+y_i)\1_{A_i}} \sum_{i=1}^{N_n}\1_{A_i} \mid\sF_t\right)\right)   \dif y \\
		& \qquad \qquad +  \int_{(-\infty, 0)}  \left[  \psi\left(\bE\left(\1_{-X  > \sum_{i=1}^{N_n} (y+y_i)\1_{A_i}} \sum_{i=1}^{N_n}\1_{A_i}\mid\sF_t\right)\right) -1 \right]  \dif y\\
		&= \int_{[0, \infty)}   \psi\left(\bE\left(\sum_{i=1}^{N_n} \1_{-X  > (y+y_i)} \1_{A_i} \mid\sF_t\right)\right)   \dif y \\
		& \qquad \qquad  + \int_{(-\infty, 0)}   \left[  \psi\left(\bE\left(\sum_{i=1}^{N_n} \1_{-X  > (y+y_i)} \1_{A_i}\mid\sF_t\right)\right) -1 \right]  \dif y\\
		&= \int_{[0, \infty)}   \psi\left(\sum_{i=1}^{N_n} \1_{A_i} \bE\left(\1_{-X  > (y+y_i)}\mid\sF_t\right)\right)  \dif y \\
		& \qquad \qquad + \int_{(-\infty, 0)}    \left[ \psi\left(\sum_{i=1}^{N_n} \1_{A_i} \bE\left(\1_{-X  > (y+y_i)}\mid\sF_t\right)\right) -1 \right]  \dif y\\
		&=  \sum_{i=1}^{N_n} \1_{A_i} \int_{[0, \infty)}  \psi\left( \bP\left(-X  > (y+y_i)\mid\sF_t\right)\right)   \dif y \\
		& \qquad \qquad +  \sum_{i=1}^{N_n} \1_{A_i} \int_{(-\infty, 0)}  \left[ \psi\left( \bP\left(-X  > (y+y_i)\mid\sF_t\right)\right) -1 \right]  \dif y.
	\end{align*}
	Take $z_i= y+y_i$, and since $y_i >0$,  we continue the proof as follows,
	\begin{align*}
		J_2 & =  \sum_{i=0}^{N_n} \1_{A_i} \left\{ \int_{[y_i, \infty)} \psi(\bP(-X > z_i\mid\sF_t))   \dif z_i+ \int_{(-\infty, y_i)}   [ \psi(\bP(-X > z_i\mid\sF_t)) -1 ]  \dif z_i \right\} \\
		& = \sum_{i=0}^{N_n} \1_{A_i} \left\{ \int_{[y_i, \infty)} \psi(\bP(-X > z_i\mid\sF_t))   \dif z_i+ \int_{[0, y_i)} \left[ \psi(\bP(-X > z_i\mid\sF_t)) -1 \right]  \dif z_i \right\}\\
		& \qquad \qquad + \sum_{i=0}^{N_n} \1_{A_i} \left\{ \int_{(-\infty, 0)}   [ \psi(\bP(-X > z_i\mid\sF_t)) -1 ]  \dif z_i \right\} \\
		& = \sum_{i=0}^{N_n} \1_{A_i} \left\{ \int_{[0, \infty)} \psi(\bP(-X > z_i\mid\sF_t))   \dif z_i+ \int_{(-\infty, 0)}   \left[ \psi(\bP(-X > z_i\mid\sF_t)) -1 \right]  \dif z_i  \right\} \\
		& \qquad \qquad -\sum_{i=0}^{N_n} y_i \1_{A_i} \\
		&= \int_{[0, \infty)}   \psi(\bP(-X > z\mid\sF_t))   \dif z+ \int_{(-\infty, 0)}   \left[ \psi(\bP(-X > z\mid\sF_t)) -1 \right]  \dif z- m^n\\
		&= \rho_t^\psi(X)- m^n.
	\end{align*}
	Thus, \eqref{eq:ca4} is proved.
	
	For general $m\in L^\infty$, denote by $M$ its essential supremum, and let let $\hat{m}= m+  M \geq 0 $. By cash additivity of $\rho_t^\psi$ for non-negative $\hat{m}$,
	\begin{equation} \label{eq:ca5}
		\rho_t^\psi(X+ \hat{m})(\omega)= \rho_t^\psi(X)(\omega)- \hat{m}(\omega)= \rho_t^\psi(X)(\omega)- m(\omega)- M.
	\end{equation}
	Also by cash additivity of $\rho_t^\psi$ for constant $M$, i.e. $\rho_t^\psi(X+ M)(\omega)= \rho_t^\psi(X)(\omega)- M$, we have
	\begin{equation} \label{eq:ca6}
		\rho_t^\psi(X+ \hat{m})(\omega)= \rho_t^\psi(X+ m+ M)(\omega)= \rho_t^\psi(X+ m)(\omega)- M.
	\end{equation}
	Combine \eqref{eq:ca5} and \eqref{eq:ca6}, and deduce that $\rho_t^\psi(X+ m)(\omega)= \rho_t^\psi(X)(\omega)- m(\omega)$. Thus, cash-additivity is established.
	
	\smallskip\noindent
	\textit{Monotonicity.} For any $X, Y \in L^\infty$ with $X \leq Y $, we have for any $ y \in \mathbb{R}$, $\bP(-X > y\mid\sF_t) \geq \bP(-Y > y\mid\sF_t)$. Since $\psi$ is increasing,  $\psi(\bP(-X > y\mid\sF_t)) \geq \psi(\bP(-Y > y\mid\sF_t))$. Then we have,
	\begin{align*}
		\rho_t^\psi(X)&= \int_{[0, \infty)}   \psi(\bP(-X > y\mid\sF_t))   \dif y+ \int_{(-\infty, 0)}   [ \psi(\bP(-X > y\mid\sF_t)) -1 ]  \dif y \\
		&\geq \int_{[0, \infty)}   \psi(\bP(-Y > y\mid\sF_t))   \dif y+ \int_{(-\infty, 0)}  [ \psi(\bP(-Y > y\mid\sF_t)) -1 ]  \dif y = \rho_t^\psi(Y).
	\end{align*}
	
	\smallskip\noindent
	\textit{Sub-additivity.} This part of the proof is inspired by \cite[Lemma 4.92]{FollmerSchiedBook2004}.

	First,  we observe that $\rho_t^\psi$ is Lipschitz continuous with respect to the $L^\infty$ norm, i.e. for  given $X$, $Y \in L^\infty$,
	\begin{equation}\label{eq:Lip}
		\norm{\rho_t^\psi(X) - \rho_t^\psi(Y)}_\infty  \leq \norm{X- Y}_\infty.
	\end{equation}
	Indeed, since $X \leq Y+ \norm{X- Y }_\infty$, by monotonicity of $\rho_t^\psi$, we have
	$\rho_t^\psi(X) \geq \rho_t^\psi(Y + \norm{X- Y}_\infty)$, and according to cash additivity of $\rho_t^\psi$, $	\rho_t^\psi(X) \geq \rho_t^\psi(Y)- \norm{X- Y}_\infty.$ Reversing the roles of $X$ and $Y$ yields   \eqref{eq:Lip}.

	In view of Lemma~\ref{le:ineqDCRM},  given $\psi \in \Upsilon$ and $t \in \mathcal{T}$, for any $\omega \in \Omega \backslash N^X$, we build the probability measure $Q^{\omega}$, such that for any non-negative step function $X = \sum_{i=1}^{n} x_{i} \1_{A_{i}}$,
	\[ 
	\sum_{i=1}^{n} -x_{i}Q^{\omega}[A_{i}] \leq  \rho_t^\psi(X)(\omega).
	\]
	In particular, the equality holds if $x_{1} \leq \ldots \leq x_{n}$. Now if $X$ is not non-negative, take $\hat{X} = X +  M \geq 0 $,  where $M= \esssup X$. By cash additivity of $\rho_t^\psi$,
	\[
	\rho_t^\psi(\hat{X})(\omega)= \rho_t^\psi(X + M)(\omega)= \rho_t^\psi(X)(\omega)- M.
	\]
	Since $\hat{X} \geq 0$,
	\[
	\rho_t^\psi(\hat{X})(\omega) \geq \sum_{i=1}^{n} [-x_{i}- N] Q^{\omega}(A_{i})= \sum_{i=1}^{n} -x_{i} Q^{\omega}(A_{i}) - M.
	\]
	Then we have,
	\[
	\rho_t^\psi(X)(\omega) \geq \sum_{i=1}^{n} -x_{i} Q^{\omega}(A_{i}).
	\]
	and therefore Lemma~\ref{le:ineqDCRM} holds true all step functions.
	
	Now back to the proof of sub additivity of $\rho_t^\psi$. Given $X$, $Y \in L^\infty$, there exist two sequences of step functions $X^n$ and $Y^n$ converging to $X$ and $Y$ respectively in $L^\infty$. Then, by \eqref{eq:Lip}, to prove sub-additivity, it is enough to show that,  	for any $\omega \in \Omega \backslash N^X$,
	\begin{equation} \label{eq:subine1}
		\rho_t^\psi(X^n + Y^n)(\omega) \leq \rho_t^\psi(X^n)(\omega) + \rho_t^\psi(Y^n)(\omega).
	\end{equation}
	For fixed $n$, let $A_1, \ldots , A_{m}$ be the (disjoint) partition of underlying probability space $\Omega$, such that $X^n= \sum_{i=1}^{m} x_i \1_{A_i}$ and $Y^n = \sum_{i=1}^{m} y_i \1_{A_i}$,  and  assume that the indices $i = 1,\ldots, m$ are arranged such that $x_1 + y_1 \leq \ldots \leq x_m + y_m $. In view of  generalized version of Lemma~\ref{le:ineqDCRM}, for any $\omega \in \Omega \backslash N^X$, one can construct a probability $Q^{\omega}$, such that,
	\[
	\rho_t^\psi(X^n + Y^n)(\omega) = \sum_{i=1}^{m} (-x_i-y_i) Q^{\omega}[A_{i}] = \sum_{i=1}^{m} -x_iQ^{\omega}[A_{i}] -y_iQ^{\omega}[A_{i}].
	\]
	and
	\begin{align*}
		\sum_{i=1}^{m}  -x_iQ^{\omega}[A_{i}] \leq \rho_t^\psi(X^n)(\omega), \quad 
		\sum_{i=1}^{m}  -y_iQ^{\omega}[A_{i}] \leq \rho_t^\psi(Y^n)(\omega). 		
	\end{align*}
	In conclusion,
	\[
	\rho_t^\psi(X^n + Y^n)(\omega) = \sum_{i=1}^{m} -x_iQ^{\omega}[A_{i}] -y_iQ^{\omega}[A_{i}]  \leq \rho_t^\psi(X^n)(\omega) + \rho_t^\psi(Y^n)(\omega),
	\]
	and thus \eqref{eq:subine1} is proved.

	\smallskip\noindent
	\textit{ Positive Homogeneity.}
	For any $\lambda \in L^\infty_{t, +}$, there exists a sequence of step functions,
	\[
	\lambda^n:= \sum_{i=1}^{N_n} y_i \1_{A_i}, \hspace{3mm} y_i \geq 0, \hspace{3mm} A_i \in \sF_t,
	\]
	increasing and convergent to $\lambda$ in $L^\infty$, where $N_n$ is a constant and $A_i$ are disjoint. Note that for any $n \in \bN$, there exists a null set $N^{X, \lambda^n, t} \in \sF_t$, such that for any $\omega \in \Omega \backslash N^{X, \lambda^n, t}$,  $\bP(\lambda^n X> y \mid \sF_t)(\omega)$ is a regular distribution.
	
	Let $M= \bigcup_{n \in \bN} \bigcup_{t \in \cT} N^{X, \lambda^n, t} \bigcup N^{X, \lambda}$, since $M$ is a countable union of null sets, it is still a null set. For any $X \in L^\infty$ and fixed $y \in [0, \infty)$, $ \{ \lambda^n X> y \}$ is an increasing sequence of sets. For any  $ \omega \in \Omega \backslash M$, by continuity of probability for increasing sequence of sets,
	\[
	\bP(\lambda^n X > y\mid\sF_t)(\omega) \nearrow \bP(\lambda X > y\mid\sF_t)(\omega), \quad n\to\infty.  
	\] 
	By continuity of $\psi$, as $n\to\infty$,
	\[
	\psi(\bP(\lambda^n X> y\mid\sF_t)(\omega)) \nearrow \psi(\bP( \lambda X> y\mid\sF_t)(\omega)).
	\]
	In view of the monotone convergence theorem,
	\begin{equation} \label{eq:ph1}
		\int_0^\infty \psi (\bP(\lambda^n X >y\mid\sF_t)(\omega)) \dif y \xrightarrow[n\to\infty]{} \int_0^\infty  \psi (\bP(\lambda X >y\mid\sF_t)(\omega)) \dif y.
	\end{equation}
	Similarly, one can show that 
	\begin{equation} \label{eq:ph2}
		\int_{-\infty}^0  [\psi(\bP(\lambda^n X> y\mid\sF_t)(\omega))- 1] \dif y \xrightarrow[n\to\infty]{} \int_{-\infty}^0   [\psi(\bP(\lambda X> y\mid\sF_t)(\omega))- 1] \dif y.
	\end{equation}
	Combining \eqref{eq:ph1} and \eqref{eq:ph2},  we obtain,
	\[
	\rho_t^\psi(\lambda^n X)(\omega) \xrightarrow[n\to\infty]{} \rho_t^\psi(\lambda X)(\omega).
	\]
	From here, the desired equality $\rho_t^\psi(\lambda X)(\omega)= \lambda \rho_t^\psi(X)(\omega)$ will follow once we show that 
	\[
	\rho_t^\psi(\lambda^n X)(\omega) = \lambda^n \rho_t^\psi(X)(\omega),
	\]
	which we prove next. 
	\begin{align*}
		\rho_t^\psi(\lambda^n X)&= \int_{[0, \infty)}   \psi(\bP(-\lambda^n X > y\mid\sF_t))   \dif y+ \int_{(-\infty, 0)}   [ \psi\left(\bP\left(-\lambda^n X > y\mid\sF_t\right)\right) -1 ]  \dif y\\
		&= \int_{[0, \infty)}   \psi\left(\bP\left(-\sum_{i=1}^{N_n} y_i \1_{A_i} X > y\mid\sF_t\right)\right)   \dif y \\
		& \qquad \qquad  + \int_{(-\infty, 0)}   \left[ \psi\left(\bP\left(-\sum_{i=1}^{N_n} y_i \1_{A_i} X > y\mid\sF_t\right)\right) -1 \right]  \dif y\\
		&= \int_{[0, \infty)}   \psi\left(\bP\left(-X  > \sum_{i=1}^{N_n} (y/y_i)\1_{A_i} \mid\sF_t\right)\right)   \dif y \\
		& \qquad \qquad + \int_{(-\infty, 0)}   \left[ \psi\left(\bP\left(-X  > \sum_{i=1}^{N_n} (y/y_i)\1_{A_i} \mid\sF_t\right)\right) -1 \right]  \dif y\\
		&= \int_0^{\infty}   \psi\left(\bE\left(\1_{-X  > \sum_{i=1}^{N_n} (y/y_i)\1_{A_i}} \sum_{i=1}^{N_n}\1_{A_i} \mid\sF_t\right)\right)   \dif y\\
		& \qquad \qquad + \int_{-\infty}^0   \left[ \psi\left(\bE\left(\1_{-X  > \sum_{i=1}^{N_n} (y/y_i)\1_{A_i}} \sum_{i=1}^{N_n}\1_{A_i}\mid\sF_t\right)\right) -1 \right]  \dif y\\
		&= \int_0^{\infty}   \psi\left(\bE\left(\sum_{i=1}^{N_n} \1_{-X  > y/y_i} \1_{A_i}\mid\sF_t\right)\right)   \dif y \\
		& \qquad \qquad  + \int_{-\infty}^0   \left[ \psi\left(\bE\left(\sum_{i=1}^{N_n} \1_{-X  > y/y_i} \1_{A_i}\mid\sF_t\right)\right) -1 \right]  \dif y\\
		&= \int_0^{\infty}   \psi\left(\sum_{i=1}^{N_n} \1_{A_i} \bE\left(\1_{-X  > y/y_i}\mid\sF_t\right)\right)   \dif y \\
		& \qquad \qquad + \int_{-\infty}^0   \left[ \psi\left(\sum_{i=1}^{N_n} \1_{A_i} \bE\left(\1_{-X  > y/y_i}\mid\sF_t\right)\right) -1 \right]  \dif y\\
		&=  \sum_{i=1}^{N_n} \1_{A_i} \int_0^{\infty}  \psi\left(\bP\left(-X  > y/y_i\mid\sF_t\right)\right)   \dif y \\
		& \qquad \qquad +  \sum_{i=1}^{N_n} \1_{A_i} \int_{-\infty}^0   [ \psi\left(\bP\left(-X  > y/y_i)\mid\sF_t\right)\right) -1 ]  \dif y\\
		&\overset{z_i= y/y_i}{=} \sum_{i=1}^{N_n} \1_{A_i} y_i \Big\{ \int_{0}^{\infty} \psi\left(\bP\left(-X > z_i\mid\sF_t\right)\right)   \dif z_i\\
		& \qquad \qquad + \int_{-\infty}^{0}   [ \psi\left(\bP\left(-X > z_i\mid\sF_t\right)\right) -1 ]  \dif z_i \Big\} \\
		&= \lambda^n \rho_t^\psi(X).
	\end{align*}
	Note that in the third equality, we assume $y_i > 0$ for all $i\in I$, otherwise just omit the terms in the sums with $y_j=0$.
	
	The proof is complete.
\end{proof}

Next results gives a convenient representation of the DCRMs generated by distortion functions.

\begin{lemma}
	For any $\psi\in\Upsilon$, the following representation holds 
	\begin{equation} \label{eq:ChoquetDCRM-supp2}
		\rho_t^\psi(X)= - \int_{\mathbb{R}} y \dif \psi(\bP(X \leq y\mid\sF_t)). 
	\end{equation}
\end{lemma}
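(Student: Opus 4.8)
The plan is to prove \eqref{eq:ChoquetDCRM-supp2} pointwise in $\omega$ by reducing the Choquet-type definition \eqref{eq:ChoquetDCRM-supp} to an ordinary Stieltjes integral against the distorted conditional distribution, and then invoking the layer-cake (tail) representation of the mean. Fix $\omega\in\Omega\backslash N^X$ and write $G(y):=\bP(X\leq y\mid\sF_t)(\omega)$ for the regular conditional distribution function of $X$; by construction this is a genuine cumulative distribution function, and since $\psi$ is continuous and non-decreasing with $\psi(0)=0$, $\psi(1)=1$, the composition $\widetilde G:=\psi\circ G$ is again the cumulative distribution function of some probability measure $\widetilde\mu$ on $\mathbb{R}$. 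Because $X\in L^\infty$, setting $a:=\essinf X$ and $b:=\esssup X$ we have $\widetilde G(y)=0$ for $y<a$ and $\widetilde G(y)=1$ for $y\geq b$, so $\widetilde\mu$ is supported in $[a,b]$. As in the proof of well-definedness I may assume without loss of generality that $a<0<b$; the general case follows from the already-established cash-additivity of $\rho^\psi_t$ (and of the right-hand side).

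First I would rewrite the integrands of \eqref{eq:ChoquetDCRM-supp}. Since $\{-X>y\}=\{X<-y\}$, we have $\bP(-X>y\mid\sF_t)(\omega)=G\big((-y)^-\big)$, the left-hand limit of $G$ at $-y$; as $G$ has at most countably many discontinuities, $\psi(G((-y)^-))=\psi(G(-y))=\widetilde G(-y)$ for Lebesgue-almost every $y$, so the strict/non-strict distinction is irrelevant under $\dif y$. Using the support of $\widetilde\mu$ to truncate both integrals and then substituting $u=-y$, I obtain
\begin{align*}
\rho_t^\psi(X)(\omega)
&=\int_{[0,-a]}\widetilde G(-y)\,\dif y+\int_{[-b,0)}\big[\widetilde G(-y)-1\big]\,\dif y\\
&=\int_{-\infty}^{0}\widetilde G(u)\,\dif u-\int_{0}^{\infty}\big[1-\widetilde G(u)\big]\,\dif u,
\end{align*}
where the final equality again uses that $\widetilde G=0$ on $(-\infty,a)$ and $1-\widetilde G=0$ on $[b,\infty)$.

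It then remains to identify the last expression with $-\int_{\mathbb{R}}u\,\dif\widetilde G(u)=-\int_{\mathbb{R}}y\,\dif\psi(\bP(X\leq y\mid\sF_t))$, which is precisely the tail (layer-cake) formula for the mean of $\widetilde\mu$: writing $1-\widetilde G(u)=\widetilde\mu((u,\infty))$ and $\widetilde G(u)=\widetilde\mu((-\infty,u])$ and applying Tonelli on $[a,b]$ gives $\int_0^\infty(1-\widetilde G(u))\,\dif u=\int_{(0,\infty)}v\,\widetilde\mu(\dif v)$ and $\int_{-\infty}^0\widetilde G(u)\,\dif u=-\int_{(-\infty,0]}v\,\widetilde\mu(\dif v)$, whose difference equals $\int_{\mathbb{R}}v\,\widetilde\mu(\dif v)=\int_{\mathbb{R}}u\,\dif\widetilde G(u)$. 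I expect the only delicate point to be the rigorous justification of this last identity: one must treat the Stieltjes integral $\int y\,\dif\widetilde G$ with care when $\widetilde G$ has a jump at the endpoint $b$ (which occurs precisely when $(\psi_+)'(1-)>0$, cf. Remark~\ref{rem:psi2mu}), so that the boundary term in the integration by parts is not dropped. Measurability of $\omega\mapsto\rho^\psi_t(X)(\omega)$ and finiteness of all integrals are already guaranteed by Proposition~\ref{prop:DCRM-supp}, so the whole argument can be carried out pointwise in $\omega$.
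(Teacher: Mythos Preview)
Your proof is correct and reaches the same conclusion, but by a slightly different route than the paper. The paper proceeds via integration by parts in $NBV$ (citing Folland), splitting $[a,b]$ at $0$ and carefully tracking the boundary terms $\psi(\bP(X\leq z\mid\sF_t))\,z\big|_{a-}^0$ and $\psi(\bP(X\leq z\mid\sF_t))\,z\big|_{0}^b$; the jump issue you flag at $b$ is precisely what the paper handles by writing the lower limit as $a-$ and using the $NBV$ integration-by-parts formula that allows common discontinuities only on a countable set. Your approach instead recognises $\widetilde G=\psi\circ G$ as a bona fide CDF supported in $[a,b]$ and applies the layer-cake (tail) formula for the mean via Tonelli. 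This is a genuine simplification: once you pass to $\widetilde\mu$ the identity $\int y\,\widetilde\mu(\dif y)=\int_0^\infty(1-\widetilde G)-\int_{-\infty}^0\widetilde G$ holds for \emph{any} compactly supported probability measure, so the endpoint-jump worry you mention is in fact a non-issue in your framework---the Tonelli computation already accounts for atoms, whereas integration by parts forces one to be explicit about them. The only (minor) cost of your route is the extra step of arguing that $\bP(-X>y\mid\sF_t)=G((-y)^-)$ agrees with $G(-y)$ Lebesgue-a.e., which the paper sidesteps by absorbing the countably many discontinuities into the statement that ``Lebesgue integrals below are not affected.''
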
 
\begin{proof}
	In the space $BV$ of functions with bounded variation, consider the set 
	\[
	NBV:= \{ F \in BV \mid F \hspace{2mm} \text{is right continuous and} \hspace{2mm} F(-\infty)= 0 \}.  
	\] 
	It can be shown (cf. \cite[Theorem 3.36, and Exercise 34(b), p.108]{Folland1999}) that if $F, G\in NBV$ and there are no points in $[a, b]$ where $F$ and $G$ are both discontinuous, then, 
	\[
	\int_{[a, b]} F(y)  \dif G(y)= F(b)G(b)- F(a-)G(a-)- \int_{[a, b]} G(y) \dif F(y).
	\]
	Let $a$ and $b$  be the essential infimum and the essential supremum of $X$ respectively and assume that $a < 0 <  b$. Notice that (potential) discontinuities in the distribution function $F_X$ are at most countable, and hence the Lebesgue integrals below are not affected. For any $\omega \in \Omega \backslash N^{X}$, we deduce
	\begin{align*}
		\rho_t^\psi(X)(\omega) & = \int_{[0,-a]} \psi(\bP(-X > y\mid\sF_t)(\omega)) \dif y + \int_{[-b, 0)} (\psi(\bP(-X > y\mid\sF_t)(\omega))- 1) \dif y \\ 
		& = \int_{[0,-a]} \psi(\bP(X < -y\mid\sF_t)(\omega)) \dif y + \int_{[-b, 0)} \left(\psi(\bP(X < -y\mid\sF_t)(\omega))- 1 \right) \dif y  \\
		& = \int_{[0,-a]} \psi(\bP(X \leq -y\mid\sF_t)(\omega)) \dif y + \int_{[-b, 0)} \left(\psi(\bP(X \leq -y\mid\sF_t)(\omega))- 1 \right) \dif y  \\
		& = \int_{[a,0]} \psi(\bP(X \leq z\mid\sF_t)(\omega)) \dif z + \int_{(0, b]} \left(\psi(\bP(X \leq z\mid\sF_t)(\omega)) \right) \dif z -b \\
		& = \psi(\bP(X \leq z\mid\sF_t)(\omega)) z|_{a-}^0- \int_{[a,0]} z \dif\psi(\bP(X \leq z\mid\sF_t)(\omega)) \\ 
		& \qquad \qquad +  \psi(\bP(X \leq z\mid\sF_t)(\omega))  z|_{0}^b- \int_{(0, b]} z \dif\left(\psi(\bP(X \leq z\mid\sF_t)(\omega)) \right)-b\\
		& = - \int_{[a,0]} z \dif\psi(\bP(X \leq z\mid\sF_t)(\omega)) - \int_{(0, b]} \dif \left(\psi(\bP(X \leq z\mid\sF_t)(\omega)) \right)  \\
		& =  - \int_{[a, b]}  z   \dif\psi(\bP(X \leq z\mid\sF_t)(\omega)). 
	\end{align*} 
\end{proof}

\begin{remark}
	We note that left continuity of $\psi$ is sufficient to prove positive homogeneity of $\rho^{\psi}_t$, however the right continuity of $\psi$ is  not only needed for proving cash additivity and positive homogeneity of $\rho^{\psi}_t$, but also serves as a sufficient condition for representation \eqref{eq:ChoquetDCRM-2} to hold.
\end{remark}

\subsection{On DCAIs}

In this section we present some fundamental properties on DCAIs. Throughout this section  we assume additionally the probability space is atomless. One of the reasons to invoke atomless property is due to the fact that a probability space supports random variables with continuous distribution if and only if such probability space is atomless; cf. \cite[Proposition~A.27]{FollmerSchiedBook2004}. 

We start with several results about families of distortion functions and the corresponding DCRMs.

\begin{lemma}   \label{le:cexin-supp}
	A family of DCRMs $(\rho^{\psi_x})_{x >0}$ is increasing if and only if $(\psi_{x})_{x >0}$ is an increasing family of distortion functions. 
	Moreover, 	a family of DCRMs $(\rho^{\psi_x})_{x >0}$ is right continuous if and only if  $(\psi_{x})_{x >0}$ is a right continuous family of distortion functions. 
\end{lemma}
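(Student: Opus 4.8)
The plan is to prove both equivalences by the same two-step pattern: each ``if'' direction follows directly from the integral representation \eqref{eq:ChoquetDCRM-supp}, while each ``only if'' direction is reduced to a pointwise statement about the $\psi_x$ by evaluating $\rho^{\psi_x}$ on a judiciously chosen test random variable, which is exactly where the atomless assumption enters.

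For the forward (``if'') directions I would argue as follows. Suppose first that $(\psi_x)_{x>0}$ is increasing, i.e. $\psi_{x_1}(y)\le\psi_{x_2}(y)$ for all $y\in[0,1]$ and $0<x_1\le x_2$. Fixing $X\in L^\infty$, $t\in\cT$ and $\omega\in\Omega\setminus N^X$, the integrands in \eqref{eq:ChoquetDCRM-supp} satisfy $\psi_{x_1}(\bP(-X>y\mid\sF_t)(\omega))\le\psi_{x_2}(\bP(-X>y\mid\sF_t)(\omega))$ for every $y$, so monotonicity of the Lebesgue integral in both terms gives $\rho^{\psi_{x_1}}_t(X)\le\rho^{\psi_{x_2}}_t(X)$; hence $(\rho^{\psi_x})_{x>0}$ is increasing. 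For right continuity, fix $x>0$ and a sequence $z_n\downarrow x$; right continuity of the family means $\psi_{z_n}(y)\to\psi_x(y)$ pointwise, so the integrands converge pointwise in $y$. Since $X$ is essentially bounded, as in the proof of Proposition~\ref{prop:DCRM-supp} the integrals in \eqref{eq:ChoquetDCRM-supp} reduce to the bounded intervals $[0,-a]$ and $[-b,0)$, and the integrands are uniformly bounded by $1$; the dominated convergence theorem then yields $\rho^{\psi_{z_n}}_t(X)\to\rho^{\psi_x}_t(X)$, i.e. $(\rho^{\psi_x})_{x>0}$ is right continuous.

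For the converse (``only if'') directions I would use a single test computation. Given $y^\star\in(0,1)$, the atomless hypothesis guarantees an event $A\in\sF$ with $\bP(A)=y^\star$; take $t=0$, so that $\bP(\cdot\mid\sF_0)=\bP(\cdot)$, and put $X=-\1_A$. A direct evaluation shows $\bP(-X>y)=y^\star$ for $y\in[0,1)$, $\bP(-X>y)=0$ for $y\ge1$, and $\bP(-X>y)=1$ for $y<0$, whence, using $\psi(0)=0$ and $\psi(1)=1$,
\[
\rho^{\psi}_0(-\1_A)=\int_{0}^{1}\psi(y^\star)\,\dif y+\int_{(-\infty,0)}\big[\psi(1)-1\big]\,\dif y=\psi(y^\star).
\]
Thus $\rho^{\psi_x}_0(-\1_A)=\psi_x(y^\star)$ for every $x>0$, for this fixed $A$. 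Consequently, if $(\rho^{\psi_x})_{x>0}$ is increasing then $\psi_{x_1}(y^\star)=\rho^{\psi_{x_1}}_0(-\1_A)\le\rho^{\psi_{x_2}}_0(-\1_A)=\psi_{x_2}(y^\star)$ for all $0<x_1\le x_2$; and if $(\rho^{\psi_x})_{x>0}$ is right continuous then $\psi_{z}(y^\star)=\rho^{\psi_z}_0(-\1_A)\to\rho^{\psi_x}_0(-\1_A)=\psi_x(y^\star)$ as $z\downarrow x$. Since $y^\star\in(0,1)$ is arbitrary, and both the monotonicity and the limit relation hold trivially at $y^\star\in\{0,1\}$ (where every distortion equals $0$ and $1$ respectively), these are precisely the increasing and right-continuity properties of $(\psi_x)_{x>0}$.

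The main obstacle is these converse directions, and specifically the legitimacy of the test construction: one must produce, for each target level $y^\star$, a random variable on which $\rho^{\psi}$ reads off the single value $\psi(y^\star)$, and the existence of such an $X$ (equivalently, of an event of prescribed probability $y^\star$) is what forces the atomless assumption. Without it the attainable values of $\psi$ cannot all be probed and, as the remark following the lemma notes, the converse implications can genuinely fail.
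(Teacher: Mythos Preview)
Your proof is correct and follows essentially the same approach as the paper: the ``if'' directions via the integral representation \eqref{eq:ChoquetDCRM-supp} (with dominated convergence on the bounded support for right continuity), and the ``only if'' directions by evaluating $\rho^{\psi_x}_0$ on a two-valued test variable exploiting atomlessness. The only cosmetic differences are that you argue directly rather than by contradiction and use $X=-\1_A$ instead of the paper's $X=\1_A+2\cdot\1_{A^c}$, which gives $\rho^{\psi}_0(X)=\psi(y^\star)$ rather than $\psi(y^\star)-2$; either choice isolates $\psi_x(y^\star)$ as desired.
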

\begin{proof} 
	We start with the first part of the assertion on monotonicity. 
	
	\noindent ($\Leftarrow$) Using Choquet representation \eqref{eq:ChoquetDCRM} of $\rho_t^{\psi_x}$, for any $X \in L^\infty$, $\omega \in \Omega \backslash N^X$, $t \in \mathcal{T}$ and $0< x_1 \leq x_2$, we get 
	\begin{align*}
		\rho^{\psi_{x_2}}_t(X)(\omega)- &\rho^{\psi_{x_1}}_t(X)(\omega)=
		\int_{[0,\infty)}  \left[ \psi_{x_2}(\bP(-X > y\mid\sF_t)(\omega))- \psi_{x_1}(\bP(-X > y\mid\sF_t)(\omega)) \right]   \dif y\\
		&+ \int_{(-\infty,0)}  \left[ \psi_{x_2}(\bP(-X > y\mid\sF_t)(\omega))- \psi_{x_1}(\bP(-X > y\mid\sF_t)(\omega)) \right]  \dif y \geq 0.
	\end{align*} 
	($\Rightarrow$) Proof by contradiction. Assume that there exist   $0< x_1 \leq x_2$ and $y^{\star} \in [0, 1]$, such that $\psi_{x_1}(y^{\star} ) > \psi_{x_2}(y^{\star} )$. Since the probability space is atomless, we take $X= 1$ with probability $y^{\star}$ and $X= 2$ with probability $1-y^{\star}$. When $t= 0$,  
	\begin{align*}
		\rho^{\psi_{x_2}}_0(X)- \rho^{\psi_{x_1}}_0(X) & = \int_{[0,\infty)}  \left[ \psi_{x_2}(\bP(-X > y))- \psi_{x_1}(\bP(-X > y)) \right]   \dif y\\
		& \qquad + \int_{(-\infty,0)}  \left[ \psi_{x_2}(\bP(-X > y))- \psi_{x_1}(\bP(-X > y)) \right]  \dif y \\
		&=  \int_{[-2,-1)}  \left[ \psi_{x_2}(\bP(-X > y))- \psi_{x_1}(\bP(-X > y)) \right]  \dif y \\
		&=    \psi_{x_2}(y^{\star})- \psi_{x_1}(y^{\star})  < 0,
	\end{align*} 
	which contradicts the fact that $(\rho^{\psi_x})_{x >0}$ is an increasing  family of DCRMs.

	\smallskip\noindent
	Next we focus on the second part of the lemma about the right continuity. 
	
	\noindent ($\Leftarrow$)	Let $a$ and $b$  be the essential infimum and the essential supremum of $X$ respectively and assume that $a < b$.  For any  $z> x$ and $ \omega \in \Omega \backslash N^{X}$, 
	\begin{align*} 
		\rho^{\psi_{z}}_t(X)(\omega)&= \int_{[0,\infty)} \psi_{z}(\bP(-X > y\mid\sF_t)(\omega))   \dif y+ \int_{(-\infty,0)}  \left[ \psi_{z}(\bP(-X > y\mid\sF_t)(\omega))- 1 \right] \dif y \\
		&= \int_{[0,-a)} \psi_{z}(\bP(-X > y\mid\sF_t)(\omega))   \dif y+ \int_{(-b,0)}   \left[ \psi_{z}(\bP(-X > y\mid\sF_t)(\omega)) - 1 \right] \dif y. 
	\end{align*} 
	For any $y \in [0,-a)$, $ |\psi_z(\bP(-X> y|\sF_t)(\omega))| \leq 1$, and thus 
	\[
	\int_{[0,-a)} |\psi_{z}(\bP(-X > y|\sF_t)(\omega))|   \dif y \leq \int_{[0,-a)} 1  \dif y= a. 
	\] 
	Similarly, for any $y \in (-b,0)$, $ |\psi_z(\bP(-X> y|\sF_t)(\omega))- 1| \leq 1$, then  
	\[
	\int_{(-b,0)} |\psi_z(\bP(-X> y|\sF_t)(\omega))- 1|   \dif y \leq b. 
	\]
	Since $(\psi_{x})_{x >0}$ is a right continuous family of distortion functions, then by the dominated convergence theorem, 
	\begin{gather*}	
		\lim_{z \rightarrow x^{+}} \int_{[0,-a)} \psi_{z}(\bP(-X > y|\sF_t)(\omega))  \dif y =\int_{[0,-a)} \psi_{x}(\bP(-X > y|\sF_t)(\omega))   \dif y, \\
		\lim_{z \rightarrow x^{+}} \int_{(-b,0)} [\psi_{z}(\bP(-X> y|\sF_t)(\omega))- 1] \dif y = \int_{(-b,0)}   [\psi_{x}(\bP(-X> y|\sF_t)(\omega))- 1] \dif y. 
	\end{gather*}
	Combining the  above two identities, we have, 
	\[
	\lim_{z \rightarrow x^{+}}  \rho^{\psi_{z}}_t(X)(\omega)=  \rho^{\psi_{x}}_t(X)(\omega).
	\]
	
	\noindent ($\Rightarrow$) Proof by contradiction. Assume that there exists   $x_0 >0$ and $y^{\star} \in [0, 1]$, such that $\lim\limits_{{z \rightarrow x_0^{+}}} \psi_{z}(y^{\star}) \neq \psi_{x_0}(y^{\star})$. Since the probability space is atomless, we take $X= 1$ with probability $y^{\star}$ and $X= 2$ with probability $1-y^{\star}$. When $t= 0$,  
	\begin{align*} 
		\lim_{z \rightarrow {x_0}^{+}} \rho^{\psi_{z}}_0(X) & = \lim_{z \rightarrow x_0^{+}} \left[ \int_{[0,\infty)}   \psi_{z}(\bP(-X > y))   + \int_{(-\infty,0)}  \left[ \psi_{z}(\bP(-X > y))- 1 \right]  \dif y \right] \\
		& =  \lim_{z \rightarrow x_0^{+}} \int_{[-2,-1)}  \left[ \psi_{z}(\bP(-X > y)) -1 \right]  \dif  y \\
		& =    \lim_{z \rightarrow x_0^{+}} \psi_{z}(y^{\star})- 1  \\
		& \neq \psi_{x_0}(y^{\star})- 1 \\
		& =  \rho^{\psi_{x_0}}_0(X),
	\end{align*} 
	which contradicts the fact that $(\rho^{\psi_x})_{x >0}$ is a right continuous  family of DCRMs. 
	
	This completes the proof. 
\end{proof}

\begin{remark} 
	The assumption at the beginning of this section that the probability space is atomless  is important in Lemma~\ref{le:cexin} to ensure that $X= 1$ with probability $y^{\star}$ exists. If atomless property does not hold, then generally speaking the `only if' directions in Lemma~\ref{le:cexin} does not hold true, as Example~\ref{ex:noincreasing} and Example~\ref{ex:norightcon} below show. However, it is worth mentioning that a probability space with atoms may still support  Lemma~\ref{le:cexin} for random variables with distributions not affected by atoms. 
\end{remark}

\begin{example} \label{ex:noincreasing} 
	We will show that, generally speaking, an increasing family of DCRMs $(\rho^{\psi_x})_{x >0}$ does not imply that the corresponding family of distortions $(\psi_{x})_{x >0}$ is increasing.
	
	Let $\mathcal{T}:= \{0, 1\}$, and consider the filtered probability space $(\Omega, \sF, \{\sF_t\}_{t \in \mathcal{T}} , \bP)$, with $\Omega= \{ \omega_1, \omega_2\}$,  $\sF_0 = \{ \emptyset, \Omega \}$, $\sF_1 = \sF= 2^{\Omega}$ and $\mathbb{P}(\omega_1)= \mathbb{P}(\omega_2)= \frac{1}{2}$. 
	
	Suppose $m>0>n$, and let $X= m$ with probability $\frac{1}{2}$ and $X= n$ with probability $\frac{1}{2}$. For any $0< x_1 \leq x_2$ and $X \in L^\infty$,  $\rho^{\psi_{x_2}}_1(X)- \rho^{\psi_{x_1}}_1(X)= -X+ X= 0$. On the other hand, 
	\begin{align*}
		\rho^{\psi_{x_2}}_0(X)- \rho^{\psi_{x_1}}_0(X) &= \int_{[0,\infty)}  \left[ \psi_{x_2}(\bP(-X > y))- \psi_{x_1}(\bP(-X > y)) \right]   \dif y\\
		& \qquad + \int_{(-\infty,0)}  \left[ \psi_{x_2}(\bP(-X > y))- \psi_{x_1}(\bP(-X > y)) \right]  \dif y \\
		& = \int_{[0,-n)}  \left[ \psi_{x_2}(\bP(-X > y))- \psi_{x_1}(\bP(-X > y)) \right]  \dif y\\
		&\qquad + \int_{(-m,0)}  \left[ \psi_{x_2}(\bP(-X > y))- \psi_{x_1}(\bP(-X > y)) \right]  \dif y \\
		& =   (m-n) \left( \psi_{x_2}\left(\frac{1}{2}\right)- \psi_{x_1}\left(\frac{1}{2}\right) \right). 
	\end{align*} 
	Thus, $(\rho^{\psi_x})_{x >0}$ is an increasing  family of DCRMs, as long as  $ \psi_{x_1}(\frac{1}{2}) \leq \psi_{x_2}(\frac{1}{2})$, for any  $0< x_1 \leq x_2$,  
	Clearly, one can take $[y_1, y_2] \subset [0, 1]$ with $y_1> \frac{1}{2}$, such that $\psi_{x_1}(y) > \psi_{x_2}(y)$ for all $y \in (y_1, y_2)$ and $\psi_{x_1}(y) \leq \psi_{x_2}(y)$ on remaining interval, which makes $(\psi_{x})_{x >0}$ not an increasing family of distortion functions. 	
\end{example}

\begin{example} \label{ex:norightcon}
	Consider the same probability space as in Example~\ref{ex:noincreasing}. 
	We will show that a right continuous family of DCRMs $(\rho^{\psi_x})_{x >0}$ does not imply that $(\psi_{x})_{x >0}$ is a right continuous family of distortion functions.
	
	Suppose $m>0>n$, and let $X= m$ with probability $\frac{1}{2}$, and $X= n$ with probability $\frac{1}{2}$. For any $x> 0$ and $X \in L^\infty$,   $ \lim\limits_{z \rightarrow x^{+}} \rho^{\psi_{z}}_1(X)- \rho^{\psi_{x}}_1(X)= -X+ X= 0$, while 
	\begin{align*} 
		\lim_{z \rightarrow {x}^{+}} \rho^{\psi_{z}}_0(X) & = \lim_{z \rightarrow x^{+}} \left[ \int_{[0,\infty)}   \psi_{z}(\bP(-X > y))   + \int_{(-\infty,0)}  \left[ \psi_{z}(\bP(-X > y))- 1 \right]  dy \right] \\
		& =  \lim_{z \rightarrow x^{+}} \left[ \int_{[0,-n)}   \psi_{z}(\bP(-X > y)) dy+ \int_{(-m,0)}  \left[ \psi_{z}(\bP(-X > y))- 1\right]  dy  \right] \\
		& =  (m-n) \lim_{z \rightarrow x^{+}} \psi_{z}\left(\frac{1}{2}\right)- m. 
	\end{align*} 
	Hence, as long as  for any $x> 0$, $ \lim\limits_{z \rightarrow x^{+}} \psi_{z}(\frac{1}{2})= \psi_{x}(\frac{1}{2})$, we have $(\rho^{\psi_x})_{x >0}$ is a right continuous  family of DCRMs. Obviously, by taking $\psi_x, x>0$, such that $\lim\limits_{z \rightarrow x^{+}} \psi_{z}(y)\neq \psi_{x}(y)$ for any $y \neq \frac{1}{2}$,  implies that $(\psi_{x})_{x >0}$ is not a right continuous family of distortion functions. 
\end{example}

We are ready to present the main result of this section. 

\begin{proposition}
	The mapping  $\alpha_{t}^{\Psi}$ is a DCAI, that is $\alpha_{t}^{\Psi}$ is  adapted, local, quasi-concave, monotone increasing and scale invariant.
\end{proposition}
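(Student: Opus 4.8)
The plan is to deduce each of the five defining properties of $\alpha^{\Psi}$ directly from the corresponding property of the generating family $(\rho^{\psi_x})_{x>0}$, each member of which is a DCRM by Proposition~\ref{prop:DCRM-supp}, together with the fact (Lemma~\ref{le:cexin-supp}) that $x\mapsto\rho_t^{\psi_x}(X)(\omega)$ is non-decreasing because $\Psi$ is an increasing family. The single bridge I would establish first is a ``de-quantification'' of the supremum in \eqref{def:DCAI}: since $x\mapsto\rho_t^{\psi_x}(X)(\omega)$ is non-decreasing, the level set $S_X(\omega):=\{x\ge 0:\rho_t^{\psi_x}(X)(\omega)\le 0\}$ is an interval of the form $[0,a)$ or $[0,a]$, so for every $c\ge 0$ one has $\alpha_t^{\Psi}(X)(\omega)>c$ if and only if $\rho_t^{\psi_q}(X)(\omega)\le 0$ for some rational $q>c$, and $\alpha_t^{\Psi}(X)(\omega)\ge c$ if and only if $\rho_t^{\psi_x}(X)(\omega)\le 0$ for all $x\in[0,c)$. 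These two equivalences reduce every statement about $\alpha^{\Psi}$ to statements about the individual $\rho^{\psi_x}$.

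Adaptedness then follows by writing, for $\omega\notin N^X$ and $c\ge 0$, the identity $\{\alpha_t^{\Psi}(X)>c\}=\bigcup_{q\in\mathbb{Q},\,q>c}\{\rho_t^{\psi_q}(X)\le 0\}$, a countable union of sets in $\sF_t$ (each $\rho^{\psi_q}$ being adapted), after discarding the null set $N^X$ on which $\alpha^{\Psi}$ is set to $0$. Monotonicity is immediate from the monotone decrease of $\rho^{\psi_x}$: if $X\le Y$ then $\rho_t^{\psi_x}(X)\ge\rho_t^{\psi_x}(Y)$, so $S_X(\omega)\subseteq S_Y(\omega)$ and hence $\alpha_t^{\Psi}(X)\le\alpha_t^{\Psi}(Y)$. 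Scale invariance uses positive homogeneity: for $\beta\in L^\infty_t$ with $\beta>0$ one has $\rho_t^{\psi_x}(\beta X)(\omega)=\beta(\omega)\rho_t^{\psi_x}(X)(\omega)$, so $\rho_t^{\psi_x}(\beta X)(\omega)\le 0\iff\rho_t^{\psi_x}(X)(\omega)\le 0$, i.e. $S_{\beta X}(\omega)=S_X(\omega)$ and $\alpha_t^{\Psi}(\beta X)=\alpha_t^{\Psi}(X)$. Locality follows from locality of $\rho^{\psi_x}$: for $A\in\sF_t$ and $\omega\in A$ we have $\rho_t^{\psi_x}(X)(\omega)=\rho_t^{\psi_x}(\1_A X)(\omega)$ for every $x$, whence $S_X(\omega)=S_{\1_A X}(\omega)$ and the two suprema coincide on $A$.

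The substantive step is quasi-concavity, and here I would exploit the convexity of each $\rho^{\psi_x}$ that is built from its sub-additivity and positive homogeneity. Suppose $\alpha_t^{\Psi}(X)(\omega)\ge n(\omega)$ and $\alpha_t^{\Psi}(Y)(\omega)\ge n(\omega)$ for $n\in L^\infty_{t,+}$ and $\lambda\in L^\infty_t$ with $0\le\lambda\le 1$. By the second equivalence above, for every $x\in[0,n(\omega))$ both $\rho_t^{\psi_x}(X)(\omega)\le 0$ and $\rho_t^{\psi_x}(Y)(\omega)\le 0$. Since $\lambda(\omega),\,1-\lambda(\omega)\ge 0$, sub-additivity followed by positive homogeneity gives
\[
\rho_t^{\psi_x}(\lambda X+(1-\lambda)Y)(\omega)\le\lambda(\omega)\,\rho_t^{\psi_x}(X)(\omega)+(1-\lambda(\omega))\,\rho_t^{\psi_x}(Y)(\omega)\le 0
\]
for all such $x$, whence $[0,n(\omega))\subseteq S_{\lambda X+(1-\lambda)Y}(\omega)$ and therefore $\alpha_t^{\Psi}(\lambda X+(1-\lambda)Y)(\omega)\ge n(\omega)$.

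The only places that require genuine care are the measurability bookkeeping around the null set $N^X$ in the adaptedness argument, and applying the two supremum-to-pointwise equivalences with the correct strict versus non-strict inequalities; everything else is a routine transfer of the DCRM axioms through the defining supremum. I expect quasi-concavity to be the conceptual crux, as it is the one property that genuinely uses convexity (sub-additivity together with positive homogeneity) rather than a single monotonicity or homogeneity axiom.
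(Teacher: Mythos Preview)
Your proof is correct and follows the same overall strategy as the paper---reducing each DCAI axiom to the corresponding DCRM axiom of the generating family $(\rho^{\psi_x})_{x>0}$---but the treatment of adaptedness (and, as a consequence, of quasi-concavity) differs in an instructive way.

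The paper establishes the sharper set identity $\{\alpha_t^{\Psi}(X)\ge x\}=\{\rho_t^{\psi_x}(X)\le 0\}$. The inclusion ``$\supseteq$'' is trivial, but ``$\subseteq$'' requires showing that $\alpha_t^{\Psi}(X)(\omega)\ge x$ forces $\rho_t^{\psi_x}(X)(\omega)\le 0$, which the paper obtains by proving left-continuity of $x\mapsto\rho_t^{\psi_x}(X)(\omega)$ via monotone and dominated convergence applied to the Choquet integral representation. Once this identity is in hand, adaptedness and quasi-concavity follow immediately. Your route avoids the continuity step altogether: for adaptedness you pass to the strict sublevel set $\{\alpha_t^{\Psi}(X)>c\}$ and express it as a countable union over rational $q>c$, and for quasi-concavity you use only the weaker equivalence $\alpha_t^{\Psi}(X)(\omega)\ge c\iff\rho_t^{\psi_x}(X)(\omega)\le 0$ for all $x<c$, which needs only monotonicity of $x\mapsto\rho_t^{\psi_x}(X)(\omega)$, not its continuity.

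Your argument is therefore slightly more economical; the paper's version buys the stronger set equality $A^x=B^x$, which is of independent interest (it says the supremum in the definition of $\alpha^{\Psi}$ is attained whenever it is finite) and is used elsewhere in the paper. The remaining properties---monotonicity, scale invariance, and locality---are handled identically in both proofs.
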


\begin{proof} The proof of this result can be obtained by using robust representations results for DCAIs (cf. \cite{BiaginiBion-Nadal2012} or \cite{BCDK2013}). For the sake of completeness, we present here an independent proof by verifying directly the corresponding properties of DCAI.

	\smallskip
	\noindent \textit{Adaptiveness.}
	Consider the set $A^x:= \{ \omega \in \Omega \mid \alpha_{t}^{\Psi}(X)(\omega) \geq x \}$, where $x \in \mathbb{R}_+$, $t \in \mathcal{T}$ and $X \in L^\infty$ are fixed.  We will show $A^x \in \sF_t $.
	Let
	\begin{align}
		B^x =  \{ \omega \in \Omega \mid \rho_t^{\psi_x}(X)(\omega) \leq 0 \}.
	\end{align}	
	First we will prove that $A^x=B^x$. Indeed, for any $\omega \in \{ \rho_t^{\psi_x}(X) \leq 0 \}$, we have $ x \in \left\{z \in \mathbb{R}_+ \mid \rho_t^{\psi_z}(X)(\omega) \leq 0 \right \}$. Thus,
	\begin{align} \label{eq:alphainrho}
		\alpha_{t}^{\Psi}(X)(\omega) = \sup \left\{z \in \mathbb{R}_+ \mid \rho_t^{\psi_z}(X)(\omega) \leq 0 \right \} \geq x,
	\end{align}
	and hence $B^x\subset A^x$.

	Next we show that $A^x \subset B^x$. By \eqref{eq:alphainrho},
	$$
	\omega \in A^x  \Rightarrow \sup \left\{z \in \mathbb{R}_+ \mid \rho_t^{\psi_z}(X)(\omega) \leq 0 \right \} \geq x.
	$$
	For any $y < x$, we also have $\sup \left\{z \in \mathbb{R}_+ \mid \rho_t^{\psi_z}(X)(\omega) \leq 0 \right \} > y$, which implies that  $\rho_t^{\psi_y}(X)(\omega) \leq 0 $. 
	Therefore, to show $ \omega \in  B^x$,   since $\rho_t^{\psi_x}(X)(\omega) \leq 0$, it suffices to show that 
	\begin{align} \label{eq:3.4}
		\lim_{y \uparrow x} \rho_t^{\psi_y}(X)(\omega)=  \rho_t^{\psi_x}(X)(\omega).
	\end{align}	
	To prove the latter, recall that $\left(\psi_{x}\right)_{x \in \mathbb{R}_{+}} $ is a family of continuous distortions pointwise increasing
	in  $x$, and thus by the monotone convergence theorem, we obtain,
	\begin{align} \label{eq:3.5}
		\lim_{y \uparrow x} \int_0^{\infty}   \psi_y(\bP(-X > z \mid \sF_t)(\omega))   \dif  z= \int_0^{\infty}   \psi_x(\bP(-X > z \mid \sF_t)(\omega)) \dif z.
	\end{align}
	Then since,
	\[
	\lim_{y \uparrow x} \left( \psi_y(\bP(-X > z\mid \sF_t)(\omega))- 1\right)= \psi_x(\bP(-X > z\mid \sF_t)(\omega))- 1,
	\]
	and for any $z \in (-\infty, 0)$ and $y \in \mathbb{R}_+$, we have that 
	$| \psi_y(\bP(-X > z\mid \sF_t)(\omega))- 1 |  \leq 1.$
	Again, by the dominated convergence theorem,
	\begin{align} \label{eq:3.6}
		\hspace{-1.1cm}	\lim_{y \uparrow x} \int^0_{-\infty}   [\psi_y(\bP(-X > z\mid \sF_t)(\omega))- 1]  \dif z= \int^0_{-\infty}   [\psi_x(\bP(-X > z\mid \sF_t)(\omega))- 1] \dif z.
	\end{align}
	With \eqref{eq:3.5} and \eqref{eq:3.6} at hand,  \eqref{eq:3.4} follows at once, and thus 
	\begin{equation}\label{eq:3-11}
		A^x = B^x
	\end{equation}
	is proved.
	Measurability of $\rho_t^{\psi_x}(X)$ is verified in the proof of Proposition~\ref{prop:DCRM}, which implies that $A^x \in \sF_t$.

	\smallskip
	\noindent\textit{Locality.}
	For any  $t \in \mathcal{T}$, $X \in L^\infty$, $A \in \sF_t$ and $ \omega \in \Omega \backslash N^X$, 	
	\begin{align*}
		1_A(\omega) \alpha_t^{\Psi}(X)(\omega) &= 1_A(\omega) \sup \Big\{x \in \mathbb{R}_+ \mid \int_0^{\infty}   \psi_x(\bP(-X > y\mid \sF_t)(\omega))   \dif y\\
		& \qquad + \int^0_{-\infty}  [ \psi_x(\bP(-X > y\mid \sF_t)(\omega)) -1 ]  \dif y \leq 0  \Big\} \\
		&= 1_A(\omega) \sup \Big\{x \in \mathbb{R}_+ \mid 1_A(\omega)  \int_0^{\infty}   \psi_x(\bP(-X > y\mid \sF_t)(\omega))   \dif y \\
		& \qquad+ 1_A(\omega)  \int^0_{-\infty}  [ \psi_x(\bP(-X > y\mid \sF_t)(\omega)) -1 ]  \dif y \leq 0 \Big\} \\
		&= 1_A(\omega) \sup \Big\{x \in \mathbb{R}_+ \mid \int_0^{\infty}   \psi_x(\bP(- 1_A X > y\mid \sF_t)(\omega))   \dif y \\
		& \qquad + \int^0_{-\infty}  [ \psi_x(\bP(- 1_A X > y\mid \sF_t)(\omega)) -1 ]  \dif y \leq 0 \Big\} \\
		&= 1_A(\omega) \alpha^{\Psi}_t( 1_A X)(\omega).
	\end{align*}	
	Hence locality of $\alpha_t^{\Psi}$ is established.

	\smallskip
	\noindent\textit{Quasi-concavity.}
	For any  $t \in \mathcal{T}$, $X, Y \in L^\infty$ and  $ \omega \in \Omega \backslash N^X$.  Let $m > 0 $ be some $\sF_{t}$-measurable random variable such that $\alpha^{\Psi}_{t}(X)(\omega) \geq m(\omega)$ and  $\alpha^{\Psi}_{t}\left(Y\right)(\omega) \geq m(\omega)$. \\
	Fix one such $\omega$ and denote $m^{\star} = m(\omega)$. By \eqref{eq:3-11}, we have that $\alpha^{\Psi}_{t}(X)(\omega) \geq m^{\star}$ and  $\alpha^{\Psi}_{t}\left(Y\right)(\omega) \geq m^{\star}$ imply that
	$$
	\rho_t^{\psi_{m(\omega)}}(X)(\omega)= \rho_t^{\psi_{m^{\star}}}(X)(\omega)  \leq 0,
	$$
	$$
	\rho_t^{\psi_{m(\omega)}}(Y)(\omega)= \rho_t^{\psi_{m^{\star}}}(Y)(\omega)  \leq 0,
	$$
	respectively. Since sublinearity implies convexity, then by  sublinearity of $\rho^{\psi}_{t}$,
	\begin{align} \label{eq:3.8}
		\rho_t^{\psi_{m^{\star}}}\left(\lambda X+ (1-\lambda) Y\right)(\omega) \leq \lambda \rho_t^{\psi_{m^{\star}}}( X)(\omega) + (1-\lambda) \rho_t^{\psi_{m^{\star}}}(Y)(\omega) \leq 0.
	\end{align}
	By \eqref{eq:3-11} again, we show \eqref{eq:3.8} is equivalent to $\alpha^{\Psi}_{t}\left(\lambda X+ (1-\lambda) Y\right)(\omega) \geq m^{\star}$.
	Therefore, quasi-concavity holds.
	
	\smallskip
	\noindent\textit{Monotonicity.}  For any $t \in \mathcal{T}$,  $X, Y \in L^\infty$ with $X \leq Y$ and $ \omega \in \Omega \backslash N^X$. For any $\sF_{t}$-measurable random variable $m> 0$, such that $\alpha_{t}^{\Psi}(X)(\omega) > m(\omega)$,   we have $  \rho_t^{\psi_{m(\omega)}}(X)(\omega) \leq 0$. By monotonicity of $\rho^{\psi}_{t}$, $  \rho_t^{\psi_{m(\omega)}}(Y)(\omega) \leq \rho_t^{\psi_{m(\omega)}}(X)(\omega) \leq 0$.
	Then,  $\alpha_{t}^{\Psi}(Y)(\omega) \geq m(\omega)$, and hence $\alpha_{t}^{\Psi}(Y)(\omega) \geq \alpha_{t}^{\Psi}(X)(\omega)$.

	\smallskip
	\noindent\textit{Scale Invariance.}
	For any $t \in \mathcal{T}$,  $X \in L^\infty$,  $\lambda \in L^{\infty}_t$ with  $\lambda> 0$, and $ \omega \in \Omega \backslash N^X$, we have that,
	$$
	\alpha_{t}^{\Psi}(\lambda X)(\omega) = \sup \left\{x \in \mathbb{R}_+ \mid \rho_t^{\psi_x}(\lambda X)(\omega) \leq 0 \right \}.
	$$
	In view of the positive homogeneity of $\rho^{\psi}_{t}$,
	$$
	\alpha_{t}^{\Psi}(\lambda X)(\omega) = \sup \left\{x \in \mathbb{R}_+ \mid \lambda \rho_t^{\psi_x}(X)(\omega) \leq 0 \right \}.
	$$
	Since $\lambda> 0$,  then $\lambda  \rho_t^{\psi_x}(X) \leq 0 $ is equivalent to $\rho_t^{\psi_x}(X) \leq 0 $. Hence, $\alpha_{t}^{\Psi}(\lambda X) = \alpha_{t}^{\Psi}(X)$.
	
	The proof is complete.
	
\end{proof}

\subsection{On conditional quantiles}

The conditional versions of qualtiles are defined naturally  using `probabilistic conditioning' of the corresponding regular (or static) notions. Same hold true for most properties. While morally this is true, the difficulties are hidden in technical details related to measurably and well-definiteness of these objects. Here we present some results related to that. We also refer to \cite{Bielecki2018} for some relevant discussion on this topic. 

\begin{definition}  \label{def:condquantile-supp} 
	For any $ \alpha \in (0,1)$, the conditional upper and lower $\alpha$-quantile of $X\in L^\infty$ with respect to $\sigma$-field $\sF_t$ are defined as,
	\begin{align*}
		q_{\alpha}^{+}(X \mid \sF_t) = \esssup \{m \in L^{\infty}_t \mid \mathbb{P}(X \leq m\mid\sF_t) \leq \alpha \},\\
		q_{\alpha}^{-}(X \mid \sF_t) =\essinf \{m \in L^{\infty}_t \mid  \mathbb{P}(X \leq m\mid\sF_t) \geq \alpha \}.
	\end{align*}
\end{definition}

Below wWe will give other representations of  the conditional quantile functions, and we start with an auxiliary result. 

\begin{lemma} \label{le:omega} 
	For any $X \in L^\infty, \ t \in \mathcal{T}, \ a \in L^{\infty}_t$,  and for fixed $\omega^{\prime} \in \Omega$, 
	\begin{align}
		P(X \leq a\mid\sF_t)(\omega^{\prime})= \bP(X \leq a(\omega^{\prime})\mid\sF_t)(\omega^{\prime}). 
	\end{align}
\end{lemma}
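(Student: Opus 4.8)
The plan is to show the two sides agree by first establishing the identity for piecewise-constant thresholds $a$ and then passing to the general case by a monotone approximation that respects the right-continuity of the regular conditional distribution. Throughout, I would write $F_\omega(r) := \bP(X \le r \mid \sF_t)(\omega)$ for the regular conditional distribution function, which for each fixed $\omega \in \Omega \backslash N^X$ is a genuine (right-continuous, non-decreasing) distribution function in $r$; the right-hand side of the claim is precisely $F_{\omega'}(a(\omega'))$, whereas the left-hand side is the conditional expectation $\bE[\1_{\{X \le a\}} \mid \sF_t](\omega')$ of the indicator of the event $\{X \le a\} = \{\omega : X(\omega) \le a(\omega)\} \in \sF$. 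The content of the lemma is thus that conditioning on an $\sF_t$-measurable random threshold agrees with freezing the threshold at $\omega'$ inside the regular conditional distribution.

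First I would treat the case of a simple threshold $a = \sum_{i=1}^n c_i \1_{A_i}$, with $c_i \in \mathbb{R}$ and $\{A_i\} \subset \sF_t$ a finite partition of $\Omega$. Since $\1_{A_i}$ is $\sF_t$-measurable and $\1_{A_i}\1_{\{X \le a\}} = \1_{A_i}\1_{\{X \le c_i\}}$ holds pointwise (on $A_i$ the random threshold $a$ equals the constant $c_i$), pulling $\1_{A_i}$ in and out of the conditional expectation gives $\1_{A_i}\bP(X \le a \mid \sF_t) = \1_{A_i}\bP(X \le c_i \mid \sF_t)$ almost surely. Evaluating at $\omega' \in A_i \backslash N^X$ and using $a(\omega') = c_i$ yields the claim for simple $a$, outside the null set obtained as the finite union of the exceptional sets over $i = 1, \dots, n$.

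For general $a \in L^\infty_t$ I would choose simple $\sF_t$-measurable functions $a_n \downarrow a$ pointwise (dyadic rounding from above). On the analytic side, $\{X \le a_n\} \downarrow \{X \le a\}$, so conditional dominated convergence gives $\bP(X \le a_n \mid \sF_t) \to \bP(X \le a \mid \sF_t)$ outside a null set $N_0$. On the distributional side, for every fixed $\omega' \in \Omega \backslash N^X$ the function $F_{\omega'}$ is right-continuous and $a_n(\omega') \downarrow a(\omega')$, hence $F_{\omega'}(a_n(\omega')) \to F_{\omega'}(a(\omega'))$ --- crucially this limit requires no further exceptional set, as right-continuity holds for each such $\omega'$ individually. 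Combining with the simple-function identity $\bP(X \le a_n \mid \sF_t)(\omega') = F_{\omega'}(a_n(\omega'))$, valid off a null set $N_n$, the claim holds for all $\omega'$ outside $N^X \cup N_0 \cup \bigcup_n N_n$, a countable union of null sets and hence null.

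The main obstacle is precisely the possible presence of atoms in the conditional law of $X$: at a jump of $F_{\omega'}$ the value $F_{\omega'}(a(\omega'))$ depends on approaching $a(\omega')$ from the correct side, so the approximation must be taken from above and the argument must invoke right-continuity rather than mere convergence $a_n \to a$ (approximating from below would instead produce the left limit $F_{\omega'}(a(\omega')-)$). The only remaining points requiring care are routine measurability and null-set bookkeeping, namely that $\omega \mapsto F_\omega(a(\omega))$ is $\sF_t$-measurable as the composition of the jointly measurable map $(\omega,r)\mapsto F_\omega(r)$ with $\omega \mapsto (\omega, a(\omega))$, and that all exceptional sets can be absorbed into a single null set. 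I note that the whole statement is an instance of the disintegration identity $\bE[\phi(\cdot,X)\mid\sF_t](\omega) = \int_{\mathbb{R}} \phi(\omega,x)\,\mu_\omega(\dif x)$ applied to $\phi(\omega,x)=\1_{\{x \le a(\omega)\}}$, where $\mu_\omega$ denotes the regular conditional distribution; the two-step argument above is just the hands-on verification of that identity for this particular $\phi$.
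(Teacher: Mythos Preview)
Your proof is correct. The paper, however, takes a shorter and rather different route: instead of approximating $a$ by simple functions and passing to the limit via right-continuity of the regular conditional CDF, it freezes directly on the level set $\{a=a(\omega')\}\in\sF_t$. Since $\1_{\{a=a(\omega')\}}\1_{\{X\le a\}}=\1_{\{a=a(\omega')\}}\1_{\{X\le a(\omega')\}}$ pointwise, pulling the $\sF_t$-measurable indicator in and out of the conditional expectation gives $\1_{\{a=a(\omega')\}}\bP(X\le a\mid\sF_t)=\1_{\{a=a(\omega')\}}\bP(X\le a(\omega')\mid\sF_t)$, and evaluating at $\omega'$ (which lies in this level set) yields the claim in one line, with no approximation needed. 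The trade-off is that the paper's argument is informal about null sets: as $\omega'$ ranges over $\Omega$ the value $a(\omega')$ need not be countable, so the exceptional sets from the pull-in/pull-out identities cannot obviously be collected into a single null set. Your two-step approach pays the price of an approximation argument but keeps the null-set bookkeeping explicit and countable, and you correctly pinpoint that approximating from above is essential to exploit right-continuity of $F_{\omega'}$ at atoms. Either argument is acceptable; yours is the more rigorous of the two, and your closing remark that this is just the disintegration identity $\bE[\phi(\cdot,X)\mid\sF_t](\omega)=\int\phi(\omega,x)\,\mu_\omega(\dif x)$ specialised to $\phi(\omega,x)=\1_{\{x\le a(\omega)\}}$ is exactly the right conceptual framing.
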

\begin{proof}
	For fixed $\omega^{\prime} \in \Omega$,
	\begin{align*}
		\bP(X \leq a\mid\sF_t)(\omega^{\prime})&= \1_{a=a(\omega^{\prime})} \bP(X \leq a\mid\sF_t)
		= \1_{a=a(\omega^{\prime})} \bE(\1_{X \leq a}\mid\sF_t) \\
		&= \1_{a=a(\omega^{\prime})} \bE(\1_{a=a(\omega^{\prime})} \1_{X \leq a}\mid\sF_t) 
		= \1_{a=a(\omega^{\prime})} \bE(\1_{X \leq a(\omega^{\prime})}\mid\sF_t) \\
		&= \bP(X \leq a(\omega^{\prime})\mid\sF_t)(\omega^{\prime}).
	\end{align*}
\end{proof}

\begin{lemma} \label{le:condquantile2} 
	For any $X \in L^\infty$, $t \in \mathcal{T}$, $\alpha \in (0,1)$ and $ \omega \in \Omega \backslash N^{X}$, the conditional lower and upper $\alpha$-quantile  admit the representation 
	\begin{align*}
		q_{\alpha}^{+}(X \mid \sF_t)(\omega)& = \sup \{x \in \mathbb{R} \mid \bP(X \leq x\mid\sF_t)(\omega) \leq \alpha \},  \\
		q_{\alpha}^{-}(X \mid \sF_t)(\omega)& = \inf \{x \in \mathbb{R} \mid \bP(X \leq x\mid\sF_t)(\omega) \geq \alpha \}. 
	\end{align*}
\end{lemma}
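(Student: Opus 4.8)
The plan is to exhibit the right-hand sides as concrete versions of the essential supremum and essential infimum in Definition~\ref{def:condquantile-supp}. Fix $X\in L^\infty$, $t\in\cT$ and $\alpha\in(0,1)$, and for $\omega\in\Omega\setminus N^{X}$ write $F_\omega(x):=\bP(X\le x\mid\sF_t)(\omega)$, which for such $\omega$ is a genuine distribution function. Set
$$
\overline q(\omega):=\sup\{x\in\mathbb{R}\mid F_\omega(x)\le\alpha\},\qquad
\underline q(\omega):=\inf\{x\in\mathbb{R}\mid F_\omega(x)\ge\alpha\}.
$$
First I would record that $\overline q,\underline q\in L^\infty_t$: since $X$ is essentially bounded, $F_\omega$ passes from $0$ to $1$ inside a fixed compact interval, so both quantities are finite and uniformly bounded; and because $F_\omega$ is a distribution function its level sets are intervals, so one may restrict the $\sup$ and $\inf$ to $x\in\mathbb{Q}$, which exhibits $\overline q$ and $\underline q$ as countable suprema and infima of $\sF_t$-measurable functions. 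The whole argument then rests on Lemma~\ref{le:omega}, which lets me replace a random threshold $m\in L^\infty_t$ by its value: for every $\omega\in\Omega\setminus N^{X}$ one has $\bP(X\le m\mid\sF_t)(\omega)=F_\omega(m(\omega))$.

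For the upper quantile I would prove the two matching inequalities $q_\alpha^{+}\le\overline q$ and $q_\alpha^{+}\ge\overline q$ almost surely, which identifies $\overline q$ as a version of $\esssup\mathcal M$, where $\mathcal M:=\{m\in L^\infty_t\mid \bP(X\le m\mid\sF_t)\le\alpha\}$. For the first, take any $m\in\mathcal M$; on $\{m>\overline q\}$ Lemma~\ref{le:omega} gives $F_\omega(m(\omega))\le\alpha$ while $m(\omega)>\overline q(\omega)=\sup\{x\mid F_\omega(x)\le\alpha\}$, a contradiction, so $m\le\overline q$ a.s.\ and hence $q_\alpha^{+}=\esssup\mathcal M\le\overline q$. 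For the reverse inequality the natural candidate $m=\overline q$ need not belong to $\mathcal M$, and this is the main subtlety: because $F_\omega$ may jump at $\overline q(\omega)$, one can have $F_\omega(\overline q(\omega))>\alpha$. The fix is to approximate from below: for each $n$ the function $\overline q-\tfrac1n$ satisfies $F_\omega(\overline q(\omega)-\tfrac1n)\le\alpha$, since any point strictly below the supremum of the down-interval $\{x\mid F_\omega(x)\le\alpha\}$ again lies in that interval; hence $\overline q-\tfrac1n\in\mathcal M$, so $q_\alpha^{+}\ge\overline q-\tfrac1n$ a.s., and letting $n\to\infty$ yields $q_\alpha^{+}\ge\overline q$.

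The lower quantile is handled by the symmetric argument, and is in fact slightly easier: the super-level set $\{x\mid F_\omega(x)\ge\alpha\}$ is closed (right-continuity of $F_\omega$ forces $F_\omega(\underline q(\omega))\ge\alpha$), so $\underline q$ itself lies in $\mathcal N:=\{m\in L^\infty_t\mid \bP(X\le m\mid\sF_t)\ge\alpha\}$ by Lemma~\ref{le:omega}, giving $q_\alpha^{-}=\essinf\mathcal N\le\underline q$; and if $m\in\mathcal N$ then on $\{m<\underline q\}$ one gets $F_\omega(m(\omega))<\alpha$, a contradiction, whence $m\ge\underline q$ a.s.\ and $q_\alpha^{-}\ge\underline q$. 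The only genuine obstacle throughout is the jump behaviour of $F_\omega$ at the quantile, which is exactly why the upper bound must be approached from strictly below rather than evaluated directly; everything else is bookkeeping with Lemma~\ref{le:omega} and the interval structure of the level sets of a distribution function.
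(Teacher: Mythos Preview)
Your proof is correct and rests on the same key tool as the paper, namely Lemma~\ref{le:omega}, but the overall architecture differs. The paper argues by double contradiction: writing $m^*=q_\alpha^+(X\mid\sF_t)$ and $x^*=\overline q$, it assumes the set $\{m^*<x^*\}$ (resp.\ $\{m^*>x^*\}$) is nonempty, picks an intermediate $\sF_t$-measurable $a$ (resp.\ $b$) between them, and uses Lemma~\ref{le:omega} to force a contradiction with the definition of the essential supremum. You instead verify the essential-supremum characterization directly: $\overline q$ is an a.s.\ upper bound for $\mathcal M$, and the approximants $\overline q-\tfrac1n$ lie in $\mathcal M$, so $\overline q$ is the least such bound. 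Your route makes the role of the jump of $F_\omega$ at the quantile explicit (this is precisely why $\overline q$ itself need not lie in $\mathcal M$) and forces you to check $\sF_t$-measurability of $\overline q$ up front, which is needed so that $\overline q-\tfrac1n\in L^\infty_t$; the paper's contradiction argument implicitly relies on the same measurability to produce the intermediate $a,b\in L^\infty_t$ and the sets $M_1,M_2\in\sF_t$, but does not spell it out. Both arguments are short; yours is somewhat more constructive and transparent about where the analytic subtlety lies.
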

\begin{proof}	We will prove the first identity, and the second identity about $q_{\alpha}^{-}(X \mid \sF_t)$ can be proved similarly. For simplicity, denote by $m^{*}:= q_{\alpha}^{+}(X \mid \sF_t)$ and $x^{*}(\omega):= \sup \{x \in \mathbb{R} \mid \bP(X \leq x\mid\sF_t)(\omega) \leq \alpha \}$. We will show that the set $\{ m^{*} \neq x^{*}\}\cap (\Omega\setminus N^X)$ is an empty set. 
	We proceed by splitting   $\{ m^{*} \neq x^{*}\} $  into two sets $ \{ m^{*}< x^{*}\}$ and $ \{ m^{*}> x^{*}  \}$.
	
	First assume that $\{ m^{*}< x^{*}\} \cap(\Omega \setminus N^{X}) \neq \emptyset$. Then, there exist $M_1 \in \sF_t$ and $a \in L^{\infty}_t$, such that on non-empty set $M_1 \backslash N^{X}$,
	\begin{equation} \label{eq:le2.22-1}
		m^{*}<  a <x^{*}.
	\end{equation}
	Since $a <x^{*}$ on $ M_1 \backslash N^{X}$, we have for fixed $\omega^{\prime} \in M_1 \backslash N^{X}$, 
	\[
	a(\omega^{\prime}) <x^{*}(\omega^{\prime})= \sup \{x \in \mathbb{R} \mid \bP(X \leq x\mid\sF_t)(\omega^{\prime}) \leq \alpha \}.
	\]	
	Thus, $\bP(X \leq a(\omega^{\prime})\mid\sF_t)(\omega^{\prime}) \leq \alpha$, which by by Lemma~\ref{le:condquantile2} becomes, 
	$\bP(X \leq a\mid\sF_t)(\omega^{\prime}) \leq \alpha.$ Since $\omega^{\prime}$ can be chosen arbitrarily in $M_1 \backslash N^{X}$, we have, 
	$\bP(X \leq a\mid\sF_t)\1_{M_1 \backslash N^{X}} \leq \alpha.$
	By locality of conditional probability, $\bP(X \leq a\1_{M_1 \backslash N^{X}}\mid\sF_t) \leq \alpha,$ 	which by the definition of $q_{\alpha}^{+}(X \mid \sF_t)$ implies that $a\1_{M_1 \backslash N^{X}} \leq m^{*}$. This contradicts \eqref{eq:le2.22-1}, and thus  $\{ m^{*}< x^{*}\}=\emptyset$  on $\Omega \backslash N^{X}$. 
	
	Next we prove that $ \{ m^{*}> x^{*}  \} \cap (\Omega \backslash N^{X})$ is also an empty. Assume this is not true. Then, there exist  $M_2 \in \sF_t$ and $b \in L^{\infty}_t$, such that on non-empty set $M_2 \backslash N^{X}$,
	\begin{align} \label{eq:le2.22-2}
		x^{*}< b < m^{*}.
	\end{align}
	Since $ x^{*}< b $ on $ M_2 \backslash N^{X} $, for fixed $\omega^{\prime} \in M_2 \backslash N^{X}$, 
	\begin{align*}
		b(\omega^{\prime}) >x^{*}(\omega^{\prime})= \sup \{x \in \mathbb{R} \mid \bP(X \leq x\mid\sF_t)(\omega^{\prime}) \leq \alpha \}.
	\end{align*}
	Thus, $	\bP(X \leq b(\omega^{\prime})\mid\sF_t)(\omega^{\prime}) > \alpha$, and by Lemma~\ref{le:condquantile2}, $\bP(X \leq b\mid\sF_t)(\omega^{\prime}) > \alpha.$ Since $\omega^{\prime}$ can be chosen arbitrarily in $M_2 \backslash N^{X} $, we have on $M_2 \backslash N^{X} $, that $\bP(X \leq b\mid\sF_t)> \alpha$. By the definition of $q_{\alpha}^{+}(X \mid \sF_t)$,  we have that $b > m^{*}$ on $M_2 \backslash N^{X} $, which contradicts  \eqref{eq:le2.22-2}. 
	
	The proof is complete. 
\end{proof}

As a consequence of Lemma~\ref{le:condquantile2}, one can show that (see also \cite[Section 4.4]{FollmerSchiedBook2004} for unconditional case) 
\begin{align*}
	q_{\alpha}^{+}(X \mid \sF_t)(\omega)& = \sup \{x \in \mathbb{R} \mid \bP(X < x\mid\sF_t)(\omega) \leq \alpha \}. \\
	q_{\alpha}^{-}(X \mid \sF_t)(\omega)&= \inf \{x \in \mathbb{R} \mid \bP(X < x\mid\sF_t)(\omega) \geq \alpha \}.		
\end{align*}
For brefity, we show the first identity only. Let $a= \sup \{x \in \mathbb{R} \mid \bP(X < x\mid\sF_t)(\omega) \leq \alpha \}$ and $b= \sup \{x \in \mathbb{R} \mid \bP(X \leq x\mid\sF_t)(\omega) \leq \alpha \}$.  Then, since
$$
\left\{x \in \mathbb{R} \mid \bP(X < x\mid\sF_t)(\omega) \leq \alpha  \right\} \supset\left\{x \in \mathbb{R} \mid \bP(X \leq x\mid\sF_t)(\omega) \leq \alpha  \right\},
$$
we have $a \geq b$. To show that $a \leq b$, we proceed by contradiction. Assume that $a> b$, then there exists $s_{0}, s_{1} \in \mathbb{R}$, such that $b< s_{0}< s_{1}< a$. The inequality  $s_{1}< a$ implies that  $\bP(X < s_{1}\mid\sF_t)(\omega) \leq \alpha$, and $b< s_{0}$ implies that  $\bP(X \leq s_{0}\mid\sF_t)(\omega) > \alpha$. Consequently, we get
\begin{align} \label{re:condquan}
	\bP(X < s_{1}\mid\sF_t)(\omega) \leq \alpha < \bP(X \leq s_{0}\mid\sF_t)(\omega). 
\end{align}
On the other hand, by \cite[(A.4) and (A.5)]{Bielecki2018}, since $s_{0}< s_{1}$, we have that $\bP(X \leq s_{0}\mid\sF_t)(\omega) \leq \bP(X < s_{1}\mid\sF_t)(\omega) \leq \alpha$, which contradicts  \eqref{re:condquan}.

Finally, we present a result that relates lower conditional quantile to upper conditional quantile.
\begin{lemma} \label{le:2.20}
	For any $ X\in L^\infty, \ \omega \in \Omega \backslash N^{X}$,
	\begin{align*} 
		q^{+}_{\alpha}(X \mid \sF_t)(\omega)= - q^{-}_{1-\alpha}(-X \mid \sF_t)(\omega). 
	\end{align*} 
\end{lemma}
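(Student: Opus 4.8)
The plan is to reduce everything to the explicit $\sup/\inf$ representations of the conditional quantiles from Lemma~\ref{le:condquantile2} (and the strict-inequality reformulation derived just after it), and then to transport the defining set for $-X$ into one for $X$ via the reflection $x\mapsto -y$ together with the complementarity of the regular conditional distribution at a fixed $\omega\in\Omega\setminus N^X$. The whole argument is pointwise in $\omega$, so no new measurability issues arise; everything is a manipulation of sets of reals.

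First I would write out, for $\omega\in\Omega\setminus N^X$, the lower quantile of $-X$ using Lemma~\ref{le:condquantile2}:
\[
q^{-}_{1-\alpha}(-X\mid\sF_t)(\omega)=\inf\{x\in\mathbb{R}\mid \bP(-X\le x\mid\sF_t)(\omega)\ge 1-\alpha\}.
\]
Substituting $x=-y$ and using $\inf(-S)=-\sup S$, this equals
\[
-\sup\{y\in\mathbb{R}\mid \bP(-X\le -y\mid\sF_t)(\omega)\ge 1-\alpha\}.
\]
Next I would rewrite the event $\{-X\le -y\}=\{X\ge y\}$, so that the defining condition becomes $\bP(X\ge y\mid\sF_t)(\omega)\ge 1-\alpha$.

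The crux is the complementarity step: since for $\omega\notin N^X$ the map $B\mapsto\bP(X\in B\mid\sF_t)(\omega)$ is a genuine probability distribution, we have $\bP(X\ge y\mid\sF_t)(\omega)=1-\bP(X<y\mid\sF_t)(\omega)$, so the condition $\bP(X\ge y\mid\sF_t)(\omega)\ge 1-\alpha$ is equivalent to $\bP(X<y\mid\sF_t)(\omega)\le\alpha$. Hence
\[
-q^{-}_{1-\alpha}(-X\mid\sF_t)(\omega)=\sup\{y\in\mathbb{R}\mid \bP(X<y\mid\sF_t)(\omega)\le\alpha\},
\]
and the right-hand side is exactly the strict-inequality representation of $q^{+}_{\alpha}(X\mid\sF_t)(\omega)$ recorded after Lemma~\ref{le:condquantile2}, which closes the argument.

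The only delicate point is that the complementarity identity is valid precisely because outside the fixed null set $N^X$ the regular conditional distribution is a bona fide distribution function; this (and the availability of the two equivalent representations of $q^{+}$) is what confines the identity to $\omega\in\Omega\setminus N^X$, and I would flag it as the single substantive step rather than treating the change of variables and the passage from $\inf$ to $-\sup$, which are routine.
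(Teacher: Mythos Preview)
Your argument is correct and is essentially the same as the paper's proof: both use the pointwise $\sup/\inf$ representations of Lemma~\ref{le:condquantile2} (and its strict-inequality variant for $q^{+}$), the reflection $x\mapsto -x$, and the complementarity of the regular conditional distribution at $\omega\notin N^X$. The only cosmetic difference is direction: the paper starts from $q^{+}_{\alpha}(X\mid\sF_t)(\omega)$ and unwinds to $-q^{-}_{1-\alpha}(-X\mid\sF_t)(\omega)$, while you start from $q^{-}_{1-\alpha}(-X\mid\sF_t)(\omega)$ and arrive at $q^{+}_{\alpha}(X\mid\sF_t)(\omega)$; note also that applying Lemma~\ref{le:condquantile2} to $-X$ is licit on $\Omega\setminus N^X$ since the regular conditional distribution of $-X$ is the pushforward of that of $X$ under $y\mapsto -y$.
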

\begin{proof}
	By Lemma~\ref{le:condquantile2}, for any $ \omega \in \Omega \backslash N^{\prime}$,
	\begin{align*}
		q^{+}_{\alpha}(X \mid \sF_t)(\omega)
		&= \sup \{x \in \mathbb{R} \mid \bP(X < x\mid\sF_t)(\omega) \leq \alpha \}
		= - \inf \{-x \in \mathbb{R} \mid \bP(X < x\mid\sF_t)(\omega) \leq \alpha \}\\
		&= - \inf \{x \in \mathbb{R} \mid \bP(X < -x\mid\sF_t)(\omega) \leq \alpha \}\\
		& = - \inf \{x \in \mathbb{R} \mid \bP(-X > x\mid\sF_t)(\omega) \leq \alpha \}\\
		&= - \inf \{x \in \mathbb{R} \mid 1-\bP(-X \leq x\mid\sF_t)(\omega) \leq \alpha \}\\
		&= - \inf \{x \in \mathbb{R} \mid \bP(-X \leq x\mid\sF_t)(\omega) \geq 1- \alpha \}\\
		&= - q^{-}_{1-\alpha}(-X \mid \sF_t)(\omega). 
	\end{align*}
\end{proof}

\begin{lemma} \label{le:altcond}
	For any $X \in L^\infty$, $t \in \mathcal{T}$ and $ \omega \in \Omega \backslash N^X $,
	\begin{align*}
		\bE[X\mid\sF_t](\omega)=  \int_{(0,1)}   q^{-}_{1-z}(X \mid \sF_t)(\omega) \dif z.
	\end{align*}
\end{lemma}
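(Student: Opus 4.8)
The plan is to reduce to the case $X \geq 0$ and then run a fibrewise (i.e.\ $\omega$-by-$\omega$) layer-cake computation against the regular conditional distribution, identifying $z\mapsto q^{-}_{z}(X\mid\sF_t)(\omega)$ with the lower quantile function of that distribution via Lemma~\ref{le:condquantile2}. Concretely, I would fix $\omega \in \Omega\setminus N^X$ and abbreviate $F_\omega(y):=\bP(X\leq y\mid\sF_t)(\omega)$, which off $N^X$ is a genuine right-continuous distribution function supported in $[\essinf X,\esssup X]$; the identity then becomes the classical quantile-transform formula $\int_0^1 q^{-}_\alpha\,\dif\alpha=\int y\,\dif F_\omega(y)$ applied to $F_\omega$, rewritten in the paper's $q^{-}_{1-z}$ normalization.

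For the reduction, set $a:=\essinf X\in\bR$ and $\hat X:=X-a\geq 0$. Reading the representation $q^{-}_\alpha(Y\mid\sF_t)(\omega)=\inf\{x\in\bR\mid \bP(Y\leq x\mid\sF_t)(\omega)\geq\alpha\}$ of Lemma~\ref{le:condquantile2} with $Y=\hat X$ gives at once $q^{-}_{1-z}(\hat X\mid\sF_t)(\omega)=q^{-}_{1-z}(X\mid\sF_t)(\omega)-a$, while $\bE[\hat X\mid\sF_t](\omega)=\bE[X\mid\sF_t](\omega)-a$; since $a$ is a constant, the claim for $\hat X$ transfers verbatim to $X$, so it suffices to treat $X\geq 0$. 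For such $X$ I would apply conditional Tonelli to $X=\int_{[0,\infty)}\1_{\{X>y\}}\dif y$, which, after identifying the conditional survival function with the regular conditional distribution off $N^X$, yields
\begin{equation}
\bE[X\mid\sF_t](\omega)=\int_{[0,\infty)}\bP(X>y\mid\sF_t)(\omega)\,\dif y=\int_{[0,\infty)}\bigl(1-F_\omega(y)\bigr)\,\dif y.
\end{equation}

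For the quantile side, the key small lemma I would establish is the equivalence $q^{-}_\alpha(X\mid\sF_t)(\omega)>y \iff F_\omega(y)<\alpha$, which follows from the $\inf$-definition together with right-continuity of $F_\omega$. Since $X\geq 0$ forces $q^{-}_\alpha(X\mid\sF_t)(\omega)\geq 0$, I may write $q^{-}_\alpha(X\mid\sF_t)(\omega)=\int_{[0,\infty)}\1_{\{F_\omega(y)<\alpha\}}\dif y$, and a Tonelli swap gives
\begin{align*}
\int_{(0,1)} q^{-}_\alpha(X\mid\sF_t)(\omega)\,\dif\alpha
&=\int_{[0,\infty)}\int_{(0,1)}\1_{\{\alpha>F_\omega(y)\}}\,\dif\alpha\,\dif y\\
&=\int_{[0,\infty)}\bigl(1-F_\omega(y)\bigr)\,\dif y=\bE[X\mid\sF_t](\omega).
\end{align*}
The change of variables $\alpha=1-z$ turns the left-hand integral into $\int_{(0,1)}q^{-}_{1-z}(X\mid\sF_t)(\omega)\,\dif z$, giving the stated identity.

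I expect the only genuinely delicate point to be the $\omega$-by-$\omega$ measurability bookkeeping: conditional Tonelli produces an almost-sure identity between $\sF_t$-measurable versions, and one must check that the versions used agree with the regular conditional distribution $F_\omega$ outside the fixed null set $N^X$, exactly as in the conventions set up after the definition of $N^X$. Everything else—the equivalence governing $\{q^{-}_\alpha>y\}$ and the two Tonelli interchanges—is routine once $F_\omega$ is known to be a bona fide right-continuous distribution function with quantile function given by Lemma~\ref{le:condquantile2}.
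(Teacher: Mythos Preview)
Your proposal is correct and follows essentially the same route as the paper: reduce to $X\geq 0$, express $\bE[X\mid\sF_t](\omega)$ as $\int \bP(X>y\mid\sF_t)(\omega)\,\dif y$ via the regular conditional distribution, then swap the order of integration after writing the survival function as an integral over a level variable. The only cosmetic difference is that where you invoke the clean equivalence $q^{-}_{\alpha}>y\iff F_\omega(y)<\alpha$ directly (via right-continuity of $F_\omega$), the paper establishes the same identity $\int_{[0,C]}\1_{\{F_\omega(x)<1-z\}}\,\dif x=q^{-}_{1-z}$ by a short case split on whether $F_\omega$ is continuous at the quantile point; your formulation is arguably tidier.
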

\begin{proof}   By \cite[Theorem 5.4]{Kallenberg2006}, fix two measurable spaces  $S$  and  $T$, a $\sigma$-field  $\sF \subset \mathcal{A}$,  and a random element  $\xi$  in  $S$  such that  $\bP[\xi \in \cdot \mid \sF]$  has a regular version $ \nu$.  Further consider an  $\sF$-measurable random element  $\eta$  in  $T$  and  a  measurable function  $f$  on  $S \times T$  with  $\bE|f(\xi, \eta)|<\infty$. Then, for any $ \omega \in \Omega \backslash N^\xi $, 
	\begin{align} \label{eq:2.41}
		\bE[f(\xi, \eta) \mid \sF](\omega)=\int \nu(\dif  s)(\omega) f(s, \eta)(\omega).
	\end{align}
	Note that in our settings, there is only one variable $X$ which plays the role of $\xi$ in \eqref{eq:2.41}. Since $X \in L^\infty$, we only consider $X \geq 0$, otherwise take $X^{\prime}=X+C$, where $C:= \esssup X $. Thus we have, 
	\begin{align*}
		\bE[X\mid\sF_t](\omega)&=  \int_{\mathbb{R}} x  d\bP( X \leq x\mid\sF_t)(\omega) =\int_{[0,C]} x  \dif  \bP( X \leq x\mid\sF_t)(\omega)\\
		& =x \bP( X \leq x\mid\sF_t)(\omega)|^{C}_{0-}- \int_{[0,C]} \bP( X \leq  x\mid\sF_t)(\omega) \dif  x\\
		&= C- \int_{[0,C]} \left[ 1- \bP( X >  x\mid\sF_t)(\omega) \right] \dif x
		= \int_{[0,C]} \bP( X >  x\mid\sF_t)(\omega)  \dif x\\
		&= \int_{[0,C]} \int_{(0,1)}   \1_{ z <  \bP( X > x\mid\sF_t)(\omega)} \dif z \dif x
		= \int_{(0,1)}   \int_{[0,C]} \1_{ z <  \bP( X > x\mid\sF_t)(\omega)} \dif x \dif z\\
		&= \int_{(0,1)}   \int_{[0,C]} \1_{  \bP( X \leq x\mid\sF_t)(\omega) < 1-z  } \dif  x \dif z,
	\end{align*}
	Next, we will consider the continuous part and the discrete part of $X$ separately. For fixed $z \in (0,1)$, let $x^{\prime}= \inf \{x \in \mathbb{R} \mid \bP(X \leq x\mid\sF_t)(\omega) \geq 1- z \}$. 
	
	If $X$ is continuous at $x^{\prime}$, then $\bP( X = x^{\prime}\mid\sF_t)(\omega) =0$, we have $ \bP(X \leq x^{\prime}\mid\sF_t)(\omega) = 1- z $, and   
	\begin{align*}
		\int_{[0,C]} \1_{  \bP( X \leq x\mid\sF_t)(\omega) < 1-z  } \dif x &= C- \int_{[0,C]} \1_{  \bP( X \leq x\mid\sF_t)(\omega) \geq 1-z  } \dif x\\
		&= C- \int_{[x^{\prime}, C]} \1_{  \bP( X \leq x\mid\sF_t)(\omega) \geq 1-z  } \dif x\\
		&= C- \int_{[x^{\prime}, C]} \1 \dif x=  x^{\prime}. 
	\end{align*}
	
	If $X$ is discrete at $x^{\prime}$, then $\bP( X = x^{\prime}\mid\sF_t)(\omega) >0$. For those $z^{\prime}$ such that $ \bP(X \leq x^{\prime}-\mid\sF_t)(\omega) < 1- z^{\prime} \leq \bP(X \leq x^{\prime}\mid\sF_t)(\omega) $,  we deduce
	\begin{align*}
		\int_{[0,C]} \1_{  \bP( X \leq x\mid\sF_t)(\omega) < 1-z  } \dif x &= C- \int_{[0,C]} \1_{  \bP( X \leq x\mid\sF_t)(\omega) \geq 1-z  } \dif x\\
		&= C- \int_{[x^{\prime}, C]} \1_{  \bP( X \leq x\mid\sF_t)(\omega) \geq 1-z  } \dif x\\
		&= C- \int_{[x^{\prime}, C]} \1 dx=  x^{\prime}. 
	\end{align*}
	Thus for fixed $z \in (0,1)$, and any $X$, we have 
	\[
	\int_{[0,C]} \1_{  \bP( X \leq x\mid\sF_t)(\omega) < 1-z  } \dif x=  \inf \{x \in \mathbb{R} \mid \bP(X \leq x\mid\sF_t)(\omega) \geq 1- z \},
	\]
	and we continue  
	\begin{align*}
		\bE[X\mid\sF_t](\omega)  &= \int_{(0,1)}   \int_{[0,C]} \1_{  \bP( X \leq x\mid\sF_t)(\omega) < 1-z  } \dif x \dif z\\
		&=\int_{(0,1)}   \inf \{x \in \mathbb{R} \mid \bP(X \leq x\mid\sF_t)(\omega) \geq 1- z \} \dif z\\
		&= \int_{(0,1)}   q^{-}_{1-z}(X \mid \sF_t)(\omega) \dif z.
	\end{align*}
\end{proof}

\subsection{On $\var$ and  $\avar$}

The conditional $\var$,  similar to its static counterpart, is defined in terms of conditional $\alpha$-quantile function.

\begin{definition} For fixed $ \alpha \in (0,1)$ and  $X \in L^\infty$, conditional $\var$ at level $\alpha$ with respect to $\sigma$-field $\sF_t$  is defined as,
	\begin{align}  
		\var_{\alpha}(X \mid \sF_t) := -q_{\alpha}^{+}(X \mid \sF_t)= \essinf \{m \in L^{\infty}_t \mid  \mathbb{P}(X+ m < 0 \mid \sF_t) \leq \alpha \}.
	\end{align}	
\end{definition}
From financial point of view, $\var_{\alpha}(X|\sF_t)$ can be viewed as the smallest amount of capital, which, if added to the position $X$ at time $t$, will yield a secured position $X+m$ that encounters losses with a (conditional) probability below the level $\alpha$. Clearly, the $\var_\alpha$ does not capture the size or the distribution of the losses beyond the $\alpha$-quantile. To overcome this, the notion of average value at risk is introduced. 

\begin{definition} The conditional Average Value at Risk at level $\alpha \in (0, 1]$ of a position $ X \in L^\infty$ is given by
	\begin{align} \label{eq:avar-supp}
		\avar_{\alpha}(X\mid \sF_t):= \frac{1}{\alpha} \int_{(0, \alpha)} \var_{z}(X \mid \sF_t) d z= -\frac{1}{\alpha} \int_{(0, \alpha)} q_{z}^{+}(X \mid \sF_t) dz.
	\end{align} 
\end{definition}
Note that $ q_{z}^{+}(X \mid \sF_t)$ is monotone increasing with respect to $z$. Since monotonicity implies Borel measurability, $z \mapsto  q_{z}^{+}(X \mid \sF_t)$ is Borel measurable. Thus, the integral in \eqref{eq:avar-supp} is well-defined. 

In the existing literature, the $\avar$ sometimes is defined through the so-called robust or dual representations;  cf. \cite{Bielecki2018} for the conditional case.  Next results shows that these definitions are equivalent.

\begin{lemma}
	For any $ X \in L^\infty$ and $\alpha \in (0,1]$,
	\begin{equation} \label{eq:avarrobust}
		\avar_{\alpha}(X\mid\sF_t)= \esssup \{ \bE[-XZ\mid\sF_t]\mid Z \in \sF, 0 \leq Z \leq 1/\alpha, \bE(Z\mid\sF_t)=1 \}.
	\end{equation} 
	Moreover, a maximizer  $Z^{*}$  in the right hand side of \eqref{eq:avarrobust}  exists, and it is given by 
	\begin{equation} \label{eq:rnavar}
		Z_{\alpha}^{*}=\frac{1}{\alpha}\left(\1_{X<q_{\alpha}^{\pm}(X \mid \sF_t)}+ \varepsilon \1_{X=q_{\alpha}^{\pm}(X \mid \sF_t)}\right),
	\end{equation}
	here
	\begin{align*}
		\varepsilon=\left\{\begin{array}{ll}
			0, & \bP(X=q_{\alpha}^{\pm}(X \mid \sF_t) \mid \sF_t)=0, \\
			\frac{\alpha- \bP(X<q_{\alpha}^{\pm}(X \mid \sF_t) \mid \sF_t)}{\bP(X=q_{\alpha}^{\pm}(X \mid \sF_t) \mid \sF_t)}, & \text {\  otherwise}.
		\end{array}\right.
	\end{align*}
\end{lemma}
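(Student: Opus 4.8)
The plan is to prove the asserted identity by establishing two opposite inequalities while simultaneously exhibiting the maximizer. Write $R:=\esssup\{\bE[-XZ\mid\sF_t]\mid Z\in\sF,\ 0\le Z\le 1/\alpha,\ \bE(Z\mid\sF_t)=1\}$ for the right-hand side of \eqref{eq:avarrobust}. The \emph{easier} half is $R\le\avar_{\alpha}(X\mid\sF_t)$, which I would obtain by showing that \emph{every} admissible $Z$ satisfies $\bE[-XZ\mid\sF_t]\le\avar_{\alpha}(X\mid\sF_t)$. The reverse inequality $R\ge\avar_{\alpha}(X\mid\sF_t)$ I would not argue abstractly, but instead by verifying directly that the explicit candidate $Z^{*}_{\alpha}$ of \eqref{eq:rnavar} is admissible and achieves $\avar_{\alpha}(X\mid\sF_t)$; this single computation delivers both the lower bound and the attainment claim at once.

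For the upper bound, I would start from the elementary pointwise estimate, valid for every $m\in L^\infty_t$ and every admissible $Z$ because $0\le Z\le 1/\alpha$: checking the two cases $\{X+m\ge 0\}$ and $\{X+m<0\}$ gives $-(X+m)Z\le\tfrac1\alpha(X+m)^-$. Taking $\bE[\,\cdot\mid\sF_t]$, using the decomposition $\bE[-XZ\mid\sF_t]=\bE[-(X+m)Z\mid\sF_t]+m\,\bE(Z\mid\sF_t)$ together with $\bE(Z\mid\sF_t)=1$, one arrives at
\[
\bE[-XZ\mid\sF_t]\ \le\ m+\tfrac1\alpha\,\bE[(X+m)^-\mid\sF_t],\qquad m\in L^\infty_t.
\]
The bound is then closed by the conditional Rockafellar--Uryasev identity $\essinf_{m\in L^\infty_t}\big(m+\tfrac1\alpha\,\bE[(X+m)^-\mid\sF_t]\big)=\avar_{\alpha}(X\mid\sF_t)$, with the essential infimum attained at $m=\var_{\alpha}(X\mid\sF_t)=-q^{+}_{\alpha}(X\mid\sF_t)$. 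I would prove this identity from the quantile representation \eqref{eq:avar-supp} by the same layer-cake/Fubini manipulation already employed in the proofs of Lemma~\ref{le:altdwar} and Lemma~\ref{le:altcond}: after substituting $m=-q^{+}_{\alpha}$ the term $\bE[(X+m)^-\mid\sF_t]=\bE[(q^{+}_{\alpha}-X)^+\mid\sF_t]$ reduces, via $q^{+}_{\alpha}q^{+}_{\alpha}-\int_{(0,\alpha)}q^{+}_{u}(X\mid\sF_t)\,\dif u$, to exactly $q^{+}_{\alpha}+\alpha\,\avar_{\alpha}(X\mid\sF_t)$, making the displayed right-hand side equal to $\avar_{\alpha}(X\mid\sF_t)$.

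For attainment I would first confirm that $Z^{*}_{\alpha}$ is well defined and $\sF$-measurable: the threshold $q:=q^{\pm}_{\alpha}(X\mid\sF_t)$ is $\sF_t$-measurable, the events $\{X<q\}$ and $\{X=q\}$ lie in $\sF$, and $\varepsilon$ is $\sF_t$-measurable as a ratio of regular conditional probabilities (with the stated convention on $\{\bP(X=q\mid\sF_t)=0\}$). The bounds $0\le Z^{*}_{\alpha}\le 1/\alpha$ are immediate, and the definition of $\varepsilon$ is precisely what forces $\bE(Z^{*}_{\alpha}\mid\sF_t)=\tfrac1\alpha\big(\bP(X<q\mid\sF_t)+\varepsilon\,\bP(X=q\mid\sF_t)\big)=1$. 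It then remains to compute $\bE[-XZ^{*}_{\alpha}\mid\sF_t]$; since $q$ is $\sF_t$-measurable one has $\bE[X\,\1_{X=q}\mid\sF_t]=q\,\bP(X=q\mid\sF_t)$, and combining this with $\bE[X\,\1_{X<q}\mid\sF_t]$ and invoking the same quantile/Fubini identity yields $\bE[-XZ^{*}_{\alpha}\mid\sF_t]=-\tfrac1\alpha\int_{(0,\alpha)}q^{+}_{z}(X\mid\sF_t)\,\dif z=\avar_{\alpha}(X\mid\sF_t)$, as required.

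I expect the main obstacle to be the measurability and well-definedness bookkeeping threading through both halves, rather than any single inequality. The delicate points are: justifying the conditional Rockafellar--Uryasev identity as a genuine essential infimum over $L^\infty_t$ attained at the \emph{random} threshold $\var_{\alpha}(X\mid\sF_t)$; handling the atom set $\{X=q\}$, where $q$ is itself an $\sF_t$-measurable random variable rather than a constant; and verifying through regular conditional distributions that $\varepsilon$, and hence $Z^{*}_{\alpha}$, carries the correct conditional mass, all working on $\Omega\setminus N^{X}$. These are exactly the technical subtleties the paper repeatedly flags, and they are what distinguish this conditional statement from its static counterpart in \cite[Section~4.4]{FollmerSchiedBook2004}.
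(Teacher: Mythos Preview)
Your proposal is correct and in fact more complete than the paper's own argument. The paper's proof only verifies the attainment equation
\[
-\frac{1}{\alpha}\int_{(0,\alpha)} q^{+}_{z}(X\mid\sF_t)\,\dif z=\bE[-XZ^{*}_{\alpha}\mid\sF_t],
\]
declaring at the outset that ``it is enough to show'' this, and thereby silently leaning on the robust representation already recorded in \cite{Bielecki2018} for the reverse inequality $\esssup\le\avar_\alpha$. You supply that half explicitly via the pointwise bound $-(X+m)Z\le\tfrac1\alpha(X+m)^{-}$ and the Rockafellar--Uryasev evaluation at $m=-q^{+}_{\alpha}$, which is exactly the identity the paper derives in its equations labeled \eqref{eq:altavar2}--\eqref{eq:altavar3}; so your upper-bound step reuses the same computation the paper carries out for attainment, just deployed in both directions. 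The attainment half of your argument coincides with the paper's: both rewrite $\avar_\alpha$ as $\tfrac1\alpha\,\bE[(q^{+}_{\alpha}-X)^{+}\mid\sF_t]-q^{+}_{\alpha}$ via the layer-cake/quantile manipulation (your reference to Lemma~\ref{le:altcond} matches the paper's use of that lemma), and then unpack this in terms of the indicator decomposition defining $Z^{*}_{\alpha}$.

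Two small slips to fix when you write it out: the intermediate expression you quote as ``$q^{+}_{\alpha}q^{+}_{\alpha}-\int_{(0,\alpha)}q^{+}_{u}\,\dif u$'' should read $\alpha\,q^{+}_{\alpha}-\int_{(0,\alpha)}q^{+}_{u}\,\dif u$, and consequently $\bE[(q^{+}_{\alpha}-X)^{+}\mid\sF_t]$ equals $\alpha\,q^{+}_{\alpha}+\alpha\,\avar_{\alpha}$, not $q^{+}_{\alpha}+\alpha\,\avar_{\alpha}$. With those corrected the algebra closes.
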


\begin{proof}

	In view of \eqref{eq:avar-supp}, it is enough to show that, 
	\begin{equation}  \label{eq:altavar1}
		-\frac{1}{\alpha}  \int_{(0,\alpha)}  q^{+}_{z}(X \mid \sF_t)(\omega)  \dif z= \bE[-XZ_{\alpha}^*\mid\sF_t](\omega).
	\end{equation}
	Let $L= -\frac{1}{\alpha}  \int_{(0,\alpha)}  q^{+}_{z}(X \mid \sF_t)(\omega)  \dif z$. Then,  
	\begin{equation}  \label{eq:altavar2}
		\begin{split}	
			L &= \frac{1}{\alpha} \left[ \int_{(0,\alpha)}  q^{+}_{\alpha}(X \mid \sF_t)(\omega)- q^{+}_{z}(X \mid \sF_t)(\omega) \dif z  \right] -q^{+}_{\alpha}(X \mid \sF_t)(\omega) \\
			&= \frac{1}{\alpha} \left[ \int_{(0,1)} \left(  q^{+}_{\alpha}\left(X \mid \sF_t\right)(\omega)- q^{+}_{z}\left(X \mid \sF_t\right)(\omega)\right)^+  \dif z  \right] -q^{+}_{\alpha}(X \mid \sF_t)(\omega).
		\end{split}
	\end{equation}
	
	Next we prove an intermediary equality
	\begin{equation} \label{eq:altavar3}
		\bE\left[ \left(q^{+}_{\alpha}\left(X \mid \mathcal{F}_t\right) -X \right)^+ \mid\mathcal{F}_t\right](\omega)
		=  \int_{(0,1)}  \left(  q^{+}_{\alpha}\left(X \mid \mathcal{F}_t\right)(\omega)- q^{+}_{z}\left(X \mid \mathcal{F}_t\right)(\omega) \right)^+  \dif z.
	\end{equation}
	By Lemma~\ref{le:altcond}
	\begin{equation} \label{eq:altavar4}
		\begin{split}
			\bE\big[ \left(q^{+}_{\alpha}(X \mid \mathcal{F}_t) -X \right)^+ &\mid\mathcal{F}_t\big](\omega) 
			= \int_{(0,1)}   q^{-}_{1-z}\left( \left(q^{+}_{\alpha}(X \mid \mathcal{F}_t) -X \right)^+ \mid \mathcal{F}_t\right)(\omega) \dif z \\
			&  = \int_{(0,1)} \sup \{x \in \mathbb{R} \mid \bP\left( (q^{+}_{\alpha}(X \mid \mathcal{F}_t) -X)^+ < x\mid\mathcal{F}_t \right)(\omega) < 1- z \}  \dif z.
		\end{split}
	\end{equation}
	Since 
	\[
	\bP(X^+ < x)= \bP(\max\{ 0, X\} < x )= \bP(0 < x, X < x )= 
	\begin{cases}
		0, &x \geq 0,\\
		\bP(X< x ), &x > 0, 
	\end{cases}
	\]
	we have that 
	\begin{align} \label{eq:altavar5}
		&\sup \{x \in \mathbb{R} \mid \bP\left( \left( q^{+}_{\alpha}(X \mid \mathcal{F}_t) -X \right)^+ < x\mid\mathcal{F}_t \right)(\omega) < 1- z \} \\
		& \qquad \qquad =  \sup \{x > 0 \mid \bP\left( q^{+}_{\alpha}(X \mid \mathcal{F}_t) -X < x\mid\mathcal{F}_t \right)(\omega) < 1- z \}.
	\end{align}
	Then, we combine \eqref{eq:altavar4} with \eqref{eq:altavar5}, and obtain 
	\[ 
	\bE\left[ \left(q^{+}_{\alpha}(X \mid \mathcal{F}_t) -X \right)^+ \mid\mathcal{F}_t\right](\omega) = 
	\int_{(0,1)} \sup \{x > 0 \mid \bP\left( q^{+}_{\alpha}(X \mid \mathcal{F}_t) -X < x\mid\mathcal{F}_t \right)(\omega) < 1- z \} \dif z.
	\] 
	Using this, to show \eqref{eq:altavar3},  it is sufficient to show that for any $z \in (0, 1)$,
	\begin{equation} \label{eq:altavar7}
		\sup \{x > 0 \mid \bP\left( q^{+}_{\alpha}(X \mid \mathcal{F}_t) -X < x\mid\mathcal{F}_t \right)(\omega) < 1- z \} 
		\left(  q^{+}_{\alpha}(X \mid \mathcal{F}_t)(\omega)- q^{+}_{z}(X \mid \mathcal{F}_t)(\omega) \right)^+, 
	\end{equation}
	which we prove next. Denote by $J:= \sup \{x > 0 \mid \bP\left( q^{+}_{\alpha}(X \mid \mathcal{F}_t) -X < x\mid\mathcal{F}_t \right)(\omega) < 1- z \}$. Then, 
	\begin{align*}
		J&= \sup \{x > 0 \mid \bP\left(  -X < -q^{+}_{\alpha}(X \mid \mathcal{F}_t)(\omega) + x  \mid\mathcal{F}_t \right)(\omega) < 1- z \}\\
		&= \sup \{x > 0 \mid \bP\left(  X > q^{+}_{\alpha}(X \mid \mathcal{F}_t)(\omega) - x  \mid\mathcal{F}_t \right)(\omega) < 1- z \}\\
		&= \sup \{x > 0 \mid \bP\left(  X \leq q^{+}_{\alpha}(X \mid \mathcal{F}_t)(\omega) - x  \mid\mathcal{F}_t \right)(\omega) >  z \}.
	\end{align*}
	Let $y:= q^{+}_{\alpha}(X \mid \mathcal{F}_t)(\omega) - x$, and we continue 
	\begin{align*}
		J &= \sup \{q^{+}_{\alpha}(X \mid \mathcal{F}_t)(\omega) -y > 0 \mid \bP\left(  X \leq y  \mid\mathcal{F}_t \right)(\omega) >  z \}\\
		&=q^{+}_{\alpha}(X \mid \mathcal{F}_t)(\omega)+ \sup \{ -y >- q^{+}_{\alpha}(X \mid \mathcal{F}_t)(\omega) \mid \bP\left(  X \leq y  \mid\mathcal{F}_t \right)(\omega) >  z \}\\
		&=q^{+}_{\alpha}(X \mid \mathcal{F}_t)(\omega)- \inf \{ y < q^{+}_{\alpha}(X \mid \mathcal{F}_t)(\omega) \mid \bP\left(  X \leq y  \mid\mathcal{F}_t \right)(\omega) >  z \}\\
		&=  q^{+}_{\alpha}(X \mid \mathcal{F}_t)(\omega)- \min \left\{ q^{+}_{\alpha}(X \mid \mathcal{F}_t)(\omega), \inf \left\{ y \in \mathbb{R} \mid \bP\left(  X \leq y  \mid\mathcal{F}_t \right)(\omega) >  z \right \} \right \}\\
		&=  q^{+}_{\alpha}(X \mid \mathcal{F}_t)(\omega)+ \max \left\{ -q^{+}_{\alpha}(X \mid \mathcal{F}_t)(\omega), -\inf \left \{ y \in \mathbb{R} \mid \bP\left(  X \leq y  \mid\mathcal{F}_t \right)(\omega) >  z \right \} \right \}\\
		&=  \max \left\{ 0, q^{+}_{\alpha}(X \mid \mathcal{F}_t)(\omega)- q^{+}_{z}(X \mid \mathcal{F}_t)(\omega) \right \}\\
		&=  \left(q^{+}_{\alpha}(X \mid \mathcal{F}_t)(\omega)- q^{+}_{z}(X \mid \mathcal{F}_t)(\omega) \right)^+,
	\end{align*}
	and thus \eqref{eq:altavar7}, is proved, and consequently \eqref{eq:altavar3} is established.

	Then by \eqref{eq:altavar2} and \eqref{eq:altavar3}, we get,
	\begin{align*}
		L &=  \frac{1}{\alpha}\bE\left[ \left(q^{+}_{\alpha}(X \mid \mathcal{F}_t) -X \right)^+ \mid\mathcal{F}_t\right](\omega) -q^{+}_{\alpha}(X \mid \mathcal{F}_t)(\omega)\\
		&=  \frac{1}{\alpha}\bE\left[ \left(q^{+}_{\alpha}(X \mid \mathcal{F}_t) -X \right) \1_{X<q^{+}_{\alpha}(X \mid \mathcal{F}_t)} \mid\mathcal{F}_t\right](\omega) -q^{+}_{\alpha}(X \mid \mathcal{F}_t)(\omega)\\
		&= \frac{1}{\alpha}q^{+}_{\alpha}\left(X \mid \mathcal{F}_t\right)(\omega) \bP\left(X<q_{\alpha}^{+}(X \mid \mathcal{F}_t) \mid \mathcal{F}_t\right)(\omega) \\
		& \qquad \qquad + \frac{1}{\alpha} \bE\left[-X \1_{X<q^{+}_{\alpha}(X \mid \mathcal{F}_t)} \mid\mathcal{F}_t\right](\omega) -q^{+}_{\alpha}(X \mid \mathcal{F}_t)(\omega) \\
		&= -\frac{1}{\alpha}q^{+}_{\alpha}\left(X \mid \mathcal{F}_t\right)(\omega) \left[  \alpha- \bP\left(X<q_{\alpha}^{+}(X \mid \mathcal{F}_t) \mid \mathcal{F}_t\right)(\omega) \right] \\
		& \qquad \qquad + \frac{1}{\alpha} \bE\left[-X \1_{X<q^{+}_{\alpha}(X \mid \mathcal{F}_t)} \mid\mathcal{F}_t\right](\omega) \\
		&= - \frac{1}{\alpha}q_{\alpha}^{+}(X \mid \mathcal{F}_t)(\omega)  \bE\left[ \frac{\alpha-\bP(X<q_{\alpha}^{+}(X \mid \mathcal{F}_t) \mid \mathcal{F}_t)}{\bP(X=q_{\alpha}^{+}(X \mid \mathcal{F}_t) \mid \mathcal{F}_t)} \1_{X=q^{+}_{\alpha}(X \mid \mathcal{F}_t)} \mid\mathcal{F}_t\right](\omega) \\
		& \qquad \qquad +  \frac{1}{\alpha} \bE\left[-X \1_{X<q^{+}_{\alpha}(X \mid \mathcal{F}_t)} \mid\mathcal{F}_t\right](\omega)\\
		&=  \frac{1}{\alpha} \bE\left[ -q_{\alpha}^{+}(X \mid \mathcal{F}_t) \frac{\alpha-\bP(X<q_{\alpha}^{+}(X \mid \mathcal{F}_t) \mid \mathcal{F}_t)}{\bP(X=q_{\alpha}^{+}(X \mid \mathcal{F}_t) \mid \mathcal{F}_t)} \1_{X=q^{+}_{\alpha}(X \mid \mathcal{F}_t)} \mid\mathcal{F}_t\right](\omega)  \\
		& \qquad \qquad +  \frac{1}{\alpha} \bE\left[-X \1_{X<q^{+}_{\alpha}(X \mid \mathcal{F}_t)} \mid\mathcal{F}_t\right](\omega)\\
		&=  \frac{1}{\alpha} \bE\left[ -X \frac{\alpha-\bP(X<q_{\alpha}^{+}(X \mid \mathcal{F}_t) \mid \mathcal{F}_t)}{\bP(X=q_{\alpha}^{+}(X \mid \mathcal{F}_t) \mid \mathcal{F}_t)} \1_{X=q^{+}_{\alpha}(X \mid \mathcal{F}_t)} \mid\mathcal{F}_t\right](\omega) \\
		& \qquad \qquad +  \frac{1}{\alpha} \bE\left[-X \1_{X<q^{+}_{\alpha}(X \mid \mathcal{F}_t)} \mid\mathcal{F}_t\right](\omega)\\
		&=  \frac{1}{\alpha} \bE\left[ -X \varepsilon \1_{X=q^{+}_{\alpha}(X \mid \mathcal{F}_t)} \mid\mathcal{F}_t\right](\omega)+ \frac{1}{\alpha} \bE\left[-X \1_{X<q^{+}_{\alpha}(X \mid \mathcal{F}_t)} \mid\mathcal{F}_t\right](\omega)\\
		&= \bE[-XZ_{\alpha}^*\mid\mathcal{F}_t](\omega),
	\end{align*}
	where $Z_{\alpha}^*$ is given as \eqref{eq:rnavar}. This is exactly \eqref{eq:altavar1}, and the proof is complete. 
\end{proof}

\end{document}